\documentclass[a4paper,reqno, 12pt]{amsart}

\usepackage[margin=3cm]{geometry}

\makeatletter
\let\old@setaddresses\@setaddresses
\def\@setaddresses{\bigskip\bgroup\parindent 0pt\let\scshape\relax\old@setaddresses\egroup}
\makeatother

\usepackage[utf8]{inputenc}
\usepackage[T1]{fontenc}

\usepackage{amsthm, amssymb, amsmath}

\usepackage{thmtools}
\usepackage{thm-restate}

\newtheorem{theorem}{Theorem}[section]
\newtheorem{lemma}[theorem]{Lemma}
\newtheorem{proposition}[theorem]{Proposition}
\newtheorem{remark}[theorem]{Remark}
\newtheorem{corollary}[theorem]{Corollary}
\newtheorem{observation}[theorem]{Observation}

\usepackage{float}
\usepackage{needspace}
\usepackage{graphicx}
\usepackage{xcolor}
\usepackage{subcaption}

\usepackage{hyperref}
\hypersetup{
  pdftitle = {Local certification of geometric graph classes},
  pdfauthor=  {Oscar Defrain, Louis Esperet, Aurélie Lagoutte, Pat Morin, Jean-Florent Raymond},
  colorlinks = true,
  linkcolor = black!30!red,
  citecolor = black!30!green
}
\usepackage[shortlabels]{enumitem}

\usepackage{marvosym}
\newcommand{\apx}[1]{\hyperref[#1]{\Rightscissors}}

\def\cqedsymbol{\ifmmode$\lrcorner$\else{\unskip\nobreak\hfil
\penalty50\hskip1em\null\nobreak\hfil$\lrcorner$
\parfillskip=0pt\finalhyphendemerits=0\endgraf}\fi}

\newcommand{\formerepsilon}{\kappa}

\title{Local certification of geometric graph classes}

\author[O.~Defrain]{Oscar Defrain}
\address[O.~Defrain]{Aix-Marseille Université, CNRS, LIS, Marseille, France}
\email{oscar.defrain@lis-lab.fr}

\author[L.~Esperet]{Louis Esperet}
\address[L.~Esperet]{Laboratoire G-SCOP (CNRS, Univ.\ Grenoble Alpes), Grenoble, France}
\email{louis.esperet@grenoble-inp.fr}

\author[A.~Lagoutte]{Aurélie Lagoutte}
\address[A.~Lagoutte]{Laboratoire G-SCOP (CNRS, Univ.\ Grenoble Alpes), Grenoble, France}
\email{aurelie.lagoutte@grenoble-inp.fr}

\author[P.~Morin]{Pat Morin}
\address[P.~Morin]{School of Computer Science, Carleton University,
  Canada}
\email{morin@scs.carleton.ca}

\author[J.-F.~Raymond]{Jean-Florent Raymond}
\address[J.-F.~Raymond]{CNRS, LIP, ENS de Lyon, France.}
\email{jean-florent.raymond@cnrs.fr}

\thanks{L.\ Esperet is partially supported by the French ANR Project TWIN-WIDTH
  (ANR-21-CE48-0014-01), and by LabEx
  PERSYVAL-lab (ANR-11-LABX-0025). P.\ Morin is partially supported by
  NSERC. A. Lagoutte and J.-F.~Raymond are partially supported by French ANR project GRALMECO (ANR-21-CE48-0004).}

\begin{document}

\begin{abstract}
  The goal of local certification is to locally convince the vertices of a
  graph $G$ that $G$ satisfies a given property. A prover assigns short
  certificates to the vertices of the graph, then the vertices are
  allowed to check their  certificates and the certificates of their neighbors, and based only
  on this local view and their own unique identifier, they must decide whether $G$ satisfies the given
  property. If the graph indeed satisfies the property, all vertices
  must accept the instance, and otherwise at least one vertex must
  reject the instance (for any possible assignment of
  certificates). The goal is to minimize the size of the certificates.
  
In this paper we study the local certification of  geometric and
topological graph
classes. While it is known that in $n$-vertex graphs, planarity can be certified locally
with certificates of size $O(\log n)$, we show that several closely
related graph classes require certificates of size $\Omega(n)$. This
includes penny graphs, unit-distance graphs, (induced) subgraphs of
the square grid, 1-planar graphs, and unit-square graphs. These bounds
are tight up to a constant factor (or logarithmic factor for unit-square graphs) and give the first known examples
of hereditary (and even monotone) graph classes
for which the certificates must have linear size. For
unit-disk graphs we obtain a lower bound of $\Omega(n^{1-\delta})$
for any $\delta>0$ on the size of the certificates, and an upper bound
of $O(n \log n)$.
The lower bounds are obtained by proving rigidity properties of the
considered graphs, which might be of independent interest.
\end{abstract}

\maketitle

\section{Introduction}

Local certification is an emerging subfield of distributed computing
where the goal is to assign short certificates to each of  the nodes of a
network (some connected graph $G$) such that the nodes can
collectively decide whether $G$ satisfies a given property (i.e., whether it
belongs to some given graph class $\mathcal{C}$) by only inspecting their unique identifier, 
their certificate and the certificates of their neighbors. This assignment of certificates is called a
\emph{proof labeling scheme}, and its \emph{complexity} is the
maximum number of bits of a certificate (as a function of the number of vertices
of $G$, which is usually denoted by $n$ in the paper). If a graph
class $\mathcal{C}$ admits a proof labeling scheme of complexity
$f(n)$, we say that $\mathcal{C}$ has \emph{local complexity}
$f(n)$. Proof labelling schemes are distributed analogues of
traditional non-deterministic algorithms, and graph classes of logarithmic local complexity can be considered as distributed analogues of
classes whose recognition is in \textsf{NP} \cite{Feu21}. The notion
of proof labeling scheme was formally introduced by Korman, Kutten
and Peleg in \cite{KKP10}, but originates in earlier work on
self-stabilizing algorithms (see again \cite{Feu21} for the history of
local certification and a thorough introduction to the field). While
every graph class has local complexity $O(n^2)$
\cite{KKP10},\footnote{Give to every vertex the adjacency matrix of
  the graph and the list of identifiers of all the vertices.} the work of \cite{lcp} identified three natural ranges of
local complexity for graph classes:

\begin{itemize}
\item $\Theta(1)$: this includes $k$-colorability for fixed $k$, and in
  particular bipartiteness;
\item $\Theta(\log n)$: this includes non-bipartiteness and
  acyclicity; and
  \item $\Theta(\text{poly}(n))$: this includes non-3-colorability
    and problems involving
    symmetry.
  \end{itemize}

  It was later proved in \cite{NPY20} that any graph class which can
  be recognized in linear time (by a centralized algorithm) has an ``interactive''
  proof labeling scheme of complexity $O(\log n)$, where
  ``interactive'' means that there are several
  rounds of interaction between the prover
  (the entity which assigns certificates) and the nodes of the network
  (see also \cite{KOS18} for more on distributed interactive protocols). A natural
  question is whether the interactions are necessary or whether such
  graph classes have classical proof labeling schemes of complexity
  $O(\log n)$ as defined above, that is, without multiple rounds of
  interaction.  This question triggered the work of \cite{planar} on planar
  graphs, which have a well-known linear time recognition algorithm. The
  authors of  \cite{planar}  proved that the class of planar graphs indeed has local
  complexity $O(\log n)$, and asked whether the same holds for any
  proper minor-closed class.\footnote{Note that it is easy to show that for any
    minor-closed class $\mathcal{C}$, the complement of $\mathcal{C}$ has local complexity $O(\log n)$, using Robertson and Seymour's Graph Minor Theorem \cite{RS04}.} This
  was later proved for graphs embeddable on any
  fixed surface in \cite{genus} (see also \cite{EL}) and in \cite{BFT} for classes excluding
  small minors, while it was proved in \cite{tw} that classes
  excluding a planar graph $H$ as a minor have local complexity $O(\log^2 n)$. The authors of \cite{tw} also proved the
  related result that any graph class of bounded treewidth which is expressible in second order
  monadic logic has local complexity $O(\log^2 n)$ (this implies in particular that
  for any fixed $k$, the class of graphs of treewidth at most $k$ has
  local complexity $O(\log^2 n)$). Similar meta-theorems involving
  graph classes expressible in some logic were proved
  for graphs of bounded treedepth in \cite{BFT2} and graphs of bounded
  cliquewidth in 
  \cite{fraigniaud2023distributed}.

  Closer to the topic of the present
paper, the authors of \cite{JMR23} obtained proof labeling schemes of
complexity $O(\log n)$ for a number of classes of geometric intersection graphs,
including interval graphs, chordal graphs, circular-arc graphs,
trapezoid graphs, and permutation graphs. It was noted earlier in
\cite{JMR22} (which proved various results on interactive proof
labeling schemes for geometric graph classes) that the ``only'' classes of
graphs for which large lower bounds on the local
complexity are known (for instance
non-3-colorability, some properties involving symmetry~\cite{lcp} or
the diameter~\cite{CPP20}) are
not \emph{hereditary}, meaning that they are not closed under taking
induced subgraphs. 
It turns out that an example of a hereditary class
with polynomial local
complexity had already been identified in \cite{K3F} a couple of years
earlier: triangle-free graphs (the lower bound on the local complexity
given there was sublinear). It was speculated in \cite{JMR22} that
any class of geometric intersection graphs
has
small  local complexity, as such classes are both hereditary and well-structured.

\subsection*{Results}

In this paper we
identify a key rigidity property  in graph classes and use it to
derive a number of \emph{linear} lower bounds on the local complexity of graph classes defined using geometric or topological
properties. These bounds are all best possible, up to $n^{\delta}$
factors, for any $\delta>0$. So 
for a number of classical hereditary
graph classes
studied in structural graph theory, topological graph theory,  and
graph drawing, we show that the local complexity is
$\Theta(n)$. These are the first non-trivial examples of hereditary
classes (some of our examples are even monotone) with linear local
complexity. Interestingly, all the classes we consider are very close
to the class of planar graphs (which is known to have local complexity
$\Theta(\log n)$ \cite{planar,EL}): most of
these classes are either subclasses or superclasses of planar
graphs. Given the earlier results on graphs of bounded
treewidth \cite{tw} and planar graphs, it is natural to try to understand
which sparse graph classes have (poly)logarithmic local complexity. It
would have been tempting to conjecture that any (monotone or
hereditary) graph class of \emph{bounded expansion} (in the sense of
Ne\v set\v ril and Ossona de Mendez \cite{NO})  has polylogarithmic
local complexity, but our results show that this is false, even for very
simple monotone classes of linear expansion.

\medskip

We first show that every class of graphs that contains
at most $2^{f(n)}$ unlabelled graphs of size $n$ has local complexity
$f(n)+O(\log n)$ \footnote{We use the standard assumption that the
  range of unique identifiers of the vertices is $\text{poly}(n)$, see
Section~\ref{sec:defpls} for more details.}. This implies all the upper
bounds we obtain in this paper, as the
classes of graphs we consider usually contain $2^{O(n)}$ or $2^{O(n
  \log n)}$ unlabelled graphs of size $n$.

\medskip

We then turn to lower bounds.
Using rigidity properties in the classes we consider, we give a
$\Omega(n)$ bound on the local
complexity of \emph{penny graphs} (contact
graphs of unit-disks in the plane), \emph{unit-distance graphs}
(graphs that admit an embedding in $\mathbb{R}^2$ where adjacent vertices
are exactly the vertices at Euclidean distance 1), and
(induced) subgraphs of the square grid.
We then consider \emph{1-planar graphs}, which are graphs
admitting a planar drawing in which each edge is crossed by at most
one edge. This superclass of
planar graphs shares many
similarities with them, but we nevertheless prove that it has local
complexity $\Theta(n)$ (while planar graphs have local complexity
$\Theta(\log n)$).

\medskip

Next, we consider \emph{unit-square graphs} (intersection
graphs of
unit-squares in the plane). We obtain a linear lower bound on
the local complexity of triangle-free
unit-square graphs (which are planar) and of unit-square graphs in
general. 
Finally, we consider \emph{unit-disk  graphs} (intersection of unit-disks in
the plane), which are widely used in distributed computing as a model
of wireless communication networks. For this class we reuse some key
ideas introduced in the unit-square case, but as unit-disk graphs are
much less rigid we need to introduce a number of new tools, which
might be of independent interest in the study of rigidity in
geometric graph classes. In particular we answer questions such as: what is
asymptotically the
minimum number of vertices in a unit-disk graph $G$ such that in any
unit-disk embedding of $G$, two given vertices $u$ and $v$ are at
Euclidean distance at least $n$ and at most $n+1$? Or at distance at
least $n$ and at most $n+\varepsilon$, for $\varepsilon\ll n$? Using our
constructions we obtain a lower bound of 
$\Omega(n^{1-\delta})$ (for every $\delta>0$) on the local complexity
of unit-disk graphs. As there are at most $2^{O(n\log n)}$ unlabelled
unit-disk graphs on $n$ vertices~\cite{MCDIARMID2014413}, our first result implies that the
local complexity of unit-disk graphs is $O(n \log n)$, so our results
are nearly tight.

\subsection*{Techniques} 

All our lower bounds are inspired by 
the set-disjointness problem in non-deterministic communication complexity. 
This approach was already used in earlier work in local certification, in
order to provide lower bounds on the local complexity of computing the
diameter \cite{CPP20}, or for certifying non-3-colorability \cite{lcp}. Here the main challenge is to translate the
technique into geometric constraints. 
The key point of the set-disjointness problem is informally the following:
 let $A,B\subseteq \{1,\ldots, N\}$ be the input of some kind of ``two-party system''
that must decide whether $A$ and $B$ are disjoint, given that one party knows
 $A$ and the other knows $B$; then at least $N$ bits of shared (or
exchanged) information are necessary for them to decide correctly. Otherwise,
there are fewer bit combinations than the $2^N$ entries of the form $(A, \overline{A})$, hence the two parties can be fooled to accept a negative
instance built from two particular positive instances sharing the same bit combination.
In the setting of (non-deterministic) communication complexity, the two parties are Alice and Bob; 
in our setting, the two parties will be two subsets of vertices
covering the graph and with small intersection (the intersection must be a small cutset of the whole graph): in the following, we refer to those two
connected subsets of vertices as respectively the ``left'' part and the ``right'' part. 
The ``shared'' bits of information will be the certificates given to their intersection (and to its neighborhood).
To express the sets $A,B$ and their disjointness, the left (resp.~right) part of the graph will be equipped with a path $P_A$ (resp.~$P_B$) of length $\Omega(N)$,
such that $P_A$ and $P_B$ only intersect in their
endpoints.\footnote{We note here that the proof for
  1-planar graphs diverges from this approach, but it is the only
  one.} The crucial rigidity property which we will require is that
in 
any embedding of $G$ as a geometric graph from some class
$\mathcal{C}$, the two paths $P_A$ and $P_B$ will be very close, in
the sense that if $P_A=a_1,\ldots,a_\ell$ and $P_B=b_1,\ldots,b_\ell$,
then $a_i$ is close to $b_i$ for any $1\le i \le \ell$. 
Using this property, we will attach some gadgets 
to the vertices of the path $P_A$ (resp.~$P_B$) depending on $A$ (resp.~$B$),
in such a way that the
resulting graph lies in the class $\mathcal{C}$ if and only if $A$ and $B$
are disjoint. As there is little connectivity 
between the left and the right part,
the endpoints of the paths will have to contain very long
certificates in order to decide whether $A$ and $B$ are disjoint,
hence whether $G\in  \mathcal{C}$ or not.

We present the results in increasing order of difficulty. Subgraphs or
induced subgraphs of infinite graphs such as grids are perfectly rigid
in some sense, with some graphs having unique embeddings up to
symmetry. Unit-square graphs are much less rigid but we can use nice
properties of the $\ell_\infty$-distance and the uniqueness of
embeddings of 3-connected planar graphs. We conclude with unit-disk graphs, which is the least
rigid class we consider. The Euclidean distance misses most of the
properties enjoyed by the $\ell_\infty$-distance and we must work much
harder to obtain the desired rigidity property. 

\subsection*{Outline} We start with some preliminaries on graph classes
and local certification in Section~\ref{sec:prel}. We prove our
general upper bound result in
Section~\ref{sec:linear}. 
Section~\ref{sec:disjcla} introduces the
notion of a \emph{disjointness-expressing} class of graphs,
highlighting the key properties needed 
to derive our local certification lower bounds.
We deduce in Section~\ref{sec:lb} linear
lower bounds on the local complexity of subgraphs of the grid, penny
graphs, and 1-planar graphs. Section~\ref{sec:usg} is devoted for the
linear lower bound on the local complexity of unit-square graphs,
while Section~\ref{sec:udg} contains the proof of our main result, a
quasi-linear lower bound on the local complexity of unit-disk
graphs. We conclude in Section~\ref{sec:ccl} with a number of questions
and open problems.

\section{Preliminaries}\label{sec:prel}

In this paper logarithms are binary, and graphs are assumed to be simple, loopless,
undirected, and connected.
The \emph{length} of a path $P$, denoted by $|P|$, is the number of
edges of $P$.
The \emph{distance} between two vertices $u$ and $v$ in a graph $G$,
denoted by $d_G(u,v)$  is the minimum
length of a path between $u$ and $v$. The \emph{neighborhood} of a vertex $v$
in a graph $G$, denoted by $N_G(v)$ (or $N(v)$ is $G$ if clear from the
context),  is the set of vertices at distance exactly 1 from $v$. The
\emph{closed neighborhood} of $v$, denoted by  $N_G[v]:=\{v\}\cup N_G(v)$,
is the set of vertices at distance at most 1 from $v$. For a set $S$
of vertices of $G$, we define $N_G[S]:=\bigcup_{v\in S}N_G[v]$.

\subsection{Local certification}\label{sec:defpls}

The vertices of any $n$-vertex
graph $G$ are assumed to be assigned distinct (but otherwise arbitrary)
identifiers $(\text{id}(v))_{v\in V(G)}$ from the set 
$\{1,\ldots,\text{poly}(n)\}$. When we refer to a subgraph $H$ of a graph $G$, we
implicitly refer to the corresponding labelled subgraph of $G$.
Note that the identifiers of each of the vertices of $G$ can be stored
using $O(\log n)$ bits.
We follow the
terminology introduced by G\"o\"os and Suomela~\cite{lcp}.

\subsection*{Proofs and provers} A \emph{proof} for a graph $G$ is a function
$P:V(G)\to \{0,1\}^*$ (as $G$ is a  labelled graph,
the proof $P$ is allowed to depend on the identifiers of the
vertices of $G$). The binary words $P(v)$ are called \emph{certificates}. The \emph{size} of $P$ is the maximum size of a
certificate $P(v)$, for $v\in V(G)$.
A \emph{prover} for a graph class $\mathcal{G}$ is a function that
maps every $G\in \mathcal{G}$ to a proof for~$G$.

\subsection*{Local verifiers} A \emph{verifier} $\mathcal{A}$ is a function that takes a
graph $G$, a proof $P$
for $G$, and a vertex $v\in V(G)$ as inputs, and outputs an element of
$\{0,1\}$. We say that $v$ \emph{accepts} the instance if
$\mathcal{A}(G,P,v)=1$ and that $v$ \emph{rejects} the instance if
$\mathcal{A}(G,P,v)=0$.

Consider a graph $G$, a proof $P$ for $G$, and a
vertex $v\in V(G)$.
We denote by $G[v]$ the
subgraph of
$G$ induced by $N[v]$, the closed neighborhood of $v$, and similarly we denote by $P[v]$ the restriction of $P$ to
$N[v]$.

A verifier $\mathcal{A}$ is \emph{local} if for any $v\in G$, the
output of $v$ only depends on its identifier and $P[v]$.

\smallskip

\noindent \emph{Remark.} Note that our lower bounds hold in the stronger model of \emph{locally
checkable proofs} of G\"o\"os and Suomela~\cite{lcp}, where in
addition the output
of $v$ is allowed to depend on $G[v]$, that is 
$\mathcal{A}(G,P,v)=\mathcal{A}(G[v],P[v],v)$
for any vertex $v$ of $G$.

\subsection*{Proof labeling schemes}

A \emph{proof labeling scheme}
for a graph class $\mathcal{G}$ is a prover-verifier pair
$(\mathcal{P},\mathcal{A})$ where $\mathcal{A}$ is local, with the following properties.

\medskip

\noindent {\bf Completeness:} If $G\in \mathcal{G}$, then
$P:=\mathcal{P}(G)$ is a proof for $G$ such that for any vertex $v\in
V(G)$, $\mathcal{A}(G,P,v)=1$.

\medskip

\noindent {\bf Soundness:}  If $G\not\in \mathcal{G}$, then for every proof
$P'$ for $G$, there exists a vertex $v\in
V(G)$ such
that  $\mathcal{A}(G,P',v)=0$.

\medskip

In other words, upon looking at its closed neighborhood (labelled by
the identifiers and certificates), the local verifier of each vertex
of a graph $G\in \mathcal{G}$ accepts the instance, while if $G\not\in
\mathcal{G}$, for every possible choice of certificates, the local verifier of at least one vertex rejects the instance.

\medskip

The \emph{complexity} of the proof labeling scheme is the maximum size of a
proof $P=\mathcal{P}(G)$ for an $n$-vertex graph $G\in\mathcal{G}$,
and the \emph{local complexity} of $\mathcal{G}$ is the minimum
complexity of a proof labeling scheme for $\mathcal{G}$. If we say that the complexity is $O(f(n))$, for some
function $f$, the $O(\cdot)$ notation refers to $n\to \infty$. See
\cite{Feu21,lcp} for more details on proof labeling schemes and local
certification in general.

\subsection{Geometric graph classes}\label{sec:gc}

In this section we collect some useful properties that are shared by
most of the graph classes we will investigate in the paper.

\medskip

A \emph{unit-disk graph} (respectively \emph{unit-square graph}) is the intersection graph of unit-disks (respectively unit-squares) in the plane. That is, $G$ is a unit-disk graph if every vertex of $G$ can be mapped to a unit-disk in the plane so that two vertices are adjacent if and only if the corresponding disks intersect, and similarly for squares.
A \emph{penny graph} is the contact graph of unit-disks in the plane,
i.e., in the definition of unit-disk graphs above we additionally
require the disks to be pairwise interior-disjoint. A
\emph{unit-distance} graph is a graph whose vertices are points in the
plane, where two points are adjacent if and only if their  Euclidean
distance is equal to 1. Unit-distance graphs clearly form a superclass of penny graphs.

\medskip

A \emph{drawing} of a graph $G$ in the plane is a mapping from the
vertices of $G$ to distinct points in the plane and from the edges of
$G $ to
Jordan curves, such that for each edge $uv$ in $G$, the curve
associated to $uv$ joins the images of $u$ and $v$ and does not
contain the image of any other vertex of $G$. A graph is \emph{planar}
if it has a drawing in the plane with no edge crossings (such a
drawing will also be called a \emph{planar graph drawing} in the
remainder). Every planar graph drawing of a graph $G$ gives a clockwise cyclic
ordering of the neighbors around each vertex of $G$. We say
that two planar graph drawings of $G$ are \emph{equivalent}
if the corresponding cyclic orderings are the same. A \emph{planar
  graph embedding} of a graph $G$ is an equivalence class of planar
graph drawings of $G$. Given a planar graph embedding of a graph $G$, all the
corresponding (equivalent) planar drawings of $G$ have the same set of
faces (but different choices of outerface yield different
planar drawings).   

\medskip

A graph is \emph{1-planar} if it has a drawing in the plane 
such that
for each edge $e$ of $G$, there is at most one edge $e'$ of $G$ distinct from $e$ such that the interior of
the curve associated to $e$ intersects the interior of
the curve associated to $e'$.

\medskip

\begin{figure}[htb]
  \centering
  \includegraphics[scale=0.85]{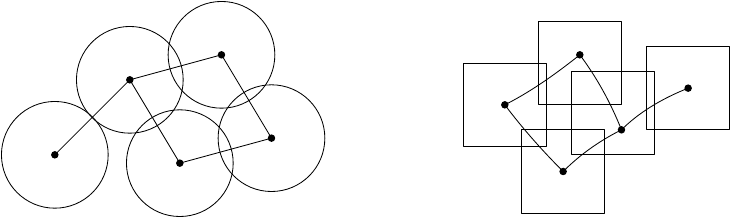}
  \caption{Triangle-free intersection graphs of unit-disks and
    unit-squares in the plane,
    and the associated planar graph embeddings.}
  \label{fig:inter}
\end{figure}

The following well-known proposition will be useful (see Figure~\ref{fig:inter} for an illustration). 

\begin{proposition}\label{pro:trifree}
Consider a family of unit-disks or a family of unit-squares in the
plane, and assume that the intersection graph $G$ of the family is
connected and 
triangle-free. Then $G$  is planar, and moreover each representation of $G$ as such
an intersection graph of unit-disks or unit-squares in the plane  gives rise to a planar graph
embedding of $G$ in
a natural way (see for instance Figure~\ref{fig:inter}). Furthermore, the
 representation of $G$ as an intersection graph (of unit-disks or unit-squares) and the resulting planar graph embedding are equivalent, in the sense that the clockwise cyclic ordering of the neighbors around
 each vertex is the same.
\end{proposition}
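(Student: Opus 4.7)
The plan is to realize the natural planar embedding explicitly, by drawing each vertex $v$ at the centre $c_v$ of its disk (or square) $D_v$ and each edge $uv$ as the straight segment $c_uc_v$, and then to check (i)~that this is a proper planar drawing, and (ii)~that the resulting rotation system coincides with the one read off the geometric representation.

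For (i), the starting point is the identity $\|c_u-p\|+\|p-c_v\|=\|c_u-c_v\|$, valid for any norm whenever $p$ lies on the segment $c_uc_v$ (parametrize $p=(1-t)c_u+tc_v$). This immediately implies that the segment is contained in $D_u\cup D_v$, and moreover that no third centre $c_w$ can lie on it: such a $c_w$ would satisfy $\|c_u-c_w\|\le R$ and $\|c_w-c_v\|\le R$ (where $R$ is the adjacency threshold, $R=2$ for unit-disks in $\ell_2$ and $R=1$ for unit-squares in $\ell_\infty$), making $c_u,c_v,c_w$ a triangle of $G$ and contradicting triangle-freeness.

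The main step is to rule out crossings. Assume two edges $c_1c_2$ and $c_3c_4$, with disjoint endpoints, cross at a point $p$. The same identity applied to both segments, combined with two triangle inequalities through $p$, gives the Ptolemy-type bounds
\[
\|c_1-c_3\|+\|c_2-c_4\|\le\|c_1-c_2\|+\|c_3-c_4\|\le 2R,
\]
\[
\|c_1-c_4\|+\|c_2-c_3\|\le\|c_1-c_2\|+\|c_3-c_4\|\le 2R.
\]
Each line forces at least one of its two summands to be at most $R$, i.e.\ to produce a new edge of $G$. A short four-case analysis shows that whichever pair of new edges is produced, combined with $c_1c_2$ or $c_3c_4$ it always closes a triangle (for example, $c_1c_3$ and $c_1c_4$ complete the triangle $c_1c_3c_4$ via $c_3c_4$; the three other combinations are symmetric), contradicting triangle-freeness.

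For (ii), the clockwise order of the neighbours of $v$ in the straight-line drawing is the angular order of the vectors $c_u-c_v$ around $c_v$. On the geometric side, the footprint $D_u\cap\partial D_v$ of each neighbour $u$ lies on the part of $\partial D_v$ facing $c_u$, and these footprints are pairwise disjoint by triangle-freeness (otherwise three disks/squares would share a common point, again producing a triangle), so their cyclic order around $\partial D_v$ is well-defined and matches the angular order of the $c_u-c_v$. The main obstacle in the whole proof is really the Ptolemy-type inequality above, which depends crucially on $p$ lying on \emph{straight} segments; once it is available, both the triangle extraction and the matching of cyclic orders are short geometric checks, the only minor subtlety being that for unit-squares a footprint may straddle a corner of $\partial D_v$, which however does not affect the well-definedness of the rotation.
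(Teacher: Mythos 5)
Your proof is correct, but it takes a genuinely different route from the paper. The paper never draws edges as straight segments: it picks a representative point in each region $R_i$ (not covered by any other region), a contact point $r_{ij}\in R_i\cap R_j$ for each edge, and joins them by internally disjoint Jordan arcs inside each region, so that planarity and the rotation system come directly from the geometry of the regions (the cyclic order of neighbors is read off as the clockwise order of the intersections along $\partial R_i$); it also dispatches first the degenerate case of two identical regions, where $G$ is a single edge. You instead place each vertex at its centre, draw straight segments, and rule out crossings by the metric identity $\|c_u-p\|+\|p-c_v\|=\|c_u-c_v\|$ plus the Ptolemy-type inequality, extracting a triangle in every case; this works uniformly for $\ell_2$ and $\ell_\infty$ and yields the stronger conclusion that the natural embedding is realized by a straight-line drawing, at the cost of having to argue separately that the rotation at $c_v$ matches the boundary order of the footprints. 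Two small points you should make explicit to be fully rigorous: (1) the degenerate case of coincident centres (identical disks/squares), where your drawing would collapse two vertices to one point — as in the paper, connectivity and triangle-freeness force $G=K_2$ there; and relatedly, two edges sharing an endpoint can only meet again if collinear, which your ``no third centre on a segment'' argument already excludes; (2) in part (ii), ``facing $c_u$'' is best replaced by the precise fact that the point where the ray from $c_v$ through $c_u$ exits $D_v$ lies in $D_u$ (a one-line check for both norms), since then the disjoint footprints inherit the cyclic order of these exit points, which is the angular order of the vectors $c_u-c_v$.
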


\begin{proof}
  Let $R_1,\ldots,R_n$ be the family of unit-disks or the family of unit-squares.
Since $G$ is triangle-free and connected we can assume that the
$R_i$'s are pairwise distinct (since otherwise $G$ consists of a
single edge, and the statement certainly holds). Since $G$ is not a
single edge and is triangle-free, we can choose
one point $v_i$ in each region $R_i$ that is not included in any
region $R_j$ for $j\ne i$, and for any two
intersecting regions $R_i$ and $R_j$, choose a point $r_{ij}=r_{ji}$ in $R_i \cap
R_j$. Inside each region $R_i$, join $r_i$ to all the points
$r_{ij}$ by internally disjoint Jordan arcs. We obtain a planar
drawing of $G$, and all planar drawings that we can obtain in this way
(starting with  the family $R_1, \ldots, R_n$) are equivalent: the clockwise cyclic ordering of the neighbors around
each vertex is the same  in all such drawings (it corresponds to
the clockwise ordering of the intersections with other regions along
the boundary of a given region).
\end{proof}

We will often need to argue that some planar graphs have unique
planar embeddings.
We say that a graph is \emph{3-connected} if it is connected and remains so after the deletion of any two vertices.
The following classical result of Whitney will be crucial.

\begin{theorem}[\cite{Whi32}]\label{thm:whi}
If a planar graph $G$ is 3-connected (or can be
obtained from a 3-connected simple graph by subdividing some edges),
then it has a unique planar graph embedding, up to the reversal of all
cyclic orderings of neighbors around the vertices.
\end{theorem}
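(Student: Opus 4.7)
The plan is to prove the 3-connected case first and then reduce the subdivision case to it. For the 3-connected case, my strategy is the classical one: characterize the face boundaries in a purely combinatorial way, so that they are forced to be the same in every planar drawing. Specifically, I would call a cycle $C$ of $G$ \emph{peripheral} if $C$ is induced (chordless) and $G \setminus V(C)$ is connected; note that this definition is intrinsic to $G$ and does not refer to any embedding.

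The key step, and the main obstacle, is to show: in any planar drawing of a 3-connected planar graph $G$, the face boundaries are exactly the peripheral cycles. The ``face boundaries are peripheral'' direction is not too bad, using 3-connectivity to rule out chords (a chord of a facial cycle would yield a vertex cut of size~$2$) and to argue connectedness of the complement (via the Jordan curve theorem applied to the facial cycle). The converse, that every peripheral cycle bounds a face, is more delicate and is where I would invest most of the work; the standard approach is to pick a planar drawing, suppose for contradiction that a peripheral cycle $C$ is non-facial, and use 3-connectivity to exhibit two internally disjoint $C$-paths on opposite sides of $C$, contradicting either that $C$ is chordless or that $G \setminus V(C)$ is connected.

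Once this combinatorial characterization is in hand, the uniqueness of the embedding follows quickly. Each edge $uv$ lies on exactly two facial cycles, and these two facial cycles determine the two neighbors of $uv$ in the cyclic order around $u$ (namely, the neighbor on each facial cycle). Iterating, the cyclic order of neighbors around every vertex is determined by the set of faces, and therefore by the set of peripheral cycles, up to a global choice of orientation — which corresponds precisely to simultaneously reversing all cyclic orderings.

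For the subdivision extension, I would argue by reduction: let $G$ be obtained from a $3$-connected simple graph $H$ by subdividing edges, and suppress each degree-$2$ subdivision vertex in turn. Any planar embedding of $G$ induces a planar embedding of $H$ (the cyclic orders at branch vertices are preserved), and conversely any planar embedding of $H$ lifts uniquely to one of $G$ (since a subdivision vertex of degree~$2$ has a unique cyclic order, namely its two neighbors). Hence embeddings of $G$ correspond bijectively to embeddings of $H$, and the uniqueness-up-to-reflection for $H$ transfers to $G$.
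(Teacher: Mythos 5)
The paper does not prove this statement at all: it is Whitney's classical theorem, quoted with the citation \cite{Whi32} and used as a black box, so there is no in-paper argument to compare yours against. Your plan is the standard modern route (face boundaries of a 3-connected plane graph are exactly the non-separating induced cycles, as in Tutte's work and in Diestel's textbook), and the final two steps — recovering the rotation system from the set of facial cycles up to a global reflection, and the bijection between embeddings of $G$ and of the underlying 3-connected graph $H$ obtained by suppressing degree-2 vertices — are correct and complete as sketched.

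The one substantive issue is in the middle, where you have the difficulty of the two directions of the characterization backwards, and your sketch for the direction you call ``delicate'' does not match what that direction needs. That every peripheral cycle $C$ bounds a face is in fact the easy direction and uses no 3-connectivity: since $G \setminus V(C)$ is connected, the Jordan curve of $C$ forces all vertices of $G \setminus V(C)$ onto one side, and the other side can then contain only chords of $C$, which inducedness forbids; so that side is a face. No construction of ``two internally disjoint $C$-paths on opposite sides'' is needed (and such paths need not exist). Conversely, the direction you call ``not too bad'' — every face boundary is peripheral — is where the 3-connectivity work lives. Chordlessness is as you say (a chord $uv$ plus an arc through the face gives a closed curve meeting $G$ only in $\{u,v\}$, yielding a 2-cut), but connectedness of $G \setminus V(C)$ does \emph{not} follow from the Jordan curve theorem applied to $C$ alone: you must argue that if $G \setminus V(C)$ had two components $H_1,H_2$ (both drawn on the non-face side of $C$), then, e.g., $H_2$ lies inside a face of the plane subgraph formed by $C$ together with $H_1$ and its attaching edges, whose boundary meets $C$ in at most two vertices, giving a cut of size at most 2 and contradicting 3-connectivity (each $H_i$ having at least 3 attachments on $C$). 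If you redistribute the effort accordingly and carry out that argument, the proof goes through; as written, the central lemma is still only a plan, with its hardest ingredient pointed at the wrong half.
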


We note that the reversal of all cyclic orderings in the statement of
Theorem~\ref{thm:whi} corresponds to a
reflection of the corresponding planar drawings. 

\section{Linear upper bounds for tiny classes}\label{sec:linear}

Given a class of graphs $\mathcal{C}$ and a positive integer $n$, let
$\mathcal{C}_n$ be the set of all unlabelled graphs of $\mathcal{C}$
having exactly $n$ vertices (i.e., we consider graphs up to
isomorphism).

\medskip

If there is a constant $c>0$ such that for every positive integer $n$,
$|\mathcal{C}_n|\le c^n$, then the class $\mathcal{C}$ is said to be
\emph{tiny}. This is the case for all
proper minor-closed classes (for instance planar graphs), and more
generally any class of bounded twin-width (for instance 1-planar graphs) according to \cite{bonnet2021twin}. It is also easy to show
that  for any finitely generated group $\Gamma$ and any finite
set of generators $S$, the class of finite  subgraphs of
$\mathrm{Cay}(\Gamma, S)$ is tiny (this is proved in \cite{bonnet2021twin} for induced subgraphs, but the result for subgraphs follows
immediately as these graphs have $O(n)$ edges).
On the other hand, unit-interval graphs and unit-disk graphs do not form tiny classes as proved in~\cite{MCDIARMID2014413}.

\begin{theorem}
  Any class $\mathcal{C}$ of connected graphs has local complexity at most
$\log(|\mathcal{C}_n|)+O(\log n)$. In particular if  $\mathcal{C}$ is
a tiny class, then the local complexity is $O(n)$.
\end{theorem}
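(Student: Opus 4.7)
The plan is to encode an isomorphism from $G$ to a canonical representative of its isomorphism class in $\mathcal{C}_n$, and have each vertex verify locally that this map is an isomorphism on its closed neighborhood. First I would fix, once and for all, a canonical enumeration of $\mathcal{C}_n$, so that each isomorphism class corresponds to a distinguished labelled graph on vertex set $\{1,\dots,n\}$ identified by some integer in $\{1,\dots,|\mathcal{C}_n|\}$. Given $G \in \mathcal{C}$, the prover would pick the $H\in\mathcal{C}_n$ isomorphic to $G$ together with an isomorphism $\phi\colon V(G)\to V(H)$, and assign to each vertex $v$ a certificate consisting of three pieces: the index of $H$ (using $\lceil\log|\mathcal{C}_n|\rceil$ bits), the value $\phi(v)$ (using $O(\log n)$ bits), and a standard $O(\log n)$-bit certificate of the value $n$, e.g.\ a BFS tree rooted at the vertex of smallest identifier together with per-vertex distance and subtree-size labels.

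Each vertex $v$ would then perform three local checks: (a) all its neighbors carry the same index for $H$ and the same value of $n$; (b) the BFS/size certificate is consistent at $v$; and (c) reading $\phi(v)$ and the values $\phi(u)$ for $u\in N_G(v)$, the restriction of $\phi$ to $N_G[v]$ is a bijection onto $N_H[\phi(v)]$ and preserves adjacency. Check (c) is purely local: since $v$ knows the full graph $H$ from the shared index, it can list $N_H[\phi(v)]$ and compare it with the multiset $\{\phi(u) : u\in N_G[v]\}$, verifying both injectivity and the edge-preservation. Completeness is immediate, since if $G\in\mathcal{C}$ the prover can simply supply such an $H$ and $\phi$ and all checks pass.

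The only non-routine step is the soundness argument. Suppose every vertex accepts. Check (b) forces $|V(G)|=n$; since $G$ is connected, check (a) forces all vertices to agree on a single $H\in\mathcal{C}_n$, which is itself connected because $\mathcal{C}$ consists of connected graphs; and check (c) at every vertex tells us that $\phi\colon V(G)\to V(H)$ is a local isomorphism of graphs, i.e.\ a graph covering map. A covering map between connected graphs has fibres of constant cardinality, and since $|V(G)|=|V(H)|=n$ this common cardinality must be $1$, so $\phi$ is a bijection, and hence a genuine graph isomorphism; therefore $G\cong H\in\mathcal{C}$, as required. The main obstacle I expect is stating the covering-map step cleanly, since it is the one place where a global conclusion is drawn from purely local data; once this is in place the certificate size is $\log|\mathcal{C}_n|+O(\log n)$ by construction, and specialising to tiny classes, where $\log|\mathcal{C}_n|=O(n)$ by definition, yields local complexity $O(n)$.
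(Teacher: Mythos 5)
Your proposal is correct and follows essentially the same route as the paper: certify $n$ via a spanning/BFS tree, broadcast an encoding of an isomorphic graph $H\in\mathcal{C}_n$ together with the image $\phi(v)$ of each vertex, check locally that $\phi$ is a local isomorphism, and conclude globally by a counting argument. Your covering-map phrasing (constant fibre cardinality over a connected base, forced to be $1$ since $|V(G)|=|V(H)|=n$) is just a repackaging of the paper's argument that connectivity of $H$ gives surjectivity and equal cardinalities then give injectivity, so there is no substantive difference.
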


\begin{proof}
Let $G\in \mathcal{C}_n$. The certificate given by the prover to
each vertex $v$ of $G$ contain the following:

\begin{itemize}
\item the number of vertices $n$;
\item a $\log(|\mathcal{C}_n|)$-bit word $w$ representing a graph $G'$ isomorphic to $G$;
\item the name of the vertex $\pi(v)$ corresponding to $v$ in $G'$.
\end{itemize}
In addition to this, the vertices of $G$ store a locally certified spanning
tree $T$, rooted in some vertex $r\in G$, which can be done with
$O(\log n)$ additional bits per vertex (this also encodes the
parent-child relation in the tree $T$, so that each vertex knows that
it is the root $r$ or knows the identifier of its parent in the
tree). We also give to each vertex $v$ the number of vertices in its
rooted subtree $T_v$.
In total, the certificates
above take $\log(|\mathcal{C}_n|)+O(\log n)$ bits, as desired.

\medskip

We now describe the verifier part. Each
vertex checks that it has been given the same value of $n$ and the same
word $w$ describing some graph $G'\in \mathcal{C}$ as its neighbors. The spanning tree $T$ is
then used to compute the number of vertices of $G$ and the root $r$
checks that this number coincides with $n$ and the number of vertices
of $G'$ (this is standard: each
vertex $v$ checks that the number of vertices in its rooted subtree
$T_v$, which was given as a certificate, is equal to the sum of the
number of vertices in the rooted subtrees of its children in $T$, plus
1).
Then each vertex $v$
verifies that $\pi$ is a local isomorphism from $G$ to $G'$, that is,
$\pi$ maps bijectively the neighborhood of $v$ in $G$ to the neighborhood of
$\pi(v)$ in $G'$.

\medskip

We now analyze the scheme. If $G\in \mathcal{C}$, then clearly all the
vertices accept. Assume now that all the vertices accept. Then $\pi$
is a local isomorphism from $G$ to some graph $G'\in \mathcal{C}$,
with the same number of vertices as $G$. As $G'$ is connected, $\pi$
is surjective, but as $G$
and $G'$ have the same number of vertices, $\pi$ must also be
injective. Thus $\pi$ is a bijection and $G$ and $G'$ are isomorphic,
which implies that $G\in \mathcal{C}$.
\end{proof}

As a consequence, we immediately obtain the following.

\begin{corollary}\label{cor:linear}
  The following classes have local complexity $O(n)$:
  \begin{itemize}
    \item the class of all (induced) subgraphs of the square grid,
      \item any class of bounded twin-width,
    \item penny graphs,
    \item 1-planar graphs,
    \item triangle-free unit-square graphs, and
      \item triangle-free unit-disk graphs.
  \end{itemize}
\end{corollary}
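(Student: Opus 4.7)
The plan is to apply the preceding theorem: since it gives local complexity $\log(|\mathcal{C}_n|)+O(\log n)$, and this is $O(n)$ whenever $\mathcal{C}$ is tiny, the entire corollary reduces to verifying that each of the six listed classes is tiny (i.e., contains at most $c^n$ unlabelled $n$-vertex graphs for some constant $c$). Since tininess is preserved under taking subclasses (the count of unlabelled graphs can only decrease), my strategy is, for each item, either to invoke an already cited tininess result or to exhibit a tiny superclass.

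Two of the items are immediate from the discussion preceding the theorem. The class of (induced) subgraphs of the square grid is the class of (induced) subgraphs of the Cayley graph of $\mathbb{Z}^2$ with the standard generating set, so tininess is exactly the statement cited from \cite{bonnet2021twin}. Any class of bounded twin-width is explicitly noted to be tiny, handling the second bullet directly.

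For the four remaining classes, the plan is to bound them inside a tiny superclass. Penny graphs are planar: given a system of pairwise interior-disjoint unit-disks, joining the centers of touching disks by straight segments yields a planar drawing, and planar graphs are tiny since they are proper minor-closed. The class of 1-planar graphs is already cited in the text as having bounded twin-width, hence is tiny. Triangle-free unit-square graphs and triangle-free unit-disk graphs are planar by Proposition~\ref{pro:trifree}, so they are again subclasses of a tiny class. There is no genuine obstacle here; all the work is done by the theorem, and this corollary is a matter of matching each listed class to a tininess certificate that the paper has already recorded.
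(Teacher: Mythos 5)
Your proof is correct and takes essentially the same route as the paper: both reduce the corollary to the preceding theorem by certifying each class as (a subclass of) a tiny class, using the grid/Cayley-graph and bounded-twin-width facts from \cite{bonnet2021twin} and planarity (via Proposition~\ref{pro:trifree} for the triangle-free intersection classes). The only cosmetic difference is that you certify the planar subclasses as tiny via proper minor-closedness, while the paper channels them through bounded twin-width; both certificates are already recorded in the paper just before the theorem.
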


\begin{proof}
The fact that the classes in the first two items are  tiny is proved
in \cite{bonnet2021twin}. All the other classes in the statement have bounded twin-width
(the classes in the final two items are planar, by Proposition~\ref{pro:trifree}).
\end{proof}

The next result directly follows from a bound of order $2^{O(n\log n)}$ on the
number of unit-square graphs and unit-disk graphs
\cite{MCDIARMID2014413}, and on the number of unit-distance graphs \cite{alon2014two}.

\begin{corollary}\label{cor:nlogn}
  The classes of unit-distance graphs, unit-square graphs, and unit-disk graphs have local complexity $O(n\log n)$.
\end{corollary}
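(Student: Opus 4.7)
The plan is to apply the general upper bound from the previous theorem (which asserts that any connected graph class $\mathcal{C}$ has local complexity at most $\log(|\mathcal{C}_n|) + O(\log n)$) to each of the three classes, using the cited counting results to control $|\mathcal{C}_n|$.

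First I would verify that each of the three classes is closed under connected components or, more precisely, that applying the theorem is legitimate. The local certification setup in the paper already assumes the input graph is connected, and each of unit-distance, unit-square, and unit-disk graphs is clearly a class of graphs that contains every connected graph from the corresponding family. So the theorem applies directly.

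Next, I would invoke the cited bounds: by \cite{MCDIARMID2014413} the number of labelled unit-disk graphs on $n$ vertices, and similarly the number of labelled unit-square graphs, is $2^{O(n\log n)}$; dividing by $n!$ still yields $|\mathcal{C}_n| \le 2^{O(n\log n)}$ for the number of unlabelled such graphs. For unit-distance graphs, the analogous bound $2^{O(n\log n)}$ is provided by \cite{alon2014two}. Taking the binary logarithm of these bounds gives $\log(|\mathcal{C}_n|) = O(n\log n)$ in all three cases.

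Finally, plugging these estimates into the previous theorem yields local complexity at most $O(n\log n) + O(\log n) = O(n\log n)$ for each of the three classes, which is the desired conclusion. There is no real obstacle here: the proof is a direct composition of the general upper bound with known counting results from the literature, and the main work has already been done in establishing those enumeration bounds and in proving the preceding theorem.
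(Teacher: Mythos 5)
Your proposal is correct and matches the paper's proof: the corollary is obtained exactly by combining the general upper bound $\log(|\mathcal{C}_n|)+O(\log n)$ with the $2^{O(n\log n)}$ enumeration bounds from \cite{MCDIARMID2014413} and \cite{alon2014two}. The only difference is cosmetic — the paper states this in one line without spelling out the labelled-to-unlabelled reduction or the connectedness check, both of which you handle correctly.
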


The remainder of the paper consists in proving lower bounds of order
$\Omega(n)$ (or $\Omega(n^{1-\delta})$, for any $\delta>0$), for all
the classes mentioned in Corollaries~\ref{cor:linear} and
\ref{cor:nlogn}, except the second and last items of Corollary~\ref{cor:linear}.

\section{Disjointness-expressing graph classes}
\label{sec:disjcla}

In this section we describe the framework relating the
disjointness problem to
proof labeling schemes. 
Our main source of inspiration is \cite{lcp},  where a lower bound
on the local complexity of non-3-colorability is proved using a
similar approach, and \cite{CPP20} where an explicit reduction to the
non-deterministic communication complexity of the disjointness
problem is used. 

A class $\mathcal{C}$ of graphs is said to be
$(s, \formerepsilon)$-\emph{disjointness-expressing}
if for some constant ${\alpha>0}$, for every positive integer
$N$ and every $X \subseteq \{1,\ldots, N\}$, one can define graphs
$L(X)$ (referred to as the ``left part'') and $R(X)$ (``right part''), each containing a labelled
set $S$ of special vertices 
such that for every $A,B\subseteq \{1,\ldots, N\}$ the following holds:

\begin{enumerate}[label=(\roman*)]
\item\label{item:total-size} the graph $g(L(A),R(B))$ obtained by identifying vertices of $S$
  in $L(A)$ to the corresponding vertices of $S$ in $R(B)$ is
  connected and has at most $\alpha N^{1/\formerepsilon}$ vertices;
\item\label{item:indep-size} the subgraph of $g(L(A), R(B))$ induced by 
  $N_{g(L(A), R(B))}[S]$ %
  is independent\footnote{for all
    $A,A',B,B'\subseteq \{1, \ldots, N\}$ there is an isomorphism from
    $g(L(A),R(B))[N_{g(L(A), R(B))}[S]]$ to
    $g(L(A'),R(B'))[N_{g(L(A'), R(B'))}[S]]$ that is the identity on $S$.} of
  the choice of $A$ and $B$ and has at most $s$ vertices; and
\item\label{item:no-intersection} $g(L(A),R(B))$ belongs to $\mathcal{C}$ if and only if $A\cap
  B=\emptyset$.
\end{enumerate}

The idea is that $S$ is a small cutset between vertices of $L(A)$, having information on $A$, and vertices of $R(B)$, having information on $B$. Deciding whether the graph belongs to $\mathcal{C}$ amounts to deciding whether $A$ and $B$ are disjoint, which requires $N$ bits of information even in a non-deterministic setting, thus the small cutset at the frontier between $L(A)$ and $R(B)$ must receive long certificates. Otherwise,
there are fewer bit combinations at the frontier than the $2^N$ entries of the form $(A, \overline{A})$, hence the vertices can be fooled to accept a negative
instance built from two particular positive instances sharing the same bit combination.

We note here that in a previous version of this manuscript, the proof
of Theorem~\ref{thm:dex} was 
using an explicit reduction from the non-deterministic communication complexity
of the set-disjointness problem. In this version we have opted for a
more direct argument, inspired by the proof of \cite[Theorem
6.4]{lcp}.

\medskip

The role of $s$ and $\formerepsilon$ from the definition of $(s, \formerepsilon)$-disjointness-expressing
is explained by
the result below.

\begin{theorem}\label{thm:dex}
Let $\mathcal{C}$ be a $(s,\formerepsilon)$-disjointness-expressing class
of graphs. Then any proof labeling scheme for the class $\mathcal{C}$
has complexity $\Omega \left (\frac{n^{\formerepsilon}}{s} \right )$. In particular if
$s$ is a constant and $\formerepsilon=1$, the complexity is $\Omega(n)$.
\end{theorem}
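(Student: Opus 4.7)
The plan is a pigeonhole-style reduction inspired by the non-deterministic communication complexity of set-disjointness. Assume, for contradiction, that $\mathcal{C}$ admits a proof labeling scheme of complexity $k$. Given a large integer $N$, set $n:=\lceil \alpha N^{1/\formerepsilon}\rceil$ and, for every $A\subseteq\{1,\dots,N\}$, consider the ``honest'' graph $G_A:=g(L(A),R(\overline{A}))$. Property~(iii) guarantees $G_A\in\mathcal{C}$ and property~(i) guarantees $|V(G_A)|\le n$. Using property~(ii), I fix a canonical identification (and matching identifiers) of the at-most-$s$ vertices of $N[S]$ across all the $G_A$'s, so that the induced subgraph on $N[S]$ together with its labels is literally the same in every $G_A$. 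Completeness supplies, for each $A$, a valid proof $P_A$ making every vertex of $G_A$ accept.

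The counting step is then immediate: the restriction $P_A|_{N[S]}$ is a function from a set of at most $s$ vertices to binary words of length at most $k$, so it ranges over a pool of at most $2^{sk}$ possibilities. If $sk<N$, the pigeonhole principle produces two distinct subsets $A\ne A'$ with $P_A|_{N[S]}=P_{A'}|_{N[S]}$. Since $A\ne A'$, at least one of the intersections $A\cap\overline{A'}$, $A'\cap\overline{A}$ is non-empty; after possibly swapping $A$ and $A'$, I may assume $A\cap\overline{A'}\ne\emptyset$. I form the hybrid graph $G:=g(L(A),R(\overline{A'}))$, which by property~(iii) is not in $\mathcal{C}$, and equip $G$ with the proof $P$ equal to $P_A$ on $L(A)$ and to $P_{A'}$ on $R(\overline{A'})$; on the shared set $N[S]$ the two agree, so $P$ is well defined.

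The main step — and the principal obstacle — is the verification that every vertex of $G$ accepts under $P$, even though $G\notin\mathcal{C}$. This is where the structure of the definition does the work. For a vertex $v\in L(A)\setminus N[S]$, the closed neighborhood $N[v]$ lies entirely in $L(A)$, because the only interface between the two sides is $S$; hence the local subgraph, identifiers and certificates seen by $v$ in $G$ are identical to those it saw in $G_A$, and $v$ accepts. The case $v\in R(\overline{A'})\setminus N[S]$ is symmetric using $G_{A'}$. For $v\in S$ we have $N[v]\subseteq N[S]$, and property~(ii) together with our canonical identification ensures that $G[N[v]]=G_A[N[v]]=G_{A'}[N[v]]$ as labelled graphs, while $P|_{N[v]}=P_A|_{N[v]}=P_{A'}|_{N[v]}$ by the pigeonhole choice; so $v$ still accepts. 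The remaining case $v\in N[S]\setminus S$ reduces to the first one, since such a $v$ lies on exactly one side and hence has all its neighbors on that same side. The same equalities hold in the strengthened locally-checkable-proofs model, where the verifier is also allowed to inspect $G[v]$.

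The contradiction with soundness forces $sk\ge N$, hence $k\ge N/s$. Since $n=\Theta(N^{1/\formerepsilon})$, this yields $k=\Omega(n^{\formerepsilon}/s)$, which specialises to $\Omega(n)$ when $s=O(1)$ and $\formerepsilon=1$. The only delicate point worth highlighting is the bookkeeping of identifiers in property~(ii): one must commit to a single system of identifiers that makes all the graphs $G_A$ and the hybrid $G$ literally agree on $N[S]$, so that ``same certificate at a boundary vertex'' genuinely means ``same input'' to the local verifier.
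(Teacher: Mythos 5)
Your proposal is correct and follows essentially the same route as the paper: the same counting of certificate assignments on $N[S]$ over the $2^N$ instances $g(L(A),R(\overline{A}))$, the same pigeonhole step producing $A\neq A'$ with matching boundary certificates, and the same hybrid graph $g(L(A),R(\overline{A'}))$ whose every local view coincides with one of the two accepting instances, contradicting soundness. Your case analysis of the vertices near $S$ and the remark about fixing identifiers consistently are just slightly more explicit versions of what the paper's argument does implicitly via condition (ii).
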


\begin{proof}
  Let $(\mathcal{P}, \mathcal{A})$ be a proof labeling scheme for the
  class $\mathcal{C}$ and let $p\colon \mathbb{N} \to \mathbb{N}$ be a
  monotone upper-bound on its complexity.
  Let $N$ be a positive integer. For every $A\subseteq \{1, \ldots,
  N\}$, let $\overline{A} = \{1, \dots, N\} \setminus A$ and let $G_A =
  g(L(A), R(\overline{A}))$.
  Clearly $G_A \in \mathcal{C}$ so the verifier $\mathcal{A}$ accepts
  the proof $P_A = P(G_A)$ on every vertex of $G_A$. Let $n$ denote
  the maximum number of vertices of $G_A$ for $A\subseteq \{1, \ldots,
  N\}$.
  
  There are $2^N$ choices for the set $A$. On the other hand,
  in a graph on at most $n$ vertices, the certificates given by $\mathcal{P}$
  have at most $p(n)$ bits so on a subset of $s$ vertices there are at most
  $2^{s\cdot p(n)}$ choices of certificates in total. In particular in
  $G_A$ there are at most $2^{s\cdot p(n)}$ different ways to assign
  certificates to the vertices of $N[S]$.

  By the Pigeonhole Principle, if $2^N > 2^{sp(n)}$ there are two sets
  $A, A'\subseteq \{1, \dots, N\}$ such that the proofs $P_A$ and $P_{A'}$
  coincide on the subgraph of $G_A$ and $G_{A'}$ induced by
  $N[S]$. Recall that by definition this subgraph does not depend on
  the choice of $A$.
  Since $A\ne A'$, we have $A \cap \overline{A'} \neq \emptyset$ or $A'
  \cap \overline{A} \neq \emptyset$. By symmetry we may assume without loss
  of generality that we are in the first case. So the graph $G =
  g(L(A), R(\overline{A'}))$ does not belong to $\mathcal{C}$.
  
  We now consider
  a proof $P$ for $G$ defined as follows: if $v \in V(L(A))$ then
  $P(v) := P_A(v)$ and if $v \in V(R(\overline{A'}))$ then $P(v) :=
  P_{A'}(v)$ (as $P_A$ and $P_{A'}$ coincide on $L(A)\cap
  R(\overline{A'})=S\subseteq N[S]$, the proof $P$ is well defined).
  Observe that if $v \in V(L(A))$, then $(G[v], P[v]) = (G_A[v],
  P_A[v])$ and  if $v \in V(R(\overline{A'}))$, then $(G[v], P[v]) = (G_{A'}[v], P_{A'}[v])$. So
  the verifier $\mathcal{A}$ accepts $P$ on every vertex of $G$, which
  contradicts the fact that $G$ does not belong to $\mathcal{C}$. 

  Therefore $2^N\leq 2^{s\cdot p(n)}$. Recall that
  $n\leq \alpha N^{1/\formerepsilon}$, by the definition of
  disjointness-expressibility.
  Hence $p(n) = \Omega(n^\formerepsilon/s)$, as claimed.  
\end{proof}

In general a lower bound on the complexity of a proof
labeling scheme for a graph class does not immediately translate to
results for sub- or super-classes. This can be compared to what happens in
centralized algorithms, where the computational hardness of the
recognition problem
for a graph class does not imply in general that a similar result holds
for sub- or super-classes.
Sometimes however the proof that a graph class is
disjointness-expressing also provides results for sub- or
super-classes, as described below.

\begin{remark}
\label{rmk:disjointness implied by other class}
Let $\mathcal{C}$ be a class of graphs and let $\mathcal{C}^- \subseteq \mathcal{C}$ be a subclass of $\mathcal{C}$.
\begin{enumerate}
\item Assume $\mathcal{C}^-$
 is $(s, \formerepsilon)$-disjointness-expressing, as witnessed by functions $L$ and $R$ as in the definition. Suppose furthermore that for every $A,B\subseteq \{1, \ldots, N\}$ such that $A\cap B \neq \emptyset$, $g(L(A),R(B)) \notin \mathcal{C}$. Then $\mathcal{C}$ is $(s, \formerepsilon)$-disjointness-expressing.\label{item: get bigger class disjointness-expressing}
 \item Assume $\mathcal{C}$
is $(s, \formerepsilon)$-disjointness-expressing, as witnessed by functions $L$ and $R$ as in the definition. Suppose furthermore that for every $A,B\subseteq \{1, \ldots, N\}$ such that $A\cap B = \emptyset$, $g(L(A),R(B)) \in \mathcal{C}^-$. Then $\mathcal{C}^-$ is $(s, \formerepsilon)$-disjointness-expressing.\label{item: get smaller class disjointness-expressing}
 \end{enumerate}
\end{remark}

\section{Linear lower bounds in rigid classes}\label{sec:lb}

In this section we obtain linear lower bounds on the local complexity of several graph classes using the framework described in Section~\ref{sec:disjcla}.

\subsection{Penny graphs and unit-distance graphs}

\begin{theorem}\label{thm:penny}
The class of penny graphs is $(6,1)$-disjointness-expressing.
\end{theorem}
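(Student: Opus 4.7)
\emph{Proof plan.} The plan is to construct, for each $X\subseteq\{1,\ldots,N\}$, two penny graphs $L(X)$ and $R(X)$ sharing exactly a two-vertex set $S=\{u,v\}$, with $u$ and $v$ pendant vertices in both $L(X)$ and $R(X)$. This way, in $g(L(A),R(B))$ the set $N[S]=\{u,v\}\cup N_L(u)\cup N_L(v)\cup N_R(u)\cup N_R(v)$ consists of exactly six vertices inducing four pendant edges, a subgraph that does not depend on $A$ or $B$. Setting $\ell=\Theta(N)$, I would design the intended penny realization of $L(X)$ as a rigid strip in the triangular lattice: a ``spine'' path $u=p_0,p_1,\ldots,p_\ell=v$ on one side of the strip, a ``backbone'' row $q_1,\ldots,q_{\ell-1}$ on the other side (each $q_i$ touching $p_i$, $p_{i+1}$, $q_{i-1}$, $q_{i+1}$, and \emph{not} adjacent to $u$ or $v$ so as to keep $u,v$ pendant), and for each $i\in X$ an extra ``bump'' penny attached to $p_i$ on the side opposite to the backbone. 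The graph $R(X)$ is the mirror construction, with backbone on the opposite side and bumps pointing toward $L(X)$'s bumps. The total size is clearly $O(\ell)=O(N)$, which accounts for the constant $\alpha$.

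\emph{Key steps.} First, I would verify the combinatorial properties: vertex count $O(N)$, $|N[S]|=6$, and independence of $N[S]$ from $A,B$ (since bumps are at positions $i$ that are not adjacent to $u$ or $v$). Second, I would prove the ``if'' direction: when $A\cap B=\emptyset$, the canonical realization --- $L(A)$'s strip above the segment $uv$ with its bumps below (in the space between the two spines), and $R(B)$'s strip below with its bumps above --- gives a valid penny arrangement, since the bumps from the two sides occupy disjoint horizontal columns. Third, and this is the technical heart, I would prove the ``only if'' direction: the rigidity of each triangular strip (unique penny realization up to isometry) forces $L(A)$ and $R(B)$ to lie on opposite sides of segment $uv$ with their spines sandwiching a narrow strip into which the bumps point; for any $i\in A\cap B$, the two bumps at position $i$ would then occupy overlapping open disks, contradicting the penny condition.

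\emph{Main obstacle.} The main difficulty is the rigidity argument in the third step. Each individual triangular strip is rigid up to isometry, but gluing the two strips at only two far-apart vertices $u,v$ could a priori allow spurious realizations where the spines coincide, bend in unexpected ways, or are placed on the same side of $uv$. To rule these out, I would tune the construction by slightly breaking the symmetry between the two spines --- for instance by giving $P_A$ and $P_B$ slightly different combinatorial lengths (so that $|uv|$ is compatible with both only when one spine stays straight along $uv$ and the other forms a controlled small arc on the opposite side), or by inserting a few extra ``anchor'' pennies near $u$ and $v$ that pin the relative orientation of the two strips without increasing $|N[S]|$ beyond $6$. The verification that (i) every penny realization of $g(L(A),R(B))$ is a small perturbation of the canonical one, and that (ii) the bumps collide precisely at indices in $A\cap B$ and nowhere else, is the core technical content of the proof, and is what justifies the tight parameters $s=6$ and $\formerepsilon=1$.
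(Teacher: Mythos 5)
There is a genuine gap, and it is exactly the one you flag as your ``main obstacle'': the soundness direction is not proved, and your construction as designed does not force the intended configuration. Because you glue $L(A)$ and $R(B)$ only at two vertices $u,v$ that are \emph{pendant} in each part, the glued graph consists of two internally rigid triangular strips coupled through two degree-$2$ vertices. Each such coupling only yields distance \emph{inequalities} (e.g.\ $d_2(p_1,p'_1)\le 2$ together with $>1$ for non-adjacent pairs), so the relative placement of the two strips retains continuous freedom and, crucially, a discrete reflection freedom: each strip's unique penny realization can be reflected, so nothing in the graph dictates which side of its spine the bumps face. Ruling out every cheating embedding (a reflected strip whose bumps point away from the gap, strips shifted or slightly rotated within the slack allowed at $u$ and $v$, etc.) is precisely the missing content, and it is made harder by the fact that your bumps are single pendant pennies, which are themselves floppy -- each can swing through an open arc of roughly $60^\circ$ around its spine vertex, so even in the intended sandwich position the collision at an index $i\in A\cap B$ is not a coincidence of two forced points but a claim that two whole arcs of admissible positions always conflict, under all admissible relative placements of the strips. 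Moreover, one of your proposed repairs is inconsistent with the rigidity you rely on: a triangular strip has a unique penny realization up to isometry, so its spine is straight in \emph{every} realization and cannot ``form a controlled small arc'' to absorb a length mismatch; the ``anchor pennies near $u$ and $v$'' idea could work in principle, but it is left unspecified and is where the real proof would have to live.

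The paper avoids this entire difficulty by a different gluing: $c_1$ and $c_2$ are not pendant cut vertices but are embedded in triangles of both parts, and the whole graph $g(L(A),R(B))$ (Figure~\ref{fig:penny}) admits an ordering $v_1,v_2,v_3,\ldots$ in which $v_1v_2v_3$ is a triangle and every later vertex has two previously placed neighbors $x,y$ forming a triangle $xyv_{i_0}$ with an already placed vertex. Hence the position of each new vertex is one of two points, one of which is occupied, and the \emph{entire glued graph} has a unique penny embedding up to isometry. The gadgets are rigidly attached triangles ($a_ix_ix_i'$ and $b_iy_iy_i'$) rather than pendant pennies, so when $i\in A\cap B$ the points $a_i$ and $b_i$ are forced to coincide and the graph is immediately not a penny graph -- no case analysis over flexible placements or reflections is needed. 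If you want to salvage your plan, the cleanest fix is to adopt this principle: make the union of the two parts a single triangulated (vertex-by-vertex triangle-closing) structure through the cut set, and replace the pendant bumps by triangle gadgets, while keeping the gadgets far from $S$ so that $|N[S]|\le 6$ and $N[S]$ is independent of $A,B$.
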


\begin{proof}
  The construction is described in Figure~\ref{fig:penny}. Let
  us argue that for $A,B\subseteq \{1, \ldots, N\}$ there is a unique
  (up to reflection, translation, and rotation) penny representation
  of $L(A), R(B)$, and $g(L(A), R(B))$ (shown on Figure~\ref{fig:penny
    glued}).  
  We can observe that for each of $L(A), R(B)$, and $g(L(A), R(B))$, there exists an ordering $\{v_1,v_2,
  \ldots\}$ of the vertices such that $v_1, v_2, v_3$ is
  a triangle, and each $v_i$ (for $i>3$) has two neighbors $x$ and $y$
  in $\{v_1, \ldots, v_{i-1}\}$ such that $xyv_{i_0}$ is a triangle
  for some $i_0<i$. Once we fix the image of the first triangle
  $v_1v_2v_3$ in the plane (which must
  form a unit equilateral triangle), there is a unique way to embed
   $v_i$ in the plane: its image must be at distance
  exactly one of the images of $x$ and $y$. This condition  is satisfied by
  exactly two points in the plane, one of which is already used by
  $v_{i_0}$.

The set $S=\{c_1, c_2\}$ has size 2 and we can observe that the
subgraph induced by the neighborhood of $S$ in $g(L(A), R(B))$ is
independent from the choice of $A$ and $B$, and has size at most 6. Hence Condition~\ref{item:indep-size} of disjointness-expressing is satisfied.
Moreover $g(L(A), R(B))$ has at most $18N+12$ vertices ensuring Condition~\ref{item:total-size} with $\formerepsilon=1$.
Finally, we can see on Figure~\ref{fig:penny glued} that $a_i$ and
$b_i$ cannot both exist at the same time since otherwise their images
in the plane would coincide.
So if there exists $i\in A\cap B$, $a_i$ and $b_i$ must both exist and
$g(L(A), R(B))$ is not a penny graph. On the other hand, if $A\cap B =\emptyset$ then at most one of $a_i, b_i$ exists for every $i$ and $g(L(A), R(B))$ is a penny graph with representation given in Figure~\ref{fig:penny}. So Condition~\ref{item:no-intersection} of disjointness-expressing is satisfied.
\end{proof}

\begin{figure}[h]

\begin{subfigure}[t]{0.23\linewidth}
\centering
\includegraphics[scale=1, page=1]{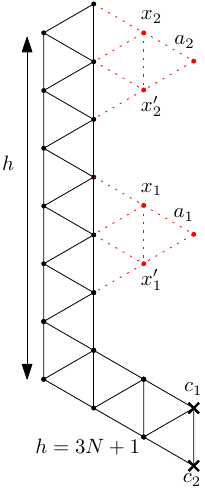}
\subcaption{$L(A)$, where the triangle $a_ix_ix'_i$ exists if and only
  if $i\in A$}
\end{subfigure}\hfill
\begin{subfigure}[t]{0.23\linewidth}
\centering
\includegraphics[scale=1, page=2]{Penny}
\subcaption{$R(B)$, where the triangle $b_iy_iy'_i$ exists if and only
  if $i\in B$}
\end{subfigure}\hfill
\begin{subfigure}[t]{0.45\linewidth}
\centering
\includegraphics[scale=1, page=3]{Penny}
\subcaption{$g(L(A), R(B))$ obtained by identifying $c_1$ and $c_2$. Grey circles show the only possible position for a penny representation of the graph, with a collision between $a_i$ and $b_i$. }
\label{fig:penny glued}
\end{subfigure}

\caption{Construction of $L,R$ and $g$ for penny graphs in the case where $N=2$, with $A,B\subseteq\{1, \ldots, N\}$. Color red highlights vertices and edges that depend on the choice of $A$, and color blue highlights vertices and edges that depend on the choice of $B$.}

\label{fig:penny}
\end{figure}

From Theorems~\ref{thm:penny} and \ref{thm:dex}, together with Corollary~\ref{cor:linear}, we immediately deduce the following.

\begin{theorem}
The local complexity of the class of penny graphs is $\Theta(n)$.
\end{theorem}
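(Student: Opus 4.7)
The plan is entirely a matter of assembling results already in place, so I would present it as a short combination argument. For the upper bound, I would invoke Corollary~\ref{cor:linear}, which already lists penny graphs among the classes with local complexity $O(n)$ (penny graphs are tiny, since they are planar and in fact of bounded twin-width, so the counting argument behind the corollary applies). For the lower bound, I would apply Theorem~\ref{thm:dex} to the class $\mathcal{C}$ of penny graphs: by Theorem~\ref{thm:penny}, $\mathcal{C}$ is $(6,1)$-disjointness-expressing, and Theorem~\ref{thm:dex} then guarantees that any proof labeling scheme for $\mathcal{C}$ has complexity $\Omega(n^{1}/6) = \Omega(n)$, because $s=6$ is a constant and $\formerepsilon=1$.

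Putting the two bounds together yields local complexity $\Theta(n)$. There is no real obstacle, since the heavy lifting has been done upstream: the rigidity of penny representations (ensuring that $a_i$ and $b_i$ collide whenever $i \in A \cap B$) is what feeds into Theorem~\ref{thm:penny}, and the pigeonhole/cutset argument in the proof of Theorem~\ref{thm:dex} then automatically converts $(6,1)$-disjointness-expressibility into an $\Omega(n)$ certificate lower bound. The only thing I would double-check when writing the proof down is that the parameters of the two invoked theorems match: namely, that $s$ is indeed a constant (it equals $6$ here) so that the $\Omega(n^{\formerepsilon}/s) = \Omega(n)$ conclusion is valid without any additional factor depending on $n$, and that the order bound $n \leq \alpha N^{1/\formerepsilon}$ from condition~\ref{item:total-size} holds with $\formerepsilon=1$ (which follows from the $18N+12$ estimate noted in the proof of Theorem~\ref{thm:penny}).

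The resulting proof is therefore a two-sentence corollary: the upper bound from Corollary~\ref{cor:linear} together with the lower bound obtained by chaining Theorem~\ref{thm:penny} and Theorem~\ref{thm:dex}.
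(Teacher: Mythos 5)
Your proposal is correct and matches the paper's own argument exactly: the paper likewise deduces the theorem immediately by combining Theorem~\ref{thm:penny} and Theorem~\ref{thm:dex} for the $\Omega(n)$ lower bound with Corollary~\ref{cor:linear} for the $O(n)$ upper bound. Your parameter checks ($s=6$ constant, $\formerepsilon=1$, and the $O(N)$ order bound) are exactly the right points to verify and all hold.
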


The exact same construction as in the proof of Theorem~\ref{thm:penny}
can be used to deal with unit-distance graphs.
Indeed, observe that the same rigidity arguments show that
$g(L(A), R(B))$ is a unit-distance graph if and only if $A$ and $B$
are disjoint. Hence by the first item of Remark~\ref{rmk:disjointness
  implied by other class} we get the following.

\begin{theorem}
The class of unit-distance graphs is $(6,1)$-disjointness-expressing.
\end{theorem}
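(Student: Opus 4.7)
The plan is to recycle the construction from Theorem~\ref{thm:penny} wholesale, taking $L(A)$, $R(B)$, and $g(L(A),R(B))$ exactly as in Figure~\ref{fig:penny}, with the same cutset $S=\{c_1,c_2\}$. Since the underlying graphs are identical, conditions~\ref{item:total-size} and~\ref{item:indep-size} are inherited without change (at most $18N+12$ vertices, and the subgraph induced by $N[S]$ is the same $6$-vertex gadget, independent of $A$ and $B$). The entire burden of the proof is therefore condition~\ref{item:no-intersection}.

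For the easy direction, if $A\cap B=\emptyset$ then $g(L(A),R(B))$ is a penny graph by Theorem~\ref{thm:penny}, hence certainly a unit-distance graph. For the reverse direction, I would rerun the rigidity argument from Theorem~\ref{thm:penny} essentially verbatim, but now in the unit-distance setting: suppose a unit-distance embedding $\varphi$ of $g(L(A),R(B))$ exists, and walk through the same ordering $v_1,\ldots,v_r$ on each side. After fixing $\varphi$ on the initial unit equilateral triangle $v_1v_2v_3$, each subsequent $v_i$ has two previous neighbors $x,y$ such that $xyv_{i_0}$ is a triangle with $i_0<i$; so $\varphi(v_i)$ must lie at one of the two points at Euclidean distance exactly~$1$ from both $\varphi(x)$ and $\varphi(y)$, one of which is already occupied by $\varphi(v_{i_0})$. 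Since $\varphi$ must map distinct vertices to distinct points, $\varphi(v_i)$ is forced into the remaining position. Iterating, the positions of all vertices are determined up to an isometry of the plane. For any $i\in A\cap B$ the vertices $a_i$ and $b_i$ are both present, and this rigidity forces $\varphi(a_i)=\varphi(b_i)$, contradicting the injectivity of $\varphi$.

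The only place where the unit-distance definition diverges from the penny-graph one is the \emph{only if} clause of the adjacency rule: in a unit-distance graph, every pair of points at Euclidean distance exactly~$1$ must be adjacent. So the main subtlety — and the only thing I expect to be a genuine obstacle — is to check that the rigid coordinates produced by the induction do not accidentally place two non-adjacent vertices of $g(L(A),R(B))$ at distance exactly~$1$. Fortunately these coordinates coincide with the penny representation of Figure~\ref{fig:penny}, in which no such unintended unit distance occurs (otherwise the picture would not be a valid penny configuration either), so the check reduces to inspection of the construction. Combined with Theorem~\ref{thm:dex} applied with $s=6$ and $\formerepsilon=1$, this will yield the desired $\Omega(n)$ lower bound on the local complexity of unit-distance graphs.
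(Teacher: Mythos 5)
Your proposal is correct and follows essentially the same route as the paper, which likewise reuses the penny-graph construction verbatim, notes that conditions \ref{item:total-size} and \ref{item:indep-size} are inherited, and reruns the rigidity argument (using only the distance-exactly-one constraints of adjacent pairs plus injectivity) to force $a_i$ and $b_i$ onto the same point when $i\in A\cap B$. Your extra check that the penny representation has no non-adjacent pair at distance exactly $1$ is indeed the only point where the definitions diverge, and it holds automatically since non-tangent interior-disjoint unit disks have centers at distance strictly greater than $1$.
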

As above we have the following consequence.
\begin{theorem}
The local complexity of the class of unit-distance graphs is $\Omega(n)$.
\end{theorem}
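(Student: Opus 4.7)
The plan is to invoke Theorem~\ref{thm:dex} with the parameters $s=6$ and $\formerepsilon=1$, applied to the statement immediately above which asserts that the class of unit-distance graphs is $(6,1)$-disjointness-expressing. Substituting these values into the conclusion of Theorem~\ref{thm:dex} yields a lower bound of $\Omega(n^{\formerepsilon}/s) = \Omega(n/6) = \Omega(n)$ on the complexity of every proof labeling scheme for the class of unit-distance graphs, which is precisely the claim to be proven.

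I expect no genuine obstacle in this derivation: the substantive work has already been carried out in the preceding paragraph, where it was observed that the penny-graph gadget $g(L(A),R(B))$ of Theorem~\ref{thm:penny} witnesses both properties simultaneously. The key structural ingredient—that a triangle-by-triangle forcing argument fixes the position of every vertex in the plane up to a single rigid motion, thereby making $a_i$ and $b_i$ collide exactly when $i\in A\cap B$—uses nothing more than the constraint that adjacent vertices sit at Euclidean distance exactly one. This is precisely what a unit-distance representation enforces, so the same construction applies verbatim.

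The one point I would flag, though it is not required by the statement, is that this bound is not claimed to be tight. For penny graphs the matching $O(n)$ upper bound of Corollary~\ref{cor:linear} produces a clean $\Theta(n)$ characterization, whereas the only known complementary bound for unit-distance graphs is the $O(n\log n)$ of Corollary~\ref{cor:nlogn}, derived from the $2^{O(n\log n)}$ count of \cite{alon2014two}. Hence the statement is correctly phrased as a one-sided $\Omega(n)$ bound, with a $\Theta(\log n)$ gap remaining between our lower and upper bounds on the local complexity of the class.
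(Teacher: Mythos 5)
Your proposal is correct and matches the paper's argument exactly: the bound is obtained by applying Theorem~\ref{thm:dex} with $s=6$ and $\formerepsilon=1$ to the preceding theorem that unit-distance graphs are $(6,1)$-disjointness-expressing. Your remark that only the one-sided $\Omega(n)$ bound is claimed (the upper bound being $O(n\log n)$ from Corollary~\ref{cor:nlogn}) also agrees with the paper.
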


We observe that rigidity properties similar to those used in the proof
of Theorem~\ref{thm:penny} can be obtained in higher dimension $d\geq
3$ with a very similar construction. This suggests that via a very
similar proof one can obtain a linear lower
bound on the local complexity of contact graphs of balls and
unit-distance graphs in dimension $d$, for any~$d\geq 3$.

\subsection{Subgraphs of the square grid}

\begin{figure}[h]

\begin{subfigure}[t]{0.25\linewidth}
\centering
\includegraphics[scale=0.7, page=1]{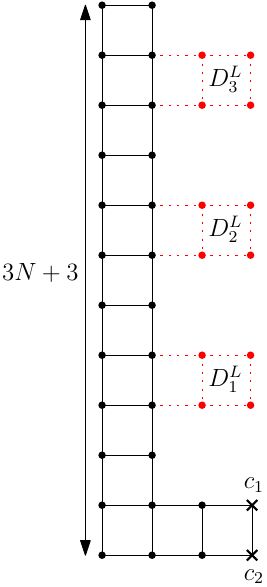}
\subcaption{$L(A)$, where the block $D^L_i$ (4 vertices, 6 edges) exists if and only if $i\in A$}
\end{subfigure}
\hspace{15pt}
\begin{subfigure}[t]{0.25\linewidth}
\centering
\includegraphics[scale=0.7, page=2]{squareGrid}
\subcaption{$R(B)$, where the block $D^R_i$ (4 vertices, 6 edges) exists if and only if $i\in B$}
\end{subfigure}
\hspace{15pt}
\begin{subfigure}[t]{0.3\linewidth}
\centering
\includegraphics[scale=0.7, page=3]{squareGrid}
\subcaption{$g(L(A), R(B))$ obtained by identifying $c_1$ and $c_2$.}

\end{subfigure}

\captionsetup{width=.85\linewidth} 
\caption{Construction of $L,R$ and $g$ for subgraphs of the square grid in the case where $N=3$, with $A,B\subseteq\{1, \ldots, N\}$. Color red highlights vertices and edges that depend on the choice of $A$, and color blue highlights vertices and edges that depend on the choice of~$B$.}
\label{fig:squareGrid}
\end{figure}

\begin{theorem}\label{thm:squareGrid}
The class of subgraphs of the square grid is $(6, 1)$-disjointness-expressing.
\end{theorem}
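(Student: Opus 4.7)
The plan is to follow the template of the proof of Theorem~\ref{thm:penny}, replacing the triangle-based rigidity of penny graphs by the 4-cycle rigidity of the grid. Following Figure~\ref{fig:squareGrid}, I would take $L(A)$ to be a straight horizontal path $c_1 v_1 \cdots v_\ell c_2$ of length $\ell=\Theta(N)$ with a small block gadget $D^L_i$ attached on one side of the path at position $i$ whenever $i\in A$, and symmetrically $R(B)$ another path of the same length from $c_1$ to $c_2$ with blocks $D^R_i$ attached on the opposite side for each $i\in B$. The graph $g(L(A),R(B))$ is obtained by identifying $c_1$ and $c_2$.

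Conditions \ref{item:total-size} and \ref{item:indep-size} of disjointness-expressibility are straightforward to verify from the construction: the total vertex count is $O(N)$, giving Condition~\ref{item:total-size} with $\formerepsilon=1$, and the cutset $S=\{c_1,c_2\}$ has two elements whose closed neighborhood in $g(L(A),R(B))$ induces a subgraph on at most $s=6$ vertices (the endpoints together with the first path-neighbor within each of $L$ and $R$) that does not depend on the choice of $A$ or $B$. For Condition~\ref{item:no-intersection}, the forward direction is immediate: when $A\cap B=\emptyset$ one obtains a grid embedding by placing $P_A$ on the line $y=0$, $P_B$ on $y=1$, the blocks $D^L_i$ in the cells above $P_A$ and the blocks $D^R_i$ in the cells below $P_B$, all of which fit without collision precisely because no index is shared. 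The converse is the rigidity statement: assuming $i\in A\cap B$, I must show that no grid embedding of $g(L(A),R(B))$ exists. The key idea is that each block contains several 4-cycles of the grid which pin down its image up to a local isometry; propagating these local rigidity constraints along each path, and using the common endpoints $c_1,c_2$, forces the two paths to run parallel and close to each other in a neighborhood of any block position, so that at index $i\in A\cap B$ the grid cells required by $D^L_i$ must overlap those required by $D^R_i$, yielding the required contradiction.

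The main obstacle I anticipate is making the rigidity argument fully precise. Generic paths in the grid can zig-zag arbitrarily, so one cannot control the shape of either path in isolation. The role of the block gadgets is precisely to forbid the path from turning freely in a neighborhood of an equipped position, by contributing enough short cycles to locally anchor the path's image up to reflection; at a doubly equipped index $i\in A\cap B$, the two anchored configurations imposed by $D^L_i$ and $D^R_i$ are forced to overlap on the same grid vertices, which produces the non-embeddability and completes the verification that the class of subgraphs of the square grid is $(6,1)$-disjointness-expressing.
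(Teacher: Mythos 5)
There is a genuine gap, and it is exactly at the point you flag as the ``main obstacle'': the rigidity you need does not hold for the construction you propose. If $L(A)$ and $R(B)$ are bare paths of the same length $\ell$ joining $c_1$ to $c_2$, then $g(L(A),R(B))$ is (up to the attached blocks) a cycle of length $2\ell$, and a cycle embeds in the square grid in many inequivalent ways --- for instance as the boundary of a roughly $\tfrac{\ell}{2}\times\tfrac{\ell}{2}$ square region, where the two paths are far apart everywhere except near $c_1$ and $c_2$. Attaching a constant-size block $D^L_i$ or $D^R_i$ at a few positions only rigidifies a constant-radius neighbourhood of the attachment point; the stretches of bare path between consecutive blocks (and between the blocks and the endpoints) remain free to zig-zag, so nothing forces the two paths to run parallel at unit distance near a doubly equipped index $i\in A\cap B$. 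Concretely, one can embed the $2\ell$-cycle with the two sides several rows apart around position $i$ and place both $D^L_i$ and $D^R_i$ without any collision, so $g(L(A),R(B))$ is a subgraph of the grid even when $A\cap B\neq\emptyset$, and Condition~\ref{item:no-intersection} fails. Your Conditions~\ref{item:total-size} and~\ref{item:indep-size} are fine, but they are moot without (iii).

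The paper avoids this by making the rigidity global rather than local: each of $L(A)$ and $R(B)$ is not a path but a ladder-like strip built by gluing $4$-cycles along edges (the gadgets are ``truncated dominoes'', themselves made of $4$-cycles), so that the whole glued graph $g(L(A),R(B))$ is a chain of $C_4$'s sharing edges. Since every $4$-cycle of the square grid bounds a unit face and every edge of the grid lies in exactly two such faces, once a single $C_4$ of the graph is embedded there is at most one way to extend the embedding to all of $g(L(A),R(B))$; this forces the two strips to sit at prescribed relative positions, and when $i\in A\cap B$ the two dominoes at index $i$ would have to occupy common grid vertices, which is impossible as they are disjoint in the graph. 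To repair your argument you would have to replace the paths by such strips of glued $4$-cycles (or otherwise make each side rigid all along its length, not just near the gadgets); as written, the propagation step ``the two anchored configurations are forced to overlap'' has no proof and is in fact false for paths.
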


\begin{proof}
The construction is described  in Figure~\ref{fig:squareGrid}, and is very similar to the one used for penny
graphs in Theorem~\ref{thm:penny}. We denote by $D^L_i$ a
\emph{left-truncated domino} containing 4 vertices and 6 edges,
attaching to 2 existing vertices on the left, as shown by one red
block on the figure. We add such a subgraph $D^L_i$  in $L(A)$ if and
only if $i$  belongs to $A$. Similarly we define a
\emph{right-truncated domino} $D^R_i$  containing 4 vertices and 6 edges, attaching to 2 existing vertices on the right, as shown by one blue block on the figure, and $D^R_i$ is present in $R(B)$ if and only if $i$ belongs to $B$.
The set $S=\{c_1, c_2\}$ has size 2 and we can observe that the
subgraph induced by the neighborhood of $S$ in $g(L(A), R(B))$ is
independent from the choice of $A$ and $B$ and has size at most 6, which fulfills Condition~\ref{item:indep-size} of disjointness-expressing.
Moreover the size of $g(L(A), R(B))$ is at most $20N+18$, satisfying Condition~\ref{item:total-size} with $\formerepsilon=1$.
Finally, we claim that if $g(L(A), R(B))$ is a subgraph of the square
grid, the blocks $D^L_i$ and $D^R_i$ cannot both exist because there
is not enough space in the grid to fit two different vertices at their
extremities, in a sense that we explain now. Observe that $g(L(A),
R(B))$ can be constructed by gluing $C_4$'s along their edges. As
every edge of the square grid is shared by exactly two $C_4$'s, as
soon as we embed one $C_4$ of $g(L(A), R(B))$ as an induced
subgraph of the square grid, there is at most one way to extend this to
an embedding of $g(L(A), R(B))$  as an induced subgraph of the square grid.
If both $D^L_i$ and $D^R_i$ exist in $g(L(A), R(B))$ and this graph is
an induced subgraph of the grid, the aforementioned rigidity property implies that two vertices of the grid belong to both of the truncated dominos, which is impossible since these subgraphs are disjoint in $g(L(A), R(B))$. So in this case, $g(L(A), R(B))$ is not a subgraph of the grid. Conversely it is easy to check that when $A$ and $B$ are disjoint, $g(L(A), R(B))$ is indeed an induced subgraph of the grid. This shows Condition~\ref{item:no-intersection}.
\end{proof}

From the proof above, we deduce the following result for induced
subgraphs of the square grid.

\begin{corollary}\label{coro:induced subgraphs of squareGrid}
The class of induced subgraphs of the square grid is $(6, 1)$-disjointness-expressing.
\end{corollary}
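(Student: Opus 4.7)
The plan is to reuse verbatim the construction from the proof of Theorem~\ref{thm:squareGrid}: define $L(A)$, $R(B)$, and $g(L(A),R(B))$ with the same truncated dominos $D^L_i, D^R_i$ and the same interface $S=\{c_1,c_2\}$, as shown in Figure~\ref{fig:squareGrid}. Conditions (i) and (ii) of disjointness-expressibility then hold with the same constants as before: at most $20N+18$ vertices in total, and the subgraph induced by $N[S]$ has size at most $6$ and does not depend on $A$ and $B$.

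For condition (iii), I would appeal to Remark~\ref{rmk:disjointness implied by other class}(\ref{item: get smaller class disjointness-expressing}) with $\mathcal{C}$ the class of subgraphs of the grid and $\mathcal{C}^-$ its subclass of induced subgraphs of the grid. Theorem~\ref{thm:squareGrid} already establishes that $\mathcal{C}$ is $(6,1)$-disjointness-expressing via this very construction, so only the additional hypothesis of the remark needs to be checked: whenever $A\cap B=\emptyset$, the graph $g(L(A),R(B))$ lies in $\mathcal{C}^-$, i.e.\ embeds as an \emph{induced} subgraph of the grid. The converse direction (if $A\cap B\neq\emptyset$, then $g(L(A),R(B))$ is not even a subgraph of the grid, and therefore not an induced subgraph either) is inherited directly from Theorem~\ref{thm:squareGrid}.

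The only genuine task is therefore the last verification, which is a routine inspection of the canonical drawing in Figure~\ref{fig:squareGrid}: the horizontal strip linking $c_1$ and $c_2$ separates $L(A)$ from $R(B)$, and two dominos on the same side share exactly the two grid vertices through which they are attached. Consequently every pair of vertices of $g(L(A),R(B))$ that is mapped to grid-adjacent points already sits inside a common truncated domino, and is therefore joined by an edge of $g(L(A),R(B))$. This rules out any chord in the canonical embedding, so it realises $g(L(A),R(B))$ as an induced subgraph of the grid. I expect no substantive obstacle: the proof is essentially a pointer to the construction already given for Theorem~\ref{thm:squareGrid}, together with the observation that the obvious embedding from the figure is chord-free.
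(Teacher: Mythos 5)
Your proposal is correct and follows exactly the paper's route: the paper likewise observes that $g(L(A),R(B))$ is an induced subgraph of the grid whenever $A\cap B=\emptyset$ (this is already noted at the end of the proof of Theorem~\ref{thm:squareGrid}) and then invokes Remark~\ref{rmk:disjointness implied by other class}, Item~\ref{item: get smaller class disjointness-expressing}, to transfer the parameters $(6,1)$ to the subclass. Your extra verification that the canonical embedding is chord-free is just a slightly more explicit spelling-out of the same observation.
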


\begin{proof}
We observe that the graph $g(L(A), R(B))$ is an induced subgraph of the square grid whenever $A\cap B=\emptyset$. Therefore by Remark~\ref{rmk:disjointness implied by other class}, Item~\ref{item: get smaller class disjointness-expressing}, the class of induced subgraphs of the square grid is disjointness-expressing with the same parameters as subgraphs of the square grids.
\end{proof}

From Theorem~\ref{thm:squareGrid}, Corollary~\ref{coro:induced subgraphs
  of squareGrid}, and Theorem~\ref{thm:dex}, together with Corollary~\ref{cor:linear}, we immediately deduce the following.

\begin{theorem}
The local complexity of the class of (induced) subgraphs of the square
grid is $\Theta(n)$.
\end{theorem}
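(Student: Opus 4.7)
The plan is to assemble this as a direct corollary of the three results cited just before the statement, with essentially no additional work required. The theorem asserts two matching bounds, so I will handle the upper and lower bounds separately, noting that the proof is really just an invocation of the machinery already built in the preceding sections.

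First I would handle the upper bound. Corollary~\ref{cor:linear} explicitly lists the class of all (induced) subgraphs of the square grid as having local complexity $O(n)$ (it is tiny, by the count of $O(1)^n$ unlabelled such graphs cited from \cite{bonnet2021twin}). So the $O(n)$ side of $\Theta(n)$ is immediate.

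Next I would handle the lower bound. Theorem~\ref{thm:squareGrid} states that the class of subgraphs of the square grid is $(6,1)$-disjointness-expressing, and Corollary~\ref{coro:induced subgraphs of squareGrid} states the same for induced subgraphs. Plugging $s=6$ and $\formerepsilon=1$ into Theorem~\ref{thm:dex} yields a lower bound of $\Omega(n^{1}/6) = \Omega(n)$ on the local complexity of each of these classes. Combining this with the matching upper bound from Corollary~\ref{cor:linear} gives $\Theta(n)$ for both variants.

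Since every ingredient is already in place, there is no real obstacle here; the only thing to be careful about is to explicitly apply Theorem~\ref{thm:dex} to each of the two variants (subgraphs and induced subgraphs) in turn, and to recall that ``$s$ constant, $\formerepsilon=1$'' is exactly the regime in which that theorem concludes $\Omega(n)$. The proof can be written in two or three lines simply citing the relevant numbered results.
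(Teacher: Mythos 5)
Your proposal is correct and is exactly the paper's argument: the upper bound comes from Corollary~\ref{cor:linear} (the class is tiny), and the lower bound comes from applying Theorem~\ref{thm:dex} with $s=6$, $\formerepsilon=1$ to Theorem~\ref{thm:squareGrid} and Corollary~\ref{coro:induced subgraphs of squareGrid}. Nothing further is needed.
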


Using similar techniques, we can prove that the same holds for grids in any fixed dimension
$d\ge 2$.

\subsection{1-planar graphs}

\begin{theorem}\label{thm:1planar}
The class of 1-planar graphs is $(20, 1)$-disjointness-expressing.
\end{theorem}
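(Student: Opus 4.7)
The approach here departs from the rigidity-based strategy used for penny graphs, unit-distance graphs, and subgraphs of the grid: 1-planar drawings of a graph have substantial flexibility and admit no unique ``geometric realization'' whose uniqueness we could exploit, and there is no geometric metric whose preservation pins down two copies of a long path. Instead, the plan is to exploit the defining combinatorial constraint of 1-planarity---each edge is crossed at most once---by engineering a local configuration in which some edge is forced to be crossed at least twice whenever $A \cap B \ne \emptyset$.

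The construction will glue a ``left part'' $L(X)$ and a ``right part'' $R(X)$ along two identification vertices $c_1,c_2$ forming the set $S$. Both parts are designed so that $g(L(A),R(B))$ contains, as a subgraph, a $3$-connected planar \emph{base graph} $H_N$ with $O(N)$ vertices, featuring $N$ consecutive hexagonal slots $F_1,\ldots,F_N$. By Theorem~\ref{thm:whi}, $H_N$ has a unique planar embedding up to reflection, so each slot $F_i$ is a well-defined hexagonal face whose six boundary vertices are arranged in a fixed cyclic order. The extra structure encodes $A$ and $B$: the two halves of each $F_i$ straddle the $L/R$ split, and for each $i \in A$ the left part contributes two of the three ``long'' diagonals of $F_i$, while for each $i \in B$ the right part contributes the third long diagonal. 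A small attachment gadget around $c_1$ and $c_2$ will ensure $|N[S]|\le 20$ regardless of $A,B$. Conditions~\ref{item:total-size} and~\ref{item:indep-size} of disjointness-expressibility then follow at once: the total number of vertices is $O(N)$, and the subgraph induced by $N[S]$ is completely fixed.

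Condition~\ref{item:no-intersection} has a constructive side and an obstruction side. If $A \cap B = \emptyset$, each face $F_i$ receives at most two added diagonals, which can always be drawn inside $F_i$ with at most one pairwise crossing; concatenating these local drawings with the canonical planar embedding of $H_N$ yields an explicit 1-planar drawing of $g(L(A),R(B))$. Conversely, if $i \in A \cap B$, all three long diagonals of the hexagon $F_i$ are present, and the elementary observation is that inside a hexagon the three long diagonals must pairwise cross, because each of them separates the endpoints of the other two along the boundary. Hence each such diagonal would need to be crossed at least twice, contradicting 1-planarity.

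The main obstacle lies in the last step: a 1-planar drawing of $g(L(A),R(B))$ need not be planar when restricted to $H_N$, so the hexagonal faces $F_i$ are not a priori defined in the drawing, and in principle the added diagonals could be routed far away from the region occupied by $F_i$. Overcoming this will require enriching $H_N$ and the slot gadgets with enough auxiliary structure---for example by surrounding each slot with a small rigid $3$-connected subgraph or a constant number of redundant chords---so that the six boundary vertices of each $F_i$ are pinned into the expected cyclic order in \emph{any} 1-planar drawing, and so that no long diagonal can be rerouted through distant regions of the graph without accumulating additional crossings on the edges of the base. Carrying out this combinatorial rigidity argument while keeping $|N[S]|\le 20$ and the total size of $g(L(A),R(B))$ linear in $N$ is the technical heart of the proof.
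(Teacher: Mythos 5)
There is a genuine gap, and you have actually pointed at it yourself: the entire argument hinges on the claim that in \emph{any} 1-planar drawing of $g(L(A),R(B))$ the hexagonal slots $F_i$ appear with their boundary vertices in the expected cyclic position, so that the three long diagonals are forced to be drawn ``inside'' $F_i$ and hence pairwise cross. Theorem~\ref{thm:whi} cannot deliver this: Whitney's theorem controls \emph{planar} embeddings of a 3-connected planar graph, whereas a 1-planar drawing of your base graph $H_N$ is under no obligation to be planar on $H_N$, to realize the faces $F_i$, or to respect the cyclic orders at all. Each edge of $H_N$ may absorb one crossing, so a priori a diagonal can be rerouted through the rest of the drawing, and the ``three long diagonals of a hexagon pairwise cross'' observation only applies once the boundary of $F_i$ is already pinned down. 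Your proposal defers exactly this step (``enriching $H_N$ \dots\ is the technical heart of the proof''), but this is not a routine patch: adding ``redundant chords'' changes the graph whose 1-planar drawings you must classify, and you would also have to re-verify that the positive instances ($A\cap B=\emptyset$) remain 1-planar after the enrichment. In other words, what is missing is precisely a rigidity statement for \emph{1-planar} drawings, not for planar embeddings.

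The paper resolves this by choosing the frame so that its unique 1-planar embedding is already known: the outer ring of the construction is $C_{2N+6}\boxtimes P_4$, and a result of \cite{korzhik:minimal} guarantees that this graph admits only one 1-planar embedding. Inside this rigid frame, the $A$-dependent edges $a_ia_i'$ and the $B$-dependent edges $b_ib_{i+1}$ are placed so that if $i\in A\cap B$, the edge $a_ia_i'$ would have to cross both $b_ib_{i+1}$ and an edge incident to the degree-2 common neighbor of $b_i$ and $b_{i+1}$, violating 1-planarity; when $A\cap B=\emptyset$ the drawing is exhibited directly. So the overall shape of your plan (rigid skeleton plus crossing-forcing gadgets, with $S$ a constant-size cutset and $|N[S]|=20$) matches the paper, but the decisive ingredient --- uniqueness of the 1-planar embedding of the skeleton --- is imported from the literature there, and is absent from your argument. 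Without it, Condition~\ref{item:no-intersection} in the ``only if'' direction is unproven.
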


\begin{proof}
Figure~\ref{one_planar_lowerbound_fig} illustrates the graph used in
the proof. The fact that the graph of
Figure~\ref{one_planar_lowerbound_fig}\textsc{(c)} (even without any
of the edges $a_ia_i'$ and $b_jb_j'$) has a unique
1-planar embedding follows from a result of \cite{korzhik:minimal} (about the outer
ring of vertices, which is $C_{2N + 6}\boxtimes P_4$). Note that all
edges except those on the 2-edge paths between consecutive vertices
$b_j,b_{j+1}$ are crossed exactly once: a pigeonhole argument shows
that any other drawing inside the region bounded by the outer ring
would have an edge crossed twice. This forces each such
2-edge path between $b_j$ and $b_{j+1}$ to lie inside the region bounded by
$a_j,b_j,a_j',b_{j+1}$. We emphasize that this is not true if we
restrict ourselves to the
graph of Figure~\ref{one_planar_lowerbound_fig}\textsc{(b)}, where the
2-edge paths might leave the region bounded by $b_j$ and $b_{j+1}$
(for this graph, the 1-planar drawing on the figure is not unique).

On the one
hand, $L(A)$ has $2N+8$ vertices, including the special vertices $c_1,
\ldots, c_4$, and the dotted edge $a_ia'_{i}$ exists if and only if
$i\in A$. On the other hand, $R(B)$ has $10N+43$ vertices, including
the same four special vertices
$c_1, \ldots, c_4$, and the dotted edge $b_ib_{i+1}$ exists if and
only if $i\in B$. If $A$ and $B$ are disjoint then the graph $g(L(A),
R(B))$ is clearly 1-planar. We now prove that the converse also
holds. Consider some $i\in
A\cap B$ and assume for the sake of contradiction that $g(L(A),
R(B))$ is 1-planar. Then the edge $a_ia'_i$ must cross two edges: $b_ib_{i+1}$, and
one edge incident to the degree-2 common neighbor of $b_i$ and
$b_{i+1}$, which is a contradiction. Hence this graph is $1$-planar if and only if $A\cap
B=\emptyset$.
This proves Condition~\ref{item:no-intersection} of
disjointness-expressing. Condition~\ref{item:total-size}
is satisfied with $\formerepsilon=1$ because $g(L(A), R(B))$ has order
$12N+47$. Finally regarding Condition~\ref{item:indep-size},
by subgraph induced by the closed neighborhood of $S=\{c_1,c_2,c_3,c_4\}$ in $g(L(A), R(B))$ is
independent of the choice of $A$ and $B$ and has size $s=20$.
\begin{figure}
\begin{subfigure}{0.4\linewidth}
\centering
\includegraphics[scale=.8, page=1]{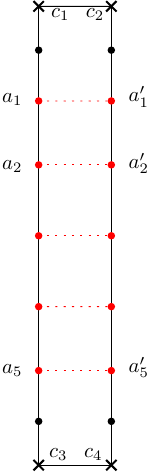}
\subcaption{$L(A)$ where $a_ia'_i\in E$ iff $i\in A$}
\label{fig: Alice part 1planar}
\end{subfigure}
\hspace{20pt}
\begin{subfigure}{0.4\linewidth}
\centering
\includegraphics[scale=.65, page=2]{1planar}
\subcaption{$R(B)$ where $b_ib_{i+1}\in E$ iff $i\in B$}
\label{fig: Bob part 1planar}
\end{subfigure}

\vspace{15pt}

\begin{subfigure}{\linewidth}
\centering
\includegraphics[scale=0.65, page=3]{1planar}
\subcaption{$g(L(A),R(B))$ obtained by identifying $c_1, \ldots, c_4$}
\label{fig:1planar-glued}
\end{subfigure}
\caption{The construction of $L$, $R$ and $g$ for 1-planar graphs in the case where $N=5$, with $A,B\subseteq \{1, \ldots, N\}$. Color red highlights edges that depend on the choice of $A$, and color blue highlights edges that depend on the choice of $B$.  }
  \label{one_planar_lowerbound_fig}
\end{figure}
\end{proof}

 It seems plausible that a generalization of the result of
 Theorem~\ref{thm:1planar} to $k$-planar graphs can be
 obtained from the same construction by adding for each edge $uv$ a
 set of $k-1$ new degree-2 vertices adjacent to both $u$ and $v$. 

 \medskip

From Theorems~\ref{thm:1planar} and \ref{thm:dex}, together with Corollary~\ref{cor:linear}, we immediately deduce the following.

\begin{theorem}
The local complexity of the class of 1-planar graphs is $\Theta(n)$.
\end{theorem}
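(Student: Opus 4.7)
The plan is to deduce this as an immediate corollary by combining the upper and lower bounds established in the preceding material, without any additional construction.

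For the upper bound $O(n)$, I would simply cite Corollary~\ref{cor:linear}, whose third bullet asserts that the class of 1-planar graphs has local complexity $O(n)$ (recall this followed from the fact that 1-planar graphs have bounded twin-width and are therefore tiny, so the general upper bound of Section~\ref{sec:linear} applies).

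For the matching lower bound $\Omega(n)$, I would invoke Theorem~\ref{thm:1planar}, which states that the class of 1-planar graphs is $(20,1)$-disjointness-expressing. Since $s = 20$ is a constant and $\formerepsilon = 1$, Theorem~\ref{thm:dex} directly yields that any proof labeling scheme for this class has complexity $\Omega(n^{1}/20) = \Omega(n)$.

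Putting both bounds together gives local complexity $\Theta(n)$, which is the claim. There is no real obstacle here since all the work has already been done: the combinatorial heart is the disjointness-expressing construction of Theorem~\ref{thm:1planar} (relying on the rigidity of the outer ring $C_{2N+6}\boxtimes P_4$ of \cite{korzhik:minimal} to pin down the unique 1-planar embedding), and the reduction-to-certificates part is encapsulated in Theorem~\ref{thm:dex}. So the proposal for the final theorem is a single sentence of the form: \emph{This follows at once by combining Theorems~\ref{thm:1planar} and \ref{thm:dex} with Corollary~\ref{cor:linear}.}
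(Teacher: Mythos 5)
Your proposal matches the paper exactly: the paper states this theorem as an immediate consequence of Theorem~\ref{thm:1planar} (the $(20,1)$-disjointness-expressing construction) combined with Theorem~\ref{thm:dex} for the $\Omega(n)$ lower bound, and Corollary~\ref{cor:linear} for the $O(n)$ upper bound. Your one-sentence deduction is correct and is the same argument.
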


\section{Unit-square graphs}\label{sec:usg}

Given a set $P$ of points in the plane, the \emph{unit-square graph}
associated to $P$ is the graph with vertex set $P$ in which two
points are adjacent if and only if their $\ell_\infty$-distance is at most
1. We say that a graph is a unit-square graph if it is the unit-square
graph associated to some set of points in the plane. Equivalently,
unit-square graphs can be defined as follows: the vertices correspond to
axis-parallel squares of side length 1 (or any fixed value $r$, the same for all
squares), and two vertices are adjacent if and only if the corresponding
squares intersect. The equivalence can be seen by associating to each
square its center, and to each point the axis-parallel square of side
1 centered in this point. It will sometimes be convenient to consider the two (equivalent)
definitions at once, as each of them has some useful properties.

\medskip

We say that a unit-square graph $G$ is \emph{embedded in the plane} if it
is given by a fixed set $P$ of points as above (or equivalently a set
of unit-squares). The embedding is then referred to as a
\emph{unit-square embedding} of $G$.
Recall that by Proposition~\ref{pro:trifree}, any triangle-free
unit-square graph $G$ embedded in the plane gives rise to a planar
graph embedding of $G$. Note that some planar graph embeddings of $G$
do not correspond to any unit-square embedding.

\medskip

We start with some simple observations about triangle-free unit-square graphs.

\begin{observation}\label{obs:usg1}
Let $G$ be a triangle-free unit-square graph associated to a set
$\mathcal{D}=(D_v)_{v\in V(G)}$ of
unit-squares. Then $G$ has maximum degree 4, and for each vertex $v$
of degree 4 in $G$, each of the four corners of $D_v$ is contained in the
square of a different neighbor of $v$ (and $D_v$ contains the opposite
corner of each of the squares of the neighbors of $v$).
\end{observation}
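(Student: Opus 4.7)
The plan is to place coordinates so that $c_v = (0,0)$ and $D_v = [-\tfrac12, \tfrac12]^2$. Since two unit squares of side $1$ intersect if and only if their centers are at $\ell_\infty$-distance at most $1$, the center $c_u$ of any neighbor $u$ of $v$ lies in $Q := [-1, 1]^2$, and triangle-freeness translates to the condition that the centers of distinct neighbors of $v$ lie at pairwise $\ell_\infty$-distance strictly larger than $1$.

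For the degree bound, I would partition $Q$ into four closed unit sub-squares $Q_1, \dots, Q_4$, each containing exactly one corner of $D_v$. Each $Q_i$ has $\ell_\infty$-diameter $1$, so by the pigeonhole principle no $Q_i$ can contain two centers of neighbors of $v$; in particular $v$ cannot have more than four neighbors.

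Now assume $v$ has exactly four neighbors, with centers $c_1, \dots, c_4$. The $Q_i$'s overlap along the coordinate axes, and I expect the boundary analysis to be the main technical point: I must rule out that any $c_j$ sits exactly on an axis, since otherwise it would be assigned to two quadrants and I could not conclude a bijection. To do this, I would argue by contradiction: suppose some $c_j = (0, y)$ with $y \in [0, 1]$. Then the strict distance condition, combined with $|x| \leq 1$ for all centers, forces the remaining three centers to have $y$-coordinates in $[-1, y - 1)$, a strip of height at most $1$. Hence the pairwise distance $>1$ among these three must come from the $x$-direction, requiring three pairwise $x$-distances $>1$ inside $[-1, 1]$, which is impossible. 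A symmetric argument rules out centers on the $x$-axis. Thus each $c_j$ lies in the interior of some $Q_i$, and combined with the pigeonhole bound from the previous step, the four $c_j$'s occupy the four interiors of the $Q_i$'s bijectively.

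The observation now follows easily: a corner $q$ of $D_v$ belongs to $D_u$ precisely when $c_u$ lies in the sub-square $Q_i$ containing $q$, so the four corners of $D_v$ are covered by four distinct neighbors. Conversely, for the neighbor $u_i$ associated with corner $q$, a short coordinate check shows that the corner of $D_{u_i}$ opposite to $q$, namely $c_{u_i} - (q - c_v)$, lies strictly inside $D_v$, which completes the proof.
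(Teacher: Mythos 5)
Your proof is correct. The paper itself states this as an unproved observation (it is treated as immediate from the definitions), so there is no authorial proof to compare against; your argument — passing to centers, using the paper's own equivalence that two unit squares intersect iff their centers are at $\ell_\infty$-distance at most $1$, translating triangle-freeness into pairwise center-distance strictly greater than $1$, and then running the quadrant/pigeonhole analysis on $[-1,1]^2$ — is exactly the natural formalization the authors intend, and every step checks out, including the axis-exclusion argument for the degree-$4$ case. Two cosmetic remarks: the final corner $c_{u_i}-(q-c_v)$ lies in $D_v$ but not necessarily \emph{strictly} inside it (if $c_{u_i}$ is a corner of $[-1,1]^2$ the squares touch exactly at that corner), and the containment is all the observation claims; also, the axis case analysis can be bypassed entirely by double counting incidences between the four centers and the four closed quadrants (each center lies in at least one quadrant, each quadrant holds at most one center, so each center lies in exactly one quadrant and the assignment is a bijection).
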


Given a triangle-free unit-square graph $G$ associated to a set
$\mathcal{D}=(D_v)_{v\in V(G)}$ of
unit-squares, and a vertex $v$, we denote by $n_{00}(v)$, $n_{10}(v)$, $n_{01}(v)$ and $n_{11}(v)$ the
neighbors of $v$ whose squares intersect the bottom-left,
bottom-right, top-left, and top-right corner of the square of $v$,
respectively (see Figure~\ref{fig:n00}). In
general these vertices might coincide, but since $G$ is triangle-free
there is at most one neighbor of each type.
By Observation~\ref{obs:usg1}, when $v$ has degree 4, all vertices
$n_{00}(v)$, $n_{10}(v)$, $n_{01}(v)$ and $n_{11}(v)$ exist and are
distinct.

\begin{figure}[htb]
\centering
\includegraphics[scale=0.4]{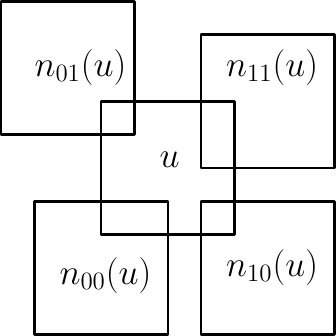}
\caption{A vertex of degree 4 in a triangle-free unit-square graph.}
\label{fig:n00}
\end{figure}

\medskip

For each unit-square graph $G$ embedded in the plane and each
vertex $u$ in $G$, we denote by $x(u)$ and $y(u)$ the $x$- and
$y$-coordinates of the center of the square associated to $u$ in the embedding.

\begin{observation}\label{obs:usg3}
Let $G$ be a triangle-free unit-square graph embedded in the plane,
and let $u,v,w$ be three distinct vertices.\begin{itemize}
\item If $v=n_{11}(u)$ and $w=n_{10}(u)$, then $y(v)> y(w)+1$;
\item If $v=n_{01}(u)$ and $w=n_{00}(u)$, then $y(v)> y(w)+1$;
\item If $v=n_{10}(u)$ and $w=n_{00}(u)$, then $x(v)> x(w)+1$;
\item If $v=n_{11}(u)$ and $w=n_{01}(u)$, then $x(v)> x(w)+1$.
\end{itemize}
\end{observation}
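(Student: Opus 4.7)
The plan is to prove the first bullet in detail; the remaining three are obtained from the same argument after applying the obvious symmetries of the construction (rotations by $90^\circ$ or $180^\circ$ that swap the roles of the two coordinates or of top/bottom and left/right).

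For the first bullet, assume $v = n_{11}(u)$ and $w = n_{10}(u)$. I would begin by turning the definitions into explicit coordinate bounds. The square of $u$ spans $[x(u) - \tfrac12, x(u) + \tfrac12] \times [y(u) - \tfrac12, y(u) + \tfrac12]$. Since the unit-square centered at $(x(v), y(v))$ contains the top-right corner $(x(u) + \tfrac12, y(u) + \tfrac12)$ of $u$'s square, its center satisfies $x(v) \in [x(u), x(u)+1]$ and $y(v) \in [y(u), y(u)+1]$. Similarly, from the fact that $w$'s square contains the bottom-right corner of $u$'s square, one reads $x(w) \in [x(u), x(u)+1]$ and $y(w) \in [y(u)-1, y(u)]$. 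These inequalities immediately give $|x(v) - x(w)| \le 1$ and $y(v) \ge y(u) \ge y(w)$.

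The decisive step then uses triangle-freeness. The vertices $v$ and $w$ are distinct by hypothesis and both lie in $N(u)$, so they cannot be adjacent, and hence their $\ell_\infty$-distance is strictly greater than $1$. Since $|x(v) - x(w)| \le 1$, the gap must come from the $y$-coordinates, so $|y(v) - y(w)| > 1$. Combined with $y(v) \ge y(w)$ this yields $y(v) > y(w) + 1$, as required. I do not anticipate any real obstacle: the argument is a direct unpacking of the definitions, and the only subtle point is ensuring that the inequality $\ell_\infty(v,w) > 1$ is strict, which follows from the fact that adjacency in a unit-square graph corresponds to $\ell_\infty$-distance at most $1$ and that $v \neq w$. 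The remaining three cases of the observation follow by the same reasoning after the appropriate rotation.
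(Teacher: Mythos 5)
Your argument is correct: the coordinate bounds from corner containment, the non-adjacency of $v$ and $w$ forced by triangle-freeness, and the resulting strict $\ell_\infty$-gap exceeding $1$ in the relevant coordinate are exactly the intended reasoning, and the symmetry reduction for the other three bullets is sound. The paper states this as an observation without proof, and your write-up is precisely the direct unpacking of the definitions it has in mind.
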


\begin{figure}[htb]
\centering
\includegraphics[scale=1.2]{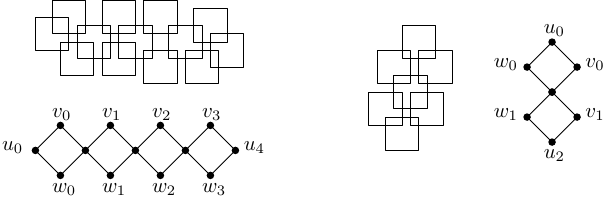}
\caption{A horizontal prop of length 4 (left), a vertical prop of
  length 2 (right),
  and the associated planar graph embeddings (which are unique up to
  reflection, by Observation~\ref{obs:usg5}). }
\label{fig:prop}
\end{figure}

Let $G$ be a triangle-free unit-square graph embedded in the plane.
For $n\ge 1$, a \emph{horizontal prop} of length $n$ in $G$ is a sequence of
distinct vertices $(u_i)_{0\le i\le n}$, $(v_i)_{0\le i\le n-1}$, and
$(w_i)_{0\le i\le n-1}$, such that the following
holds: for each $0\le i \le n-1$, $v_i=n_{11}(u_i)$,
$w_i=n_{10}(u_i)$, and $u_{i+1}=n_{10}(v_i)=n_{11}(w_i)$. 
Similarly, a \emph{vertical prop} of length $n$ in $G$ is a sequence of
distinct vertices $(u_i)_{0\le i\le n}$, $(v_i)_{0\le i\le n-1}$, and
$(w_i)_{0\le i\le n-1}$, such that the following
holds: for each $0\le i \le n-1$, $v_i=n_{10}(u_i)$,
$w_i=n_{00}(u_i)$, and $u_{i+1}=n_{00}(v_i)=n_{10}(w_i)$ (see Figure~\ref{fig:prop}). 
The vertices $u_0$ and $u_n$ are respectively said to be the starting and ending vertex of the
(horizontal or vertical) prop.

\medskip

We easily deduce the following from Observation~\ref{obs:usg3}.

\begin{observation}\label{obs:usg4}
Let $G$ be a triangle-free unit-square graph embedded in the plane,
and let $u_0$ and $u_n$ be the starting and ending vertices of a prop of length $n$ in $G$. If the
prop is horizontal, then $x(u_n)> x(u_0)+n$. If the prop is vertical, then $y(u_0)> y(u_n) + n$.
\end{observation}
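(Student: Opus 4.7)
The plan is to extract, for each index $i$, a single strict inequality relating the positions of $u_i$ and $u_{i+1}$, and then telescope these inequalities over $i=0,\ldots,n-1$. The key point is that although $u_i$ and $u_{i+1}$ are not neighbors in general, they share the common neighbor $v_i$ which realizes them as two of its four corner-neighbors; this is precisely the configuration to which Observation~\ref{obs:usg3} applies.

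First I would record a trivial duality: since a unit square is centrally symmetric, the square of $v$ covers the top-right corner of the square of $u$ if and only if the square of $u$ covers the bottom-left corner of the square of $v$. Because $G$ is triangle-free each corner is covered by at most one neighbor, so the relation $v = n_{11}(u)$ is equivalent to $u = n_{00}(v)$, and analogously the pair $(n_{10},n_{01})$ is self-inverse. This step is immediate from the definitions but is the only place where the geometry is actually used beyond Observation~\ref{obs:usg3}.

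Next, for a horizontal prop, I would fix $0\le i\le n-1$ and work at $v_i$. By definition of a horizontal prop, $v_i = n_{11}(u_i)$, so by duality $u_i = n_{00}(v_i)$; and also by definition $u_{i+1} = n_{10}(v_i)$. The third bullet of Observation~\ref{obs:usg3}, applied at the vertex $v_i$ with these two corner-neighbors, yields $x(u_{i+1}) > x(u_i) + 1$. Summing these $n$ strict inequalities gives $x(u_n) > x(u_0) + n$, as desired. The vertical case is completely symmetric: at $v_i$ one has $u_i = n_{01}(v_i)$ (from $v_i = n_{10}(u_i)$) and $u_{i+1} = n_{00}(v_i)$ by definition, so the second bullet of Observation~\ref{obs:usg3} gives $y(u_i) > y(u_{i+1}) + 1$, which telescopes to $y(u_0) > y(u_n) + n$.

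There is essentially no obstacle: the whole argument is a telescoping sum. The only point that requires a little care is choosing the correct pivot vertex. Applying Observation~\ref{obs:usg3} at $u_i$ would only compare $v_i$ and $w_i$, not $u_i$ and $u_{i+1}$, so the progress of the prop would not be captured. Pivoting at $v_i$ (or, symmetrically, at $w_i$) is essential because it simultaneously places $u_i$ and $u_{i+1}$ as two of its corner-neighbors, which lets a single step of the prop be charged to one application of the observation.
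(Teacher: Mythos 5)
Your proof is correct, and it fills in exactly the deduction the paper leaves implicit: the paper states Observation~\ref{obs:usg4} as an easy consequence of Observation~\ref{obs:usg3} without writing out the argument. Your corner-duality step ($v=n_{11}(u)\Leftrightarrow u=n_{00}(v)$, using triangle-freeness for uniqueness) and the pivot at $v_i$ followed by telescoping is precisely that intended deduction (pivoting at $w_i$ would work equally well).
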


Let $\mathrm{Pr}_n$ denote the graph induced by a (vertical or horizontal) prop of
length $n$. That is, $\mathrm{Pr}_n$ can be obtained from $n$ disjoint 4-cycles
$u_iv_iu_i'w_i$ ($0\le i \le n-1)$ by identifying $u_i'$ with $u_{i+1}$, 
for each $0\le i \le n-2$. 
Then $\mathrm{Pr}_n$ has $3n+1$ vertices.
Consider any fixed embedding of $\mathrm{Pr}_n$ in
the plane as a unit-square graph. Since $\mathrm{Pr}_n$ is triangle-free, this
unit-square 
embedding of $\mathrm{Pr}_n$ also gives a planar embedding of $\mathrm{Pr}_n$ (with the
same circular order of neighbors around each vertex), see Proposition~\ref{pro:trifree}. There are
multiple non-equivalent planar embeddings of $\mathrm{Pr}_n$, however a simple
area computation shows that in any planar graph drawing coming from
a unit-square embedding of $\mathrm{Pr}_n$ each 4-cycle of $\mathrm{Pr}_n$ is a face, distinct
from the outerface, so up to reflection  the resulting planar
embedding is unique. This
implies the following.

\begin{observation}\label{obs:usg5}
Let $G$ be a triangle-free unit-square graph embedded in the plane
and let $H$ be a subgraph of $G$ that is isomorphic to $\mathrm{Pr}_n$. Then $H$ is a vertical or
horizontal prop of length $n$ in $G$.
\end{observation}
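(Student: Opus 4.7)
The proof will rely on the fact, already established just before the statement, that any unit-square embedding of a graph isomorphic to $\mathrm{Pr}_n$ induces a planar embedding in which each of its $n$ four-cycles is a face distinct from the outer face, and such a planar embedding is unique up to reflection. Applying this to $H$, which is isomorphic to $\mathrm{Pr}_n$ and inherits a unit-square embedding from the one of $G$, I begin by fixing the resulting planar embedding of $H$.

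I then focus on an internal vertex $u_i$ with $1 \le i \le n-1$. It has degree $4$ in $H$, and hence degree exactly $4$ in $G$ because $G$ has maximum degree at most $4$ by Observation~\ref{obs:usg1}. The same observation identifies its four neighbors in $G$ with $n_{00}(u_i), n_{10}(u_i), n_{11}(u_i), n_{01}(u_i)$, and Proposition~\ref{pro:trifree} forces them to occur in this cyclic order around $u_i$ (the cyclic order of intersections along the boundary of the square at $u_i$). Because $u_i$ lies in two distinct 4-cycles of $H$ and both are faces, each 4-cycle uses a pair of neighbors that are consecutive in the cyclic order around $u_i$; since the two cycles share only $u_i$, these two pairs are disjoint, hence \emph{opposite} in the cyclic order. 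Only two options remain: $\{n_{00}(u_i),n_{01}(u_i)\}$ together with $\{n_{10}(u_i),n_{11}(u_i)\}$ (left and right, giving a horizontal layout), or $\{n_{00}(u_i),n_{10}(u_i)\}$ together with $\{n_{01}(u_i),n_{11}(u_i)\}$ (bottom and top, giving a vertical layout). Within each pair, Observation~\ref{obs:usg3} distinguishes the two neighbors by comparing $x$- or $y$-coordinates, pinning down each $v_j$ and $w_j$ in terms of the $n_{ab}$'s at $u_i$.

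After establishing the horizontal/vertical dichotomy at one internal vertex, consistency propagates automatically: once $v_i = n_{11}(u_i)$ and $w_i = n_{10}(u_i)$ (horizontal case), the shared-face structure at $u_{i+1}$ forces $v_i = n_{01}(u_{i+1})$ and $w_i = n_{00}(u_{i+1})$, which both yields the required prop relation $u_{i+1} = n_{10}(v_i) = n_{11}(w_i)$ and places $u_{i+1}$ in the horizontal case too; the vertical case is symmetric.

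The main difficulty I anticipate concerns the endpoints $u_0$ and $u_n$, and the degenerate case $n=1$, where the rigidity argument via a degree-$4$ vertex is not directly available. For $n \ge 2$ I plan to deduce $v_0 = n_{11}(u_0)$ and $w_0 = n_{10}(u_0)$ from the identities $v_0 = n_{01}(u_1)$ and $w_0 = n_{00}(u_1)$ (already forced at $u_1$) combined with the fact that the 4-cycle $u_0 v_0 u_1 w_0$ is a face of the planar embedding, so that $u_0$ is positioned in the unique geometric location on the far side of this face; the argument at $u_n$ is symmetric. For $n=1$ I would argue directly: the geometry of four pairwise intersecting unit-squares forming a face, together with Observation~\ref{obs:usg3}, admits only the horizontal and vertical configurations.
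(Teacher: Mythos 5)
Your proof is correct and takes essentially the same route as the paper: the paper deduces the observation directly from the fact established just before it (in any unit-square embedding of $\mathrm{Pr}_n$ every 4-cycle is a face distinct from the outer face, so the planar embedding is unique up to reflection) and leaves the remaining step implicit. Your proposal simply spells out that implicit step — the corner/rotation analysis at degree-4 vertices via Proposition~\ref{pro:trifree} and Observations~\ref{obs:usg1} and~\ref{obs:usg3}, the propagation along the prop, and the endpoint and $n=1$ cases — all of which checks out.
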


Observation \ref{obs:usg5} gives a way to force some squares to align
vertically or horizontally. We will now use this to obtain a
construction with two specific vertices $v_1$ and $v_2$, and a
sequence of vertices $u_1,\ldots,u_n$ such that in any unit-square
embedding, the segment $[v_1v_2]$ is close to a diagonal and
$u_1,\ldots,u_n$ are close to be evenly spaced on this segment. We
call the latter property \emph{almost-perfect rigidity} (see the proof
below for more details).

\smallskip

We are now ready to prove the main result of this section.

\begin{theorem}\label{thm:usquare}
  The class of triangle-free unit-square graphs is $(6,
  1)$-disjointness-expressing.
\end{theorem}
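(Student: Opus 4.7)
The plan is to follow the same template as Theorems~\ref{thm:penny}, \ref{thm:squareGrid}, and \ref{thm:1planar}: build $L(A)$ and $R(B)$ whose common part is a small, $A$- and $B$-independent cutset $S$, and attach encoding gadgets to a rigid backbone so that the only embedding-level conflict is exactly the existence of some $i\in A\cap B$. The backbone will be a horizontal prop, whose geometric rigidity is controlled by Observations~\ref{obs:usg3}, \ref{obs:usg4}, and~\ref{obs:usg5}.

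First, I would define $L(A)$ to consist of a horizontal prop $P_L$ of length $N+c$ (for a small constant $c$ guaranteeing room at the ends), together with, for each $i\in A$, a small \emph{bump gadget} $D^L_i$ of constant size attached ``above'' the $i$-th 4-cycle of the prop. The gadget $D^L_i$ is chosen so that (a) its presence is compatible with triangle-freeness and the unit-square property, (b) when the prop is embedded as a horizontal prop, $D^L_i$ must occupy one specific unit-square position strictly above the $i$-th 4-cycle. Symmetrically, $R(B)$ is a horizontal prop $P_R$ of the same length $N+c$ with bumps $D^R_i$ attached ``below'' the $i$-th 4-cycle for every $i\in B$. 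The special set $S$ consists of the $c$ extreme vertices at each end (at most $2c$ vertices in total, chosen so that $|N[S]|\le 6$), and $g(L(A),R(B))$ is obtained by identifying these endpoints.

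Next, I would verify the three axioms of $(6,1)$-disjointness-expressing. Condition~\ref{item:total-size} is immediate since $|V(L(A))|, |V(R(B))|=O(N)$, and $S$ has constant size so $|V(g(L(A),R(B)))|\le \alpha N$. Condition~\ref{item:indep-size} holds because $S$ and its closed neighborhood are entirely contained in the constant-size ``caps'' at the two ends of the props, which are present regardless of $A,B$. For Condition~\ref{item:no-intersection}, the forward direction ($A\cap B=\emptyset \Rightarrow g(L(A),R(B))$ is a triangle-free unit-square graph) is constructive: place $P_L$ and $P_R$ as two horizontal props reflections of each other through the common baseline joining $S$, and put each $D^L_i$ / $D^R_i$ in its unique designated square position above/below the baseline.

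The hard part will be the reverse direction: showing that any triangle-free unit-square embedding of $g(L(A),R(B))$ forces $A\cap B=\emptyset$. This is where all the rigidity of Section~\ref{sec:usg} will be used. I plan to argue as follows. Since $g(L(A),R(B))$ contains two copies of $\mathrm{Pr}_{N+c}$ sharing $S$, Observation~\ref{obs:usg5} forces each copy to be a horizontal or vertical prop in the embedding. Using Observation~\ref{obs:usg4} together with the identification at $S$, I would rule out the ``vertical'' case and show both props are horizontal with the same starting and ending $x$-coordinates, hence occupy (up to reflection) aligned columns of unit-squares. Since the two props share the vertices in $S$ but are disjoint elsewhere, Observation~\ref{obs:usg1} plus the planar-embedding uniqueness afforded by Whitney's theorem (Theorem~\ref{thm:whi}) applied to the 3-connected core of the baseline, will force $P_L$ and $P_R$ to sit on opposite sides of the shared baseline. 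Then the bump $D^L_i$ and the bump $D^R_i$, if both present, are forced into a single designated square at a position determined by $i$; but they belong to two disjoint vertex sets, so they cannot share the same unit-square, and I will conclude that either we create a forbidden triangle or we violate the unit-square constraint. This yields $A\cap B=\emptyset$ and finishes the proof. The main obstacle is precisely this last rigidity argument, in particular ruling out ``vertical'' embeddings of the props and handling the global reflection symmetry so that the two collections of bumps do end up on opposite sides of the baseline.
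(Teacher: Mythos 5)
There is a genuine gap, and it is exactly at the step you flag as ``the hard part''. A prop is only rigid \emph{longitudinally and in one coordinate}: Observation~\ref{obs:usg4} gives $x(u_n)>x(u_0)+n$ for a horizontal prop, but the graph distance between its ends is about $2n$, the horizontal advance per 4-cycle can be anywhere between $1$ and $2$, and the vertical drift per 4-cycle can be anything up to (almost) $1$ in either direction. Consequently, two props of length $N+c$ that are glued only at constant-size end caps can both be embedded as horizontal props and still be completely out of alignment in the middle: one can bulge upward and the other downward by $\Theta(N)$, and their $i$-th 4-cycles need not even have comparable $x$-coordinates (one prop can advance at rate close to $2$ early and close to $1$ late, the other the opposite). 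So the claim that the bumps $D^L_i$ and $D^R_i$ are ``forced into a single designated square at a position determined by $i$'' fails; no collision is forced when $i\in A\cap B$, and Condition~\ref{item:no-intersection} breaks. Whitney's theorem does not help here: it controls the combinatorial planar embedding, not the metric realization, and the metric slack just described is compatible with the unique combinatorial embedding.

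The paper's proof supplies precisely the missing mechanism. It does not glue two props at their ends; instead each of $L(A)$ and $R(B)$ is a \emph{full} copy of a rigid frame $G_k$ consisting of a cycle $C$ of length $16k+16$ together with \emph{two crossing props} joining the two antipodal pairs $v_1,v_3$ and $v_2,v_4$ of $C$. The props force $d_\infty(v_1,v_3),d_\infty(v_2,v_4)>8k+6$ while the cycle gives $d_\infty(v_i,v_{i+1})\le 4k+4$, and the triangle inequality then forces every quarter-arc of $C$ to be metrically tight in \emph{both} coordinates; this ``almost-perfect rigidity'' pins each vertex of the encoding path between $v_1$ and $v_2$ inside an $O(1)$-size square determined by its index, which is what makes the attached $3\times 3$-grid gadgets of the two copies collide when $i\in A\cap B$. (The cycle also fixes the outer face, which settles the inside/outside placement of the gadgets — another degree of freedom your baseline-free construction leaves open.) The two copies are then joined only through two pendant special vertices $c_1,c_2$ attached to $v_1,v_2$, which makes $N[S]$ have size $6$ and independent of $A,B$. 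If you want to salvage your plan, you would have to replace each bare prop backbone by such a closed, cross-braced frame (or some other device forcing $\delta$-tightness of the encoding path in both coordinates); the two-props-sharing-endpoints backbone by itself cannot work.
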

\begin{proof}
Consider the graph $G_k$ depicted in Figure~\ref{fig:Gk}, where $k\ge 1$ is an 
integer. It consists of:

\begin{itemize}
\item
  a cycle $C$ of length $16k+16$, depicted with
  bold black edges in Figure~\ref{fig:Gk};
\item a copy of $\mathrm{Pr}_{8k+6}$ with endpoints $v_1, v_3\in C$ with $v_1$ and $v_3$ antipodal\footnote{Two vertices $u,v$
  in an even  cycle $C$ are said to be \emph{antipodal} if $u$ and $v$
  lie at distance $\tfrac12|V(C)|$ on $C$.} on $C$;
  \item another copy of $\mathrm{Pr}_{8k+6}$ with endpoints $v_2, v_4\in C$,
  such that $v_2$ and $v_4$ are antipodal on $C$ and, for any
  $1\le i \le 4$, $v_i$ and $v_{i+1}$ are at distance $4k+4$ on $C$
  (where indices are taken modulo 4 plus 1). Moreover the two props intersect in their middle, forming
    a star on 5 vertices (depicted in bold red edges in Figure~\ref{fig:Gk});
      \item on the subpath of $C$ of length $4k+4$ between $v_1$ and
        $v_2$, there exists a unique vertex set $I$ of size $k$ in
        which all vertices are at distance at least 4 from $v_1$ and
        $v_2$ on $C$, and any two vertices of $I$ are at
        distance at least 4 on $C$.
        For each vertex $v\in I$, we create two copies of a
        3 by 3 grid and add an edge between $v$ and a vertex of degree
        3 in each of the two grids. The $2k$ copies of the grid are denoted by $H_1,H_1',
        \ldots,H_k,H_k'$ in order from $v_1$ and $v_2$. 
      \end{itemize}

         \begin{figure}[htb]
\centering
\includegraphics[scale=0.6]{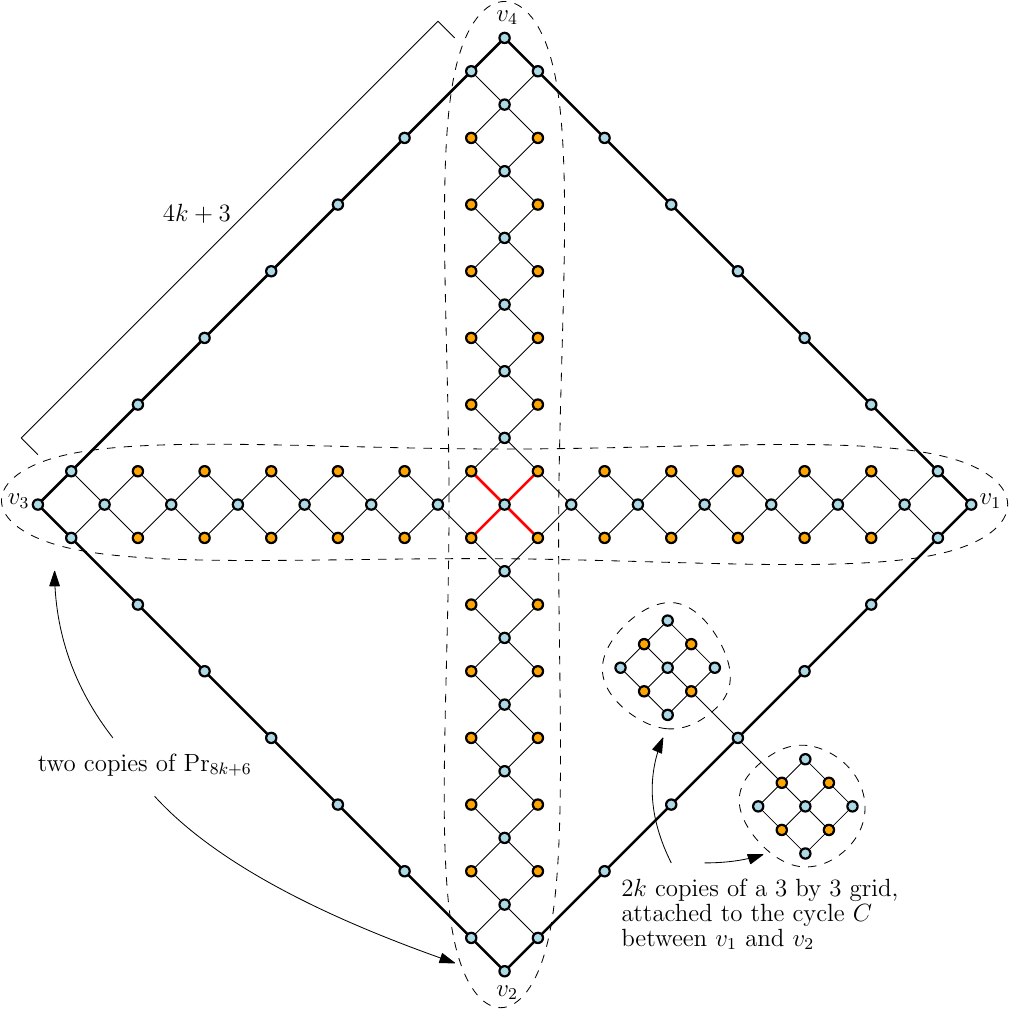}
\caption{The graph $G_k$ for $k=1$.}
\label{fig:Gk}
\end{figure}
     
      Note that $G_k$ contains exactly $(16k+16)+(48k+21)+18k=82k+37=O(k)$ vertices.      
Intuitively, up to a few technical vertex additions, two copies of $G_k$ will be used as $L(X)$ and $R(X)$ respectively, for $X\subseteq \{1, \ldots, k\}$, where $H_i$ and $H_i'$ will be removed if $i\notin X$.
Before we explain how the two copies of $G_k$ are glued together, we
analyze the properties of $G_k$.  First of all, $G_k$ can easily be realized as a
unit-square graph (see the top-left part of Figure~\ref{fig:Gk2}, where colored
vertices are represented by a square of the same color). We note that
we have only represented $G_1$ so it is not immediately clear that
several copies $H_1,H_1', \ldots,H_k,H_k'$ can fit together without
overlapping. This simply follows from the fact that consecutive copies
indexed $i,i+1$ 
of the 3 by 3 grid 
are attached to vertices lying at distance 4 on $C$.

Fix any embedding of $G_k$ as a unit-square
graph. For each vertex $v\in V(G)$, we also write $v$ for the center of
the square associated to $v$ in this embedding. Writing $d_\infty$ for
the $\ell_\infty$-distance and $d_G$ for the distance in $G$, it
follows from the definition of a unit-square graph that for any two
vertices $u,v\in V(G)$, $d_\infty(u,v)\le d_G(u,v)$. In particular
$d_\infty(v_i,v_{i+1})\le 4k+4$ and $d_\infty(v_i,v_{i+2})  \le 8k+8$ for any $1\le i \le 4$ (with indices
taken modulo 4 plus 1). By Observation~\ref{obs:usg5} and up to rotation and
reflection, we can assume that the prop with
endpoints $v_1$ and $v_3$ is a horizontal prop starting in $v_3$ and
ending in $v_1$, and that the second prop is a vertical prop
starting in $v_4$ and ending in $v_2$, precisely as in Figure~\ref{fig:Gk}.

 By Observation~\ref{obs:usg4}, $d_\infty(v_1,v_{3})  > 8k+6$ and
 $d_\infty(v_2,v_4)  > 8k+6$.
 As $d_\infty(v_i,v_{i+1})\le 4k+4$ for any $1\le i \le 4$, this
 implies (by the triangle inequality) that $d_\infty(v_i,v_{i+1})>
 4k+2$ for any such $i$ (with indices
taken modulo 4 plus 1). Hence, we have proved that $4k+2<
d_\infty(v_1,v_2)\le 4k+4$. Using Observation~\ref{obs:usg4}, this
implies that \[4k+2 <|x(v_1)-x(v_2)|\le 4k+4 \text{ and } 4k+2
<|y(v_1)-y(v_2)|\le 4k+4,\] where $x(\cdot)$ and $y(\cdot)$ denote the
$x$- and $y$-coordinates of the points as before.
    Let us denote by $u_1, u_2, \ldots,
u_{4k+5}$ the vertices on the subpath $P$ of $C$ of length $4k+4$
between $v_1$ and $v_2$ (with $u_1=v_1$ and $u_{4k+5}=v_2$). The
previous result implies that for any $1\le i \le j \le 4k+5$,
\[j-i-2<
|x(u_i)-x(u_j)|\le j-i \text{ and } j-i-2<
|y(u_i)-y(u_j)|\le j-i.\] In particular, if we translate and rotate
the vertices of the embedding of $G_k$ such that $v_1$ lies at coordinates $(0,0)$ and $v_2$
lies at $\ell_\infty$-distance at most 2 from $(4k+4,4k+4)$, then for
any $1\le i  \le 4k+5$, $u_i$ lies in the square with corners
$(i-3,i-3)$ and $(i-1,i-1)$. We call this property \emph{the
  almost-perfect rigidity} of $G_k$.

The paragraph above also shows that if we only consider the subgraph
of $G_k$ induced by the vertices
of $C$ and the vertices of the two props, in any unit-square
embedding of this graph the outerface of the embedding must be bounded
by $C$ (otherwise one of the subpaths dividing $C$ would have to be
much longer than what is possible). By Observation~\ref{obs:usg1}, for
each vertex $v\in I$, one of the copies of the 3 by 3 grid attached to
$v$ lies inside
$C$ while the other must lie outside $C$ (it can be checked that the
two copies cannot intersect two adjacent corners of a given square,
since otherwise the two copies would overlap).
In conclusion, the planar graph embedding corresponding to a
unit-square embedding of $G_k$ is unique (up to  reflection), and
corresponds precisely to the planar embedding depicted in  Figure~\ref{fig:Gk}.

\medskip

\begin{figure}[htb]
\centering
\includegraphics[scale=0.8]{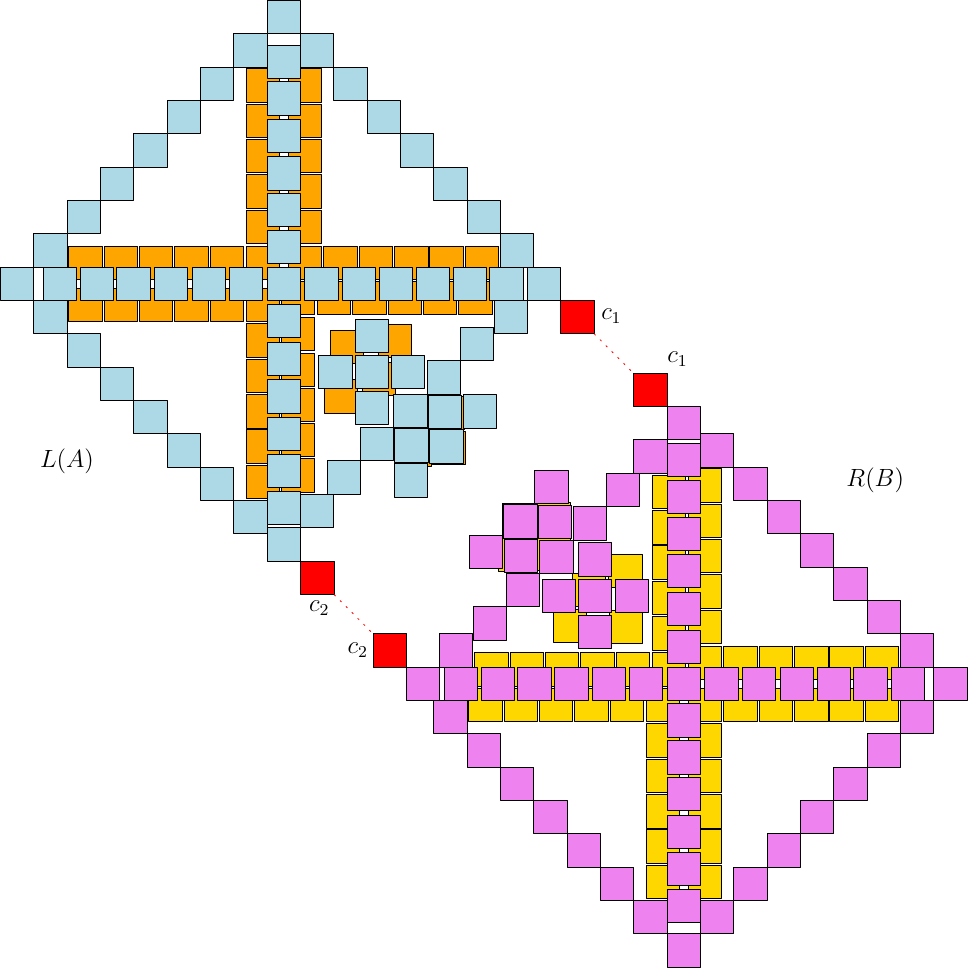}
\caption{$L(A)$ and $R(B)$ in the proof of Theorem~\ref{thm:usquare},
  when $k=1$ and $A=B=\{1\}$.}
\label{fig:Gk2}
\end{figure}

We now define $G_k'$ as the graph obtained from $G_k$ by adding a
vertex $c_1$ adjacent to $v_1$ and a vertex $c_2$ adjacent to
$v_2$, and setting $S=\{c_1,c_2\}$ as a set of special vertices. For every $X\subseteq \{1,\ldots,N\}$, we set $L(X)$ and $R(X)$
as $G_N'$, in which we delete all copies $H_i$ and $H_i'$ ($1\le i \le N$) such
that $i\not\in X$. For $A,B\subseteq \{1,\ldots,N\}$, $g(L(A),R(B))$
is obtained from $L(A)$ and $R(B)$ by gluing them along their special
vertices. This is illustrated in  Figure~\ref{fig:Gk2}, where $L(A)$ and $R(B)$ have
been drawn disjointly, for the sake of clarity, and in  Figure~\ref{fig:Gk3},
where only the interface between $L(A)$ and $R(B)$ was
represented. As illustrated in Figure~\ref{fig:Gk3} (left), it follows
from the almost-perfect rigidity of $G_k$ that if $H_i$ and $H_i'$ are present both in $L(A)$ and
$R(B)$, then some square of $H_i$ or $H_i'$ in  $L(A)$ must intersect
some square of $H_i$ or $H_i'$ in $R(B)$, which is a contradiction as
there are no edges between these copies in $g(L(A),R(B))$.

The results
obtained above show that for any $A,B\subseteq
\{1,\ldots,N\}$, $g(L(A),R(B))$ is triangle-free, and it a unit-square graph if and only if
$A$ and $B$ are disjoint. As $G_N'$ and $G_N$ have $O(N)$ vertices and
the subgraph induced by the closed neighborhood of $S$ is independent of $A$ and $B$ and contains
6 vertices, the class of triangle-free unit-square graphs is $(6, 1)$-disjointness-expressing.
\end{proof}

\begin{figure}[htb]
\centering
\includegraphics[scale=0.8]{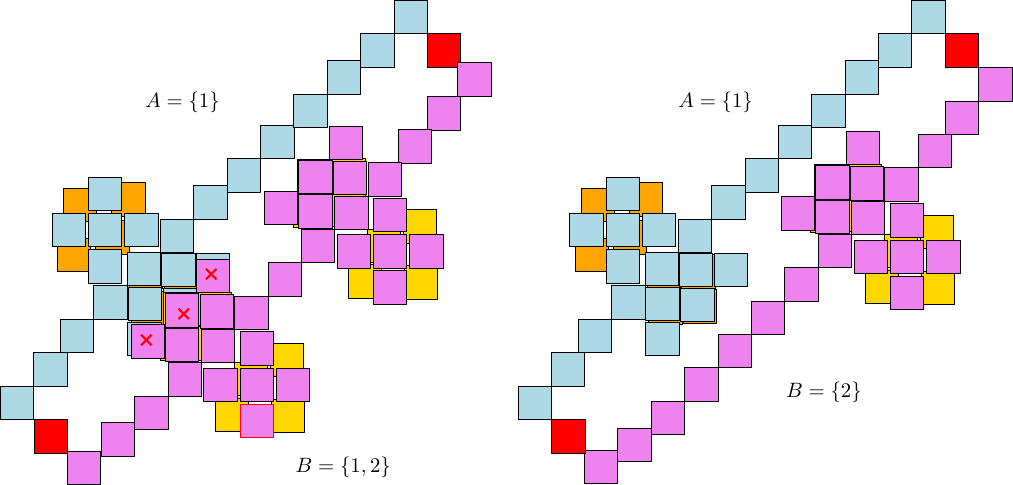}
\caption{The interface between $L(A)$ and $R(B)$ when $k=2$ and $A=\{1\}$ and
  $B=\{1,2\}$ (left), and when $A=\{1\}$ and
  $B=\{2\}$ (right). The red crosses indicate the squares of $H_i$ or $H_i'$ in $L(A)$ that intersect those of $H_i$ or $H_i'$ in $R(B)$.}
\label{fig:Gk3}
\end{figure}

In the proof of Theorem~\ref{thm:usquare} we have shown that when $A$
and $B$ are not disjoint, then the resulting graph $g(L(A),R(B))$ is not a
unit-square graph. Using Remark~\ref{rmk:disjointness implied by other class}, we obtain the following as a
direct consequence.

\begin{corollary}\label{cor:usquare}
  The class of unit-square graphs is $(6,
  1)$-disjointness-expressing.
\end{corollary}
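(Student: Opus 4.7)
The plan is to apply Remark~\ref{rmk:disjointness implied by other class}\ref{item: get bigger class disjointness-expressing} with $\mathcal{C}^-$ being the class of triangle-free unit-square graphs and $\mathcal{C}$ being the class of all unit-square graphs. Since Theorem~\ref{thm:usquare} already supplies functions $L$ and $R$ witnessing that $\mathcal{C}^-$ is $(6,1)$-disjointness-expressing, it suffices to verify the additional hypothesis of that remark: namely, that for every $A,B\subseteq\{1,\dots,N\}$ with $A\cap B\ne\emptyset$, the graph $g(L(A),R(B))$ fails to be a unit-square graph (not merely a triangle-free one).

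For this step I would simply revisit the rigidity argument given in the proof of Theorem~\ref{thm:usquare} and observe that it already delivers the stronger conclusion. Indeed, the contradiction obtained there when some $i\in A\cap B$ did not come from the creation of a triangle, but from the fact that the almost-perfect rigidity of the subgraph $G_k'$ (which forces the positions of the vertices $u_1,\dots,u_{4k+5}$ of the path $P$ to lie within unit squares of prescribed coordinates, up to the global reflection/rotation fixed by the two props) forces the squares associated to vertices of $H_i,H_i'$ on the $L(A)$ side to overlap those on the $R(B)$ side. This overlap by itself violates the definition of a unit-square representation, since no edges exist between these grid copies in $g(L(A),R(B))$. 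Hence $g(L(A),R(B))\notin\mathcal{C}$.

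With that verification in hand, the conclusion is immediate from Remark~\ref{rmk:disjointness implied by other class}\ref{item: get bigger class disjointness-expressing}, and the parameters $(s,\formerepsilon)=(6,1)$ are preserved. I do not anticipate any real obstacle here; the only point requiring care is making explicit that the rigidity argument of Theorem~\ref{thm:usquare} indeed rules out \emph{all} unit-square embeddings (including those that would create extra triangles), which is already the content of the overlap argument and therefore requires no new geometric input.
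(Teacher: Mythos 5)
Your proposal is correct and follows the paper's own proof exactly: the paper likewise invokes Remark~\ref{rmk:disjointness implied by other class}, Item~\ref{item: get bigger class disjointness-expressing}, noting that the rigidity argument in the proof of Theorem~\ref{thm:usquare} already shows that $g(L(A),R(B))$ admits no unit-square embedding at all when $A\cap B\neq\emptyset$. (Your worry about embeddings ``creating extra triangles'' is moot, since $g(L(A),R(B))$ is triangle-free as an abstract graph, so every unit-square representation of it is automatically a representation of a triangle-free unit-square graph.)
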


Using Theorem~\ref{thm:dex}, together with Corollaries~\ref{cor:linear} and \ref{cor:nlogn}, we immediately deduce the following.

\begin{theorem}
  The local complexity of the class of triangle-free unit-square graphs
is $\Theta(n)$, and the local complexity of the class of unit-square
graphs is $\Omega(n)$ and $O(n\log n)$.
\end{theorem}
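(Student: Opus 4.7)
The plan is to assemble the four bounds in the statement by feeding the preceding results into one another; all four conclusions are direct consequences of machinery already in place.

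For the triangle-free lower bound, I would apply Theorem~\ref{thm:dex} to the class of triangle-free unit-square graphs. By Theorem~\ref{thm:usquare} this class is $(6,1)$-disjointness-expressing, i.e.\ $s=6$ (a constant) and $\formerepsilon=1$. Theorem~\ref{thm:dex} then yields a local complexity lower bound of $\Omega(n^{\formerepsilon}/s)=\Omega(n)$. For the matching upper bound, I would invoke Corollary~\ref{cor:linear}, where triangle-free unit-square graphs are already listed as having local complexity $O(n)$ (the justification there being that by Proposition~\ref{pro:trifree} they are planar, hence of bounded twin-width and in particular a tiny class). Combining the two bounds gives $\Theta(n)$ for the triangle-free case.

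For the general case, the lower bound comes from applying Theorem~\ref{thm:dex} to the class of all unit-square graphs, which by Corollary~\ref{cor:usquare} is likewise $(6,1)$-disjointness-expressing; this again produces $\Omega(n)$. The matching upper bound is $O(n\log n)$ from Corollary~\ref{cor:nlogn}, which in turn relies on the bound of $2^{O(n\log n)}$ on the number of unlabelled unit-square graphs on $n$ vertices due to~\cite{MCDIARMID2014413}.

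In short, the theorem is purely a synthesis: no new construction or rigidity argument is needed beyond what has already been developed. The only verification required is that the parameters produced by Theorem~\ref{thm:usquare} and Corollary~\ref{cor:usquare} are exactly those ($s$ constant, $\formerepsilon=1$) for which the conclusion of Theorem~\ref{thm:dex} specializes to an $\Omega(n)$ bound, which is immediate. There is no real obstacle; the step that carries the content is Theorem~\ref{thm:usquare}, whose proof has already been given.
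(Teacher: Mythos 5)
Your proposal is correct and matches the paper's argument exactly: the paper deduces this theorem immediately from Theorem~\ref{thm:dex} applied to Theorem~\ref{thm:usquare} and Corollary~\ref{cor:usquare} for the $\Omega(n)$ lower bounds, together with Corollaries~\ref{cor:linear} and \ref{cor:nlogn} for the $O(n)$ and $O(n\log n)$ upper bounds. Nothing is missing; the content indeed lies in Theorem~\ref{thm:usquare}, already proved.
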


 We note that the proof approach of Theorem~\ref{thm:usquare} naturally extends to
higher dimension.

\section{Unit-disk graphs}\label{sec:udg}

\subsection{Definition}

The Euclidean distance between two points $x$ and $y$ in the plane is
denoted by $d_2(x,y)$, to avoid any confusion with the
$\ell_\infty$-distance $d_\infty$ considered in the previous section, and the
distance $d_G(u,v)$ between two vertices $u$ and $v$ in a graph $G$.
Given a set $P$ of points in the plane, the \emph{unit-disk graph}
associated to $P$  is the graph with vertex set $P$ in which two
points are adjacent if and only if their Euclidean distance is at most
1. We say that a graph is a unit-disk graph if it is the unit-disk
graph associated to some set of points in the plane. Equivalently,
unit-disk graphs can be defined as follows: the vertices correspond to
disks of radius $\tfrac12$ (or any fixed radius $r$, the same for all
disks), and two vertices are adjacent if and only if the corresponding
disks intersect. In particular, penny graphs are unit-disk graphs.

\subsection{Discussion}\label{sec:outudg}

We would like to prove a variant of Theorem~\ref{thm:usquare} for
unit-disk graphs, but there are two major obstacles. The first is that
there does not seem to be a simple equivalent of a horizontal or vertical prop in unit-disk
graphs, that is a unit-disk graph with $O(n)$ vertices with two specified
vertices that are at Euclidean distance at least $n$ in any unit-disk
embedding. Our construction of such a graph will be significantly more
involved. The second obstacle comes
from Pythagoras' theorem: In the unit-square case, if we consider a
path $P$ of length $n+O(1)$ between two vertices $u,v$ embedded in the
plane such  that their $x$- and $y$-coordinates both differ by exactly $n$, then in any unit-square embedding of $P$, the
vertices of $P $ deviate by at most a constant from the line segment $[u,v]$
between $u$ and $v$. This is what we used in the proof of the previous
section to make sure
that $L(A)$ and $R(B)$ (intuitively, the left and right parts) are so close that the $i$-th gadget cannot exist both on $L(A)$ and $R(B)$ simultaneously when $i\in A\cap B$ 
(i.e., subsets $A$ and $B$ must be disjoint for the graph to be in the class).
However, Pythagoras' theorem implies that
in the Euclidean case, when the Euclidean distance between $u$ and $v$
is equal to $n$, the vertices of $P$ can deviate by
$\Theta(\sqrt{n})$ from the line segment $[u,v]$, which is
too much for our purpose (we need a constant deviation). So we need
different ideas to make sure the gadgets 
are embedded sufficiently close to each other.

\medskip

\begin{figure}[htb]
\centering
\includegraphics[scale=0.8]{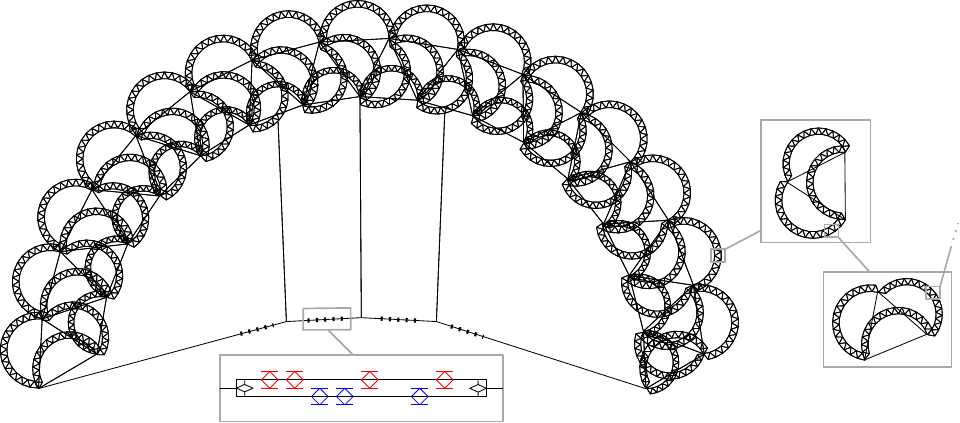}
\caption{A summary of the construction used in the proof of Theorem~\ref{thm:udg}.}
\label{fig:resume}
\end{figure}

A summary of our construction is depicted in Figure~\ref{fig:resume}. We now describe the different steps of the
construction in details. In order to not disrupt the flow of reading, the technical proofs from
this section have been deferred to the appendix (with the same
numbering). This is marked by a symbol \apx{sec:prstri} which links to
the relevant appendix section.

\subsection{Stripes}\label{sec:stri}

In this section we discuss the properties of triangles and structures
called stripes, that will be used later to describe the behavior of
unit disks.
For $k>0$ and $\delta\in [0,k]$,  a triangle in the plane is said to be
\emph{$(k,\delta)$-almost-equilateral} if all sides have length at least
$k-\delta$ and at most $k+\delta$. By the law of cosines
and the approximation $\arccos(1/2-x)=\pi/3+O(x)$ as $x\to 0$, we have
the following.

\begin{observation}\label{obs:angle}
All angles in a $(k,\delta)$-almost-equilateral triangle with $k \gg \delta$ are between
$\tfrac{\pi}3-O(\delta/k)$ and $\tfrac{\pi}3+O(\delta/k)$.
\end{observation}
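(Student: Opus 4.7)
The plan is to apply the law of cosines directly and then to linearize $\arccos$ around $1/2$, exactly as the paragraph preceding the observation already hints.

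More concretely, I would first fix an angle $\theta$ in the triangle, opposite to the side of length $c \in [k-\delta,k+\delta]$, with the two adjacent sides of lengths $a,b \in [k-\delta,k+\delta]$. The law of cosines gives
\[
\cos\theta \;=\; \frac{a^2+b^2-c^2}{2ab}.
\]
For an exactly equilateral triangle ($a=b=c=k$) this evaluates to $1/2$, so the task is to estimate how far $\cos\theta$ can drift from $1/2$ when $a,b,c$ are each perturbed by at most $\delta$.

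I would then write $a=k+\alpha$, $b=k+\beta$, $c=k+\gamma$ with $|\alpha|,|\beta|,|\gamma|\le\delta$, expand the numerator and denominator to leading order in $\delta/k$, and check that
\[
\cos\theta \;=\; \frac{1}{2} \;+\; O\!\left(\frac{\delta}{k}\right)
\]
under the assumption $k \gg \delta$ (so that the $\delta^2$ and $(\delta/k)^2$ remainders are absorbed into the $O(\delta/k)$ term, and the denominator $2ab=2k^2(1+O(\delta/k))$ stays safely away from $0$). The calculation is routine bookkeeping of first-order terms and I would not grind through it in the write-up beyond stating the bound.

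Finally, I would invoke the stated Taylor expansion $\arccos(1/2-x)=\pi/3+O(x)$ as $x\to 0$ (which follows from the fact that $\arccos$ is smooth at $1/2$ with derivative $-2/\sqrt{3}$) to translate the additive error on $\cos\theta$ into an additive error on $\theta$ itself, yielding $\theta = \pi/3 + O(\delta/k)$. Since the angle was arbitrary, the same bound applies to all three angles.

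There is no real obstacle here; the only minor point to be careful about is ensuring that $\cos\theta$ stays in a neighborhood of $1/2$ where the linearization of $\arccos$ is valid, which is guaranteed by the hypothesis $k\gg\delta$. I would mention this explicitly so the reader sees why the regime $k\gg\delta$ is used and not merely $k\ge\delta$.
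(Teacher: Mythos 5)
Your proposal is correct and follows exactly the route the paper intends: the observation is stated as an immediate consequence of the law of cosines together with the expansion $\arccos(1/2-x)=\pi/3+O(x)$, which is precisely your argument (the paper gives no further detail). Your added remark about why $k\gg\delta$ is needed for the linearization is a fair, minor elaboration of the same approach.
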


For $\ell\ge 0$, the \emph{stripe} $S_\ell$ with vertex set
$u_0,v_0,\ldots,u_\ell,v_\ell$ is the graph defined as follows:
$u_0v_0$ is an edge and for any $i\ge 1$, $u_i$ is adjacent to
$u_{i-1}$ and $v_{i-1}$, and $v_i$ is adjacent to $v_{i-1}$ and
$u_i$. Note that this graph can also be obtained from a sequence of
triangles by gluing any two consecutive triangles on one of their
edges. The vertices $u_0$ and $v_\ell$ are called the \emph{ends} of
the stripe $S_\ell$.

We say that the stripe $S_\ell$ has a \emph{$(k,\delta)$-almost-equilateral embedding}
in the plane if the vertices $u_0,v_0,\ldots,u_\ell,v_\ell$ are
embedded in the plane in such way that all triangles of $S_\ell$ are
$(k,\delta)$-almost-equilateral (see Figure~\ref{fig:stripe1} for an
illustration where $k \gg \delta$).

          \begin{figure}[htb]
\centering
\includegraphics[scale=0.6]{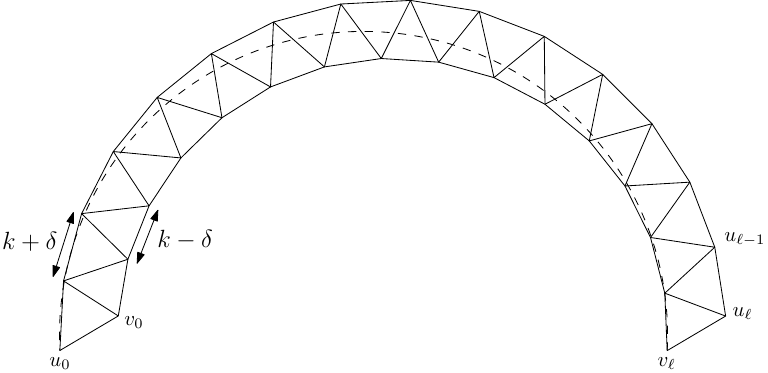}
\caption{A $(k,\delta)$-almost-equilateral embedding of a stripe, where
  the Euclidean distance between the ends is minimized.}
\label{fig:stripe1}
\end{figure}

\medskip

We show that in a $(k,\delta)$-almost-equilateral embedding
of the stripe that minimizes the Euclidean distance between its ends,
the vertices of the stripe are close to a circular arc whose radius
only depends on  $(k,\delta)$. The \emph{Menger curvature} of a triple of points $a,b,c$ is the
    reciprocal of the radius of the circle that passes through $a$,
    $b$, and $c$. 

\begin{restatable}[\apx{sec:prstri}]{lemma}{lemcurv}\label{lem:curv}
For every $k$ and $\delta=o(k)$, there are  $\rho=\rho(k,\delta)$ and $\rho'=\rho'(k,\delta)$
such
that for any $\ell$, the following holds. Consider a $(k,\delta)$-almost-equilateral embedding
of the stripe $S_\ell$ that minimizes $d_2(u_0,v_\ell)$. Then (up to
changing all $u_i$'s by $v_i$'s and vice versa), all  triples
$v_{i-1},v_i,v_{i+1}$ have the same Menger curvature $1/\rho$, and all
triples $u_{i-1},u_i,u_{i+1}$ have the same Menger curvature
$1/\rho'$. In particular all vertices $v_i$ lie on some circular arc
of radius $\rho$, and all vertices $u_i$ lie on some circular arc
of radius $\rho'$. 
\end{restatable}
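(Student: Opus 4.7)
The plan is to combine compactness with a local exchange argument to pin down the minimiser's shape, and then read off the circular structure from iterated rigid motions in the plane.

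First, I would note that, modulo rigid motions of the plane, the space of $(k,\delta)$-almost-equilateral embeddings of $S_\ell$ is compact: since each of the $4\ell+1$ edges has length at most $k+\delta$, every vertex lies in a bounded region once $u_0v_0$ is fixed. Continuity of $d_2(u_0,v_\ell)$ then yields a minimiser $C^\star$. Such an embedding is determined by the $4\ell+1$ edge lengths in $[k-\delta,k+\delta]$ together with a side-choice per newly added triangle; planarity of the stripe forces these choices to alternate canonically, so the only free parameters are the edge lengths, and the embedding depends smoothly on them.

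Next, I would group the triangles into $\ell$ blocks, where block $i$ (for $i=1,\ldots,\ell$) consists of the two triangles $u_{i-1}v_{i-1}u_i$ and $v_{i-1}u_iv_i$, and view block $i$ as a rigid motion $M_i\in\mathrm{SE}(2)$ taking the pair $(u_{i-1},v_{i-1})$ to $(u_i,v_i)$. The endpoint $v_\ell$ is then the image of $v_0$ under $M_\ell\circ\cdots\circ M_1$. The crucial step is to show that at $C^\star$ all blocks define the same motion $M$: if two consecutive blocks had different edge-length parameters, then averaging these parameters (which stays inside $[k-\delta,k+\delta]$ by convexity of that interval) and inserting two copies of the averaged block in place of the original two would, via a second-order analysis of the composite displacement, strictly decrease $d_2(u_0,v_\ell)$, contradicting minimality. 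The hypothesis $\delta=o(k)$, together with Observation~\ref{obs:angle}, keeps every block close to the perfectly equilateral one, which is the regime where the second-order expansion is well-controlled and the required local convexity holds.

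Finally, iterating a fixed rigid motion $M$ of the plane produces an orbit that lies on a circle: either $M$ is a pure translation (a limiting straight-stripe case corresponding to $\rho=\rho'=\infty$, i.e.\ Menger curvature $0$), or $M$ is a rotation about a unique centre $c$ by some angle $\theta=\theta(k,\delta)$. In the rotation case, $v_i=M^iv_0$ lies on the circle of radius $\rho=|v_0-c|$ about $c$, and $u_i=M^iu_0$ lies on the concentric circle of radius $\rho'=|u_0-c|$; both radii depend only on the common block shape and hence only on $(k,\delta)$. Consequently every triple $v_{i-1},v_i,v_{i+1}$ has Menger curvature $1/\rho$ and every triple $u_{i-1},u_i,u_{i+1}$ has curvature $1/\rho'$, and the clause allowing the $u_i$'s and $v_i$'s to be interchanged absorbs the reflection symmetry between the two chains. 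The main obstacle will be making the exchange/convexity step fully rigorous in the presence of the box constraints $[k-\delta,k+\delta]$ on each edge length, since one has to rule out the case where equalising two blocks pushes some edge length onto the boundary in a way that forbids the perturbation.
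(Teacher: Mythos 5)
Your overall strategy (pass to a minimizer, argue that all local building blocks of the minimizer coincide, then read off constant Menger curvature from iterating a single rigid motion) is the same shape as the paper's argument, but the step that carries all the weight is exactly the one you leave unproved. The claim that replacing two unequal consecutive blocks by two copies of their ``average'' \emph{strictly} decreases $d_2(u_0,v_\ell)$ is asserted via an unspecified ``second-order analysis,'' and you yourself flag it as the main obstacle. This is not a routine verification: the end-to-end distance is a highly non-convex function of the block parameters (rotation angles add, translation parts compose non-commutatively), the relevant minimizer lies on the \emph{boundary} of the box $[k-\delta,k+\delta]$ for the edge lengths (it is the extremal, maximally-turning configuration), and once the cumulative turning of the chain approaches or exceeds $\pi$ the monotonicity/convexity structure you would need can fail. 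So as written the proposal is a plausible plan rather than a proof. (A smaller but related soft spot: your reduction to ``edge lengths plus a canonical side-choice'' invokes planarity of the stripe, whereas the definition of a $(k,\delta)$-almost-equilateral embedding only constrains the triangle side lengths; ruling out folded configurations is part of what a rigorous version of the exchange step must handle.)

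There is a second gap even if the exchange step were granted: the lemma asserts $\rho=\rho(k,\delta)$ and $\rho'=\rho'(k,\delta)$ \emph{for any} $\ell$, but your argument only yields that at a minimizer all blocks equal some common block, which a priori could depend on $\ell$ (e.g.\ chosen so that the total turning hits a particular value); your sentence ``both radii depend only on the common block shape and hence only on $(k,\delta)$'' is a non sequitur at that point. The paper's proof avoids both issues at once by a monotonicity-plus-locality argument: $d_2(u_0,v_\ell)$ is minimized by simultaneously minimizing each distance $d_2(u_i,u_{i+1})$ and maximizing each angle $\angle u_{i-1}u_iu_{i+1}$, and since these constraints are purely local and identical at every index, the optimal local configuration is the same extremal one everywhere and is determined by $(k,\delta)$ alone — which gives both the equality of all Menger curvatures and their independence of $\ell$. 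To repair your route you would need (i) a genuine proof of the strict-decrease-under-averaging claim valid at boundary points of the constraint box (or a replacement argument identifying the minimizer's blocks as the extremal ones directly), and (ii) an argument that the common block is the $(k,\delta)$-extremal one, not merely that the blocks agree.
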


Assume that $\delta>0$. By Observation~\ref{obs:angle}, the maximum angle between the lines $u_{i}v_i$
and $u_{i+1}v_{i+1}$ is of order $O(\delta/k)$. 
Hence, there exists a
constant $\alpha$ (independent of $k$ and $\delta$) such that if
$\ell=\alpha k/\delta$, the maximum angle between
$u_0v_0$ and $u_{\ell}v_\ell$ in any $(k,\delta)$-almost-equilateral embedding is close to $\pi$.
In this case, by Lemma~\ref{lem:curv} the
minimum (Euclidean) distance $m$ between $u_0$ and $v_\ell$ in any $(k,\delta)$-almost-equilateral embedding is of order
$\Theta(k/\delta \cdot k)=\Theta(k^2/\delta)$. Moreover, any $(k,\delta)$-almost-equilateral embedding of
$S_{\ell}$ realizing this minimum $m$ is close to some semicircle
with
endpoints $u_0$ and $v_\ell$, in the sense that all the vertices of
$S_\ell$ lie at distance $O(k)$ from the semicircle (see Figure~\ref{fig:stripe1}). We will need a looser version of this observation in the
slightly weaker setting where $d_2(u_0,v_\ell)\le m+1$, instead of
$d_2(u_0,v_\ell)=m$.

\begin{restatable}[\apx{sec:prstri}]{lemma}{lemcurvv}\label{lem:curv2}
Let $\delta>0$, $k\gg\delta$ and $\ell=\lceil \alpha k/\delta\rceil$ as above, and let $m= \Theta(k^2/\delta)$ be the minimum Euclidean distance between $u_0$ and $v_\ell$
in any $(k,\delta)$-almost-equilateral embedding of $S_\ell$. Consider a $(k,\delta)$-almost-equilateral embedding of $S_\ell$ where
$d_2(u_0,v_\ell)\le m+1$. Let $c$ be the midpoint of the segment
$[u_0,v_\ell]$. Then no vertex of the stripe $S_\ell$ is contained in
the disk of center $c$ and radius $m/2-O(k)$, and all vertices of the
stripe $S_\ell$ are contained in a disk of center $c$ and  radius  $O(m)=O(k^2/\delta)$.
\end{restatable}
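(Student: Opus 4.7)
Place $u_0=(0,0)$ and $v_\ell=(d,0)$ with $d=d_2(u_0,v_\ell)\in[m,m+1]$, so $c=(d/2,0)$. The outer containment is immediate: any two vertices of $S_\ell$ are joined by a graph path with at most $2\ell+1$ edges of length at most $k+\delta$, so the Euclidean diameter of $S_\ell$ is at most $(2\ell+1)(k+\delta)=O(k^2/\delta)=O(m)$, which places every vertex within $O(m)$ of $c$. The bulk of the proof concerns the hole of radius $m/2-O(k)$.

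\textbf{Reducing to closeness to the minimum embedding.} By Lemma~\ref{lem:curv}, the minimum-distance embedding $E^\ast$ has its vertices on two concentric circular arcs of radii $\approx m/2$ with $u_0,v_\ell$ antipodal, centered at $c$. The plan is to show that any $(k,\delta)$-almost-equilateral embedding $E$ with $d_2(u_0,v_\ell)\leq m+1$ is $O(k)$-Hausdorff close to $E^\ast$ (after matching corresponding vertices and fixing $u_0=0$, $v_\ell=(d,0)$), from which the inner hole follows directly.

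\textbf{Turning angles and total turning.} Work with the midpoint path $m_i=(u_i+v_i)/2$. By Observation~\ref{obs:angle}, the angle between successive cross-edges $u_iv_i$ and $u_{i+1}v_{i+1}$ is at most $c_0\delta/k$ for an absolute constant $c_0$, so the per-step turning $\psi_i$ of the midpoint direction is bounded by $c_0\delta/k$, and the total signed turning $T=\sum\psi_i$ satisfies $|T|\leq c_0\alpha\leq\pi+o(1)$. For feasible paths with these constraints, the chord $d_2(m_0,m_\ell)$ equals $(2L/T)\sin(T/2)$ up to a small error, with $L\approx\ell k\sqrt{3}/2$; this function is strictly decreasing in $T\in(0,\pi]$, takes value $m$ at $T=\pi$, and has derivative $-m/\pi$ there. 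Hence $d\leq m+1$ forces $\pi-T=O(k/m)=O(\delta/k)$, and combined with the per-step cap the shortfalls $\eta_i:=c_0\delta/k-\psi_i\geq 0$ satisfy $\sum\eta_i=O(\delta/k)$: the turning profile is nearly saturated at its geometric cap, hence nearly uniform.

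\textbf{Main obstacle.} The delicate step is translating this global near-saturation of the turning profile into a pointwise $O(k)$ bound on positional deviation between $E$ and $E^\ast$. A Duhamel-type expansion writes $\Delta m_i=\sum_{j<i}\eta_j v_{ij}$ where $v_{ij}\in\mathbb{R}^2$ encodes the rotation of the tail past step $j$ by $\eta_j$; since each $v_{ij}$ has magnitude at most $O(m)$ (all vertices live in a common $O(m)$-disk), this naively yields $|\Delta m_i|=O(m\cdot\delta/k)=O(k)$. The care required is three-fold: first, ensuring the boundary constraint $v_\ell=(d,0)$ is compatible with the expansion (it imposes two scalar closure conditions on the $\eta_j$'s which prevent a single $\eta_j$ from displacing the tail by more than $O(k)$); second, adding the lateral $O(k)$ drift arising from the accumulated cross-edge orientation error across the stripe's width $k/2$; and third, handling both signs of the near-saturation (corresponding to the two orientations of the semicircle). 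Once these are addressed, every stripe vertex lies within $O(k)$ of a point at distance $\approx m/2$ from $c$, and in particular none enters the disk of radius $m/2-O(k)$.
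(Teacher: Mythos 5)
Your outer containment (diameter at most $(2\ell+1)(k+\delta)=O(m)$) is exactly the paper's argument and is fine. The inner bound, however, rests on a step that is not valid: you treat the chord $d_2(m_0,m_\ell)$ as a function of the \emph{total} turning $T$ alone, via ``the chord equals $(2L/T)\sin(T/2)$ up to a small error''. That identity holds only for (near-)constant-curvature profiles; for a general turning profile the chord depends on \emph{where} the turning sits. To first order, unbending by $\eta_j$ at a vertex whose angular position along the arc is $\phi_j$ increases the chord by about $\rho\,\eta_j\sin\phi_j$ (the lever arm degenerates at the two ends), so the constraint $d\le m+1$ only gives the weighted bound $\sum_j \eta_j\sin\phi_j=O(1/\rho)=O(\delta/k^2)$, not your unweighted $\sum_j\eta_j=O(\delta/k)$. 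Concretely, one can straighten the first $J\approx 1/\sqrt{\delta}$ steps completely (total slack $\approx\sqrt{\delta}/k$) while increasing the chord by only $\rho\phi_J^2/2\le 1$; for $\delta<1$ this exceeds your claimed slack, and feeding the corrected slack into your ``naive'' Duhamel bound gives $O(m\cdot\sqrt{\delta}/k)=O(k/\sqrt{\delta})$, not $O(k)$. So the quantitative heart of your plan does not close as stated, and your final paragraph explicitly defers (rather than resolves) the remaining issues (closure conditions at $v_\ell$, lateral drift, signs), which is precisely where the position-dependent weights would have to be played off against the position-dependent lever arms.

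The paper sidesteps global stability of the extremal embedding entirely. It applies Lemma~\ref{lem:curv} to every \emph{sub-stripe} between indices $i$ and $j$ to get pairwise lower bounds $d_2(u_i,u_j)\ge 2\rho\sin\bigl(\tfrac{j-i}{\ell}\cdot\tfrac{\pi}{2}\bigr)$, then, for each fixed $i$, compares the area (equivalently, the altitude through $u_i$) of the triangle $u_0u_iu_\ell$ with the right-angled configuration in which $u_0,u_i,u_\ell$ lie on the circle of radius $\rho$ with $[u_0u_\ell]$ a diameter; since the two sides are bounded below and the base exceeds its minimum by only $O(k)$, each $u_i$ is pinned within $O(k)$ of that circle, which gives the forbidden inner disk of radius $m/2-O(k)$ directly. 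If you want to keep your turning-angle route, you would need a genuinely quantitative Schur-type stability argument with the $\sin\phi_j$ weights; the per-vertex triangle comparison is substantially shorter.
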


\subsection{Quasi-rigid graphs}\label{sec:qrg}

Our next goal is to construct a sequence of unit-disk graphs
$T_n$ on $O(n^2)$ vertices, with two vertices
$u$ and $v$ such that in any unit-disk representation of $T_n$,
$d_2(u,v)=\Omega(n)$.

We define $T_n$ as follows. We consider two adjacent vertices $u,v$ of
the infinite square grid graph and define $X$ as the set of vertices at
distance at most $2n+1$ from $u$ or $v$ in the square grid graph, and $Y\subseteq
X$ as the set of vertices at distance exactly $2n+1$ from $u$ or
$v$. Note that $|X|=2(2n+2)^2=8(n+1)^2$ and $|Y|=8n+6$. The vertices of $Y$ are denoted by $y_1,\ldots,y_{8n+6}$. The graph
$T_n$ is obtained from the subgraph of the square grid graph induced by $X$
by adding, for each vertex $y_i$ of $Y\subseteq
X$, a vertex $c_i$ adjacent
to $y_i$. We finally add edges to form a cycle $C$ containing all vertices
$c_i$ in order, together with 10 new vertices in the corners (2 or 3 at each corner, see
Figure~\ref{fig:tn}). In Figure~\ref{fig:tn}, the vertices of $C$ are
depicted in black, the vertices of $X$ are blue or red, and the
vertices of $Y$ are the blue or red vertices that are
adjacent to a vertex of $C$.

The  cycle $C$ has length $8n+16$. Note that the
resulting graph $T_n$ contains $8n^2+24n+24=O(n^2)$ vertices and is a planar triangle-free unit-disk graph (see
Figure~\ref{fig:tn}). A
simple area computation shows that in any unit-disk embedding of $T_n$,
$C$ bounds the outerface of the corresponding planar graph
embedding (by the arguments of Section~\ref{sec:gc} there is a unique
planar graph embedding in the sphere, and as all faces except $C$ have
size at most 8, $C$ must be the outerface).

          \begin{figure}[htb]
\centering
\includegraphics[scale=0.6]{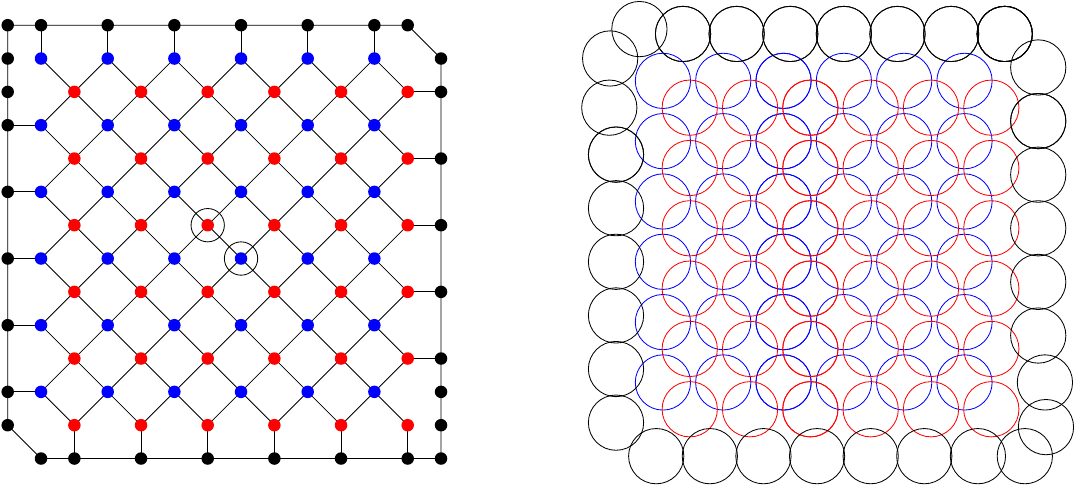}
\caption{The graph $T_n$ with $n=2$ (left), with the central vertices $u$ and $v$ circled ; and a unit-disk embedding of
  $T_n$ (right).}
\label{fig:tn}
\end{figure}

\begin{restatable}[\apx{sec:prqrg}]{lemma}{lemtk}\label{lem:tk}
Let $c$ and
$c'$ be two antipodal vertices on the cycle $C$ in $T_n$. In
any unit-disk embedding of $T_n$, $d_2(c,c')\ge \left (\pi\sqrt{2}-4 \right ) n-O(\sqrt{n})=\Omega(n)$.
\end{restatable}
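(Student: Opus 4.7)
The plan is to combine three ingredients: (i) identification of the outer face of any unit-disk embedding, (ii) a lower bound on the enclosed area via an independent-set packing, and (iii) an isoperimetric inequality for simple closed curves passing through two fixed boundary points. For (i), note that $T_n$ is triangle-free: the grid $X$ is bipartite, and each $c_i$ has only three neighbors (two cycle-neighbors on $C$ and one boundary vertex $y_i$ of $X$). By Proposition~\ref{pro:trifree}, any unit-disk embedding of $T_n$ therefore induces a consistent planar graph embedding; all interior faces have size at most $8$, whereas $C$ has $8n+16$ edges, so a face-count argument forces $C$ to bound the outer face. Hence $C$ traces a Jordan curve whose bounded region $R$ contains every vertex of $X$.

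For (ii), since $X$ is bipartite with color classes of size $4(n+1)^2$, there is an independent set $I\subseteq V(T_n)$ of that size. In the embedding, the open disks of radius $\tfrac12$ around the vertices of $I$ are pairwise disjoint (total area $\pi(n+1)^2$) and lie inside $R\oplus B(0,\tfrac12)$. A Steiner-type inequality (applied, if necessary, to the convex hull of $R$, whose area is $\ge \mathrm{Area}(R)$ and whose perimeter is $\le \mathrm{length}(\partial R)$) yields $\mathrm{Area}(R\oplus B(0,\tfrac12))\le\mathrm{Area}(R)+\tfrac12\,\mathrm{length}(\partial R)+O(1)$. Combining this with the trivial perimeter bound $\mathrm{length}(C)\le 8n+16$ (each of the $8n+16$ edges of $C$ has Euclidean length at most $1$), we obtain $\mathrm{Area}(R)\ge\pi n^{2}+\Omega(n)$.

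For (iii), the antipodal vertices $c,c'$ split $C$ into two arcs of graph-length $4n+8$, hence Euclidean length at most $4n+8$ each. By a standard isoperimetric argument (calculus of variations, with Lagrange multipliers enforcing the length and endpoint constraints), for any simple closed Jordan curve through two fixed points at Euclidean distance $D=d_2(c,c')$ and with total length $L$, the maximum enclosed area is attained by the symmetric lens formed by two equal circular arcs; for such a lens with half central angle $\alpha\in(\pi/2,\pi)$, one has
\[
D=\frac{L\sin\alpha}{2\alpha},\qquad \mathrm{Area}=\frac{L^{2}(2\alpha-\sin 2\alpha)}{16\alpha^{2}}.
\]
Substituting the area and perimeter bounds above gives the constraint $(2\alpha-\sin 2\alpha)/\alpha^{2}\ge\pi/4+\Omega(1/n)$. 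Since both $\sin\alpha/\alpha$ and $(2\alpha-\sin 2\alpha)/\alpha^{2}$ are decreasing on $(\pi/2,\pi)$, the minimum admissible $D$ corresponds to the largest admissible $\alpha$, which is bounded by the root $\alpha^{*}\in(\pi/2,\pi)$ of $(2\alpha^{*}-\sin 2\alpha^{*})/(\alpha^{*})^{2}=\pi/4$. Substituting back and tracking error propagation through the implicit function theorem then yields $D\ge(\pi\sqrt{2}-4)n-O(\sqrt{n})$.

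The main obstacle is the isoperimetric step of (iii), since a priori the region $R$ need not be convex, and the natural Steiner symmetrization along the perpendicular bisector of $[c,c']$ does not preserve the property that $c$ and $c'$ lie on the boundary. This is overcome via the direct calculus-of-variations argument, which pins down each arc as a circular arc of constant curvature and then forces equal radii by the symmetry of the extremizer. The closed-form constant $\pi\sqrt{2}-4$ then emerges from the explicit transcendental computation at $\alpha^{*}$, and the $O(\sqrt{n})$ error term absorbs the slack in the area estimate and the perturbation analysis near $\alpha^{*}$.
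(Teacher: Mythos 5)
Your overall strategy (outer face identification, independent-set packing for an area lower bound, then an isoperimetric argument tied to the two points $c,c'$) is in the right spirit, but the route through a ``lens'' isoperimetric inequality for the whole curve has concrete problems as written. First, the key claim of step (iii) is false as stated: among simple closed curves of total length $L$ passing through two points at distance $D$, the maximum enclosed area is \emph{not} the symmetric lens --- whenever $D\le L/\pi$ one can simply draw the circle of perimeter $L$ through the two points, which encloses $L^{2}/4\pi$, strictly more than any non-circular lens. What saves you is the constraint you mention but do not use in the lemma: each of the two arcs of $C$ separately has Euclidean length at most $4n+8$. With that per-arc cap, Dido's problem applied to each arc (relative to the chord $[c,c']$) does give the symmetric lens as the extremizer, and your monotonicity analysis on $\alpha\in(\pi/2,\pi)$ then goes through (one must also note that the branch $\alpha<\pi/2$ forces $D\ge 2\ell/\pi$, which is even larger). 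Second, the area step is not established by your argument: applying Steiner to the \emph{convex hull} of $R$ lower-bounds the area of the convex hull, not $\mathrm{Area}(R)$, and it is $\mathrm{Area}(R)$ that enters the lens inequality. The correct fixes are either to discard the $O(n)$ independent-set vertices within distance $\tfrac12$ of $\partial R$ (so the remaining half-unit disks lie inside $R$), or to use a tube bound for the $r$-neighborhood of a (possibly non-convex) Jordan curve; either way one gets $\mathrm{Area}(R)\ge \pi n^{2}-O(n)$ rather than $\pi n^{2}+\Omega(n)$, which is still sufficient since an $O(n)$ slack in area shifts $D$ by only $O(1)$. Third, the constant $\pi\sqrt{2}-4$ does not ``emerge'' from your transcendental equation: the root of $(2\alpha^{*}-\sin 2\alpha^{*})/(\alpha^{*})^{2}=\pi/4$ gives $D\ge \ell\sin\alpha^{*}/\alpha^{*}\approx 0.449\,n$, a \emph{different} (slightly better) constant; since the lemma is a lower bound this is harmless, but the domination of $(\pi\sqrt{2}-4)n\approx 0.443\,n$ must be verified, not asserted.

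For comparison, the paper avoids all of this with one cut: split the region bounded by $C$ along the chord $[c,c']$ into $R_1$ and $R_2$, each bounded by one arc of $C$ (length at most $4n+8$) plus the segment of length $d:=d_2(c,c')$. After discarding the $O(n)$ independent-set vertices near $\partial R$ or the chord, one of the two regions contains at least $2(n+1)^{2}-O(n)$ disjoint disks of radius $\tfrac12$, so has area at least $\tfrac{\pi}{2}(n+1)^{2}-O(n)$, while its perimeter is at most $4n+8+d$. The plain isoperimetric inequality then gives $(4n+8+d)^{2}\ge 2\pi^{2}(n+1)^{2}-O(n)$, i.e.\ $d\ge(\pi\sqrt{2}-4)n-O(\sqrt{n})$ --- this is exactly where the closed-form constant comes from, with no variational analysis or transcendental equation needed. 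If you repair the two gaps above, your whole-curve lens argument would also prove the lemma (indeed with a marginally better constant), but the chord-splitting trick is the cleaner path.
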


Let $f(n)$ be the infimum Euclidean distance between two antipodal
vertices $c$ and $c'$ of the cycle $C$ 
in a unit-disk embedding of $T_n$. By Lemma~\ref{lem:tk},
$f(n)=\Omega(n)$. Since $C$ has length $O(n)$, it follows that
$f(n)=\Theta(n)$. Assume for simplicity that the infimum $f(n)$ is a minimum
(otherwise we work with a sequence of unit-disk embeddings such
that the Euclidean distance between $c$ and $c'$ tends to $f(n)$).
Let $Z_n$ be the point set of a unit-disk embedding of $T_n$ in
which $d_2(c,c')=f(n)$. Add $\lceil f(n)\rceil -1$ points, evenly
spaced on the line-segment $[c,c']$ with respect to the Euclidean distance (note that together with $c$
and $c'$, any two consecutive
points on $[c,c']$ lie at Euclidean distance at most 1 apart).
Let $Z_n'$
denote the resulting point set and $T_n'$ be
the resulting unit-disk graph. Note that $T_n'$ has $O(n^2)$ vertices
and in any unit-disk graph embedding of $T_n'$, $f(n)\le d_2(c,c')\le
f(n)+1$, with $f(n)=\Omega(n)$.

\medskip

We say that an infinite family of unit-disk graphs $\mathcal{T}$ is
\emph{quasi-rigid with density $g$} if for arbitrarily large $k$ there
is a graph $G\in \mathcal{T}$ with at most $g(k)$ vertices, and two specific vertices
$x$ and $y$ in $G$ such
that in any unit-disk embedding of $G$, $k\le d_2(x,y)\le
k+1$. Using this terminology, the family $\mathcal{T}^2=(T_n')_{n\ge 1}$ is
quasi-rigid with density $O(k^2)$. We will now see how to construct
increasingly sparser
quasi-rigid classes, starting with  $\mathcal{T}^2$ and using stripes.

\medskip

        \begin{figure}[htb]
\centering
\includegraphics[scale=0.8]{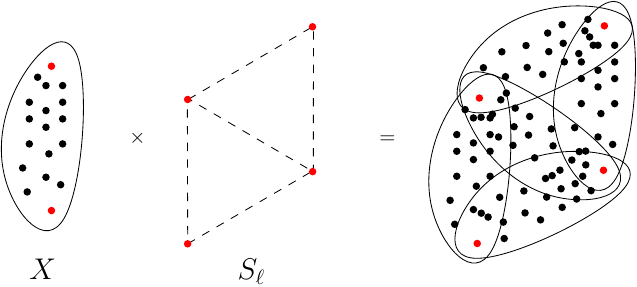}
\caption{Copies of a point set are glued along some stripe.}
\label{fig:stripe2}
\end{figure}

Assume we have found a family $\mathcal{T}^{1+c}$ which is
quasi-rigid with density $g(k)=O(k^{1+c})$, for some $c\ge 0$. Take some
graph $G \in \mathcal{T}^{1+c}$ with $g(k)$ vertices and that has two vertices $x$ and $y$
such that in any unit-disk embedding of $G$, $k\le d_2(x,y)\le
k+1$. Let $X$ be the point set associated to some unit-disk embedding of $G$, and let
$d:=d_2(x,y)$ in this point set ($k\le d \le k+1$). Consider a $(d,0)$-almost equilateral embedding of a  stripe
$S_\ell$ (for some integer $\ell\ge 0$ whose value will be fixed later), and for each edge $uv$ in $S_\ell$, add an isometric copy of
$X$ in which $x$ is identified with $u$ and $y$ is identified with
$v$ (by definition of $X$ we can always find such an isometric copy of
$X$). See Figure~\ref{fig:stripe2} for an illustration. We denote by $X^{(\ell)}$ the resulting point set, and by
$G^{(\ell)}$ the resulting unit-disk graph. Note that the different
copies of $X$ interact and thus $G^{(\ell)}$ contains more edges than
the unions of copies of $G$. Fix any unit-disk embedding of
$G^{(\ell)}$, and consider the points corresponding to $x$ and $y$ in
each of  the copies of $G$ in $G^{(\ell)}$. 
By the properties of $G$, the union of all these points form a $(d,1)$-almost equilateral embedding of a  stripe
$S_\ell$. Let $a$ and $b$ be the ends of this stripe. By Lemma~\ref{lem:curv2}, there exists a constant
$\alpha>0$ such that if we set $\ell=\lceil \alpha d\rceil=O(k)$, the following holds. Let $m$ be
the minimum distance between $a$ and $b$ in any unit-disk
embedding of  $G^{(\ell)}$. Then
$m=\Theta(d^2)=\Theta(k^2)$. As before we consider a point set realizing this
distance and we add $\lceil m\rceil -1$ evenly spaced points on the
segment $[a,b]$. We denote the resulting unit-disk graph by $H$ and we
note that in any unit-disk graph embedding of $H$, $m\le
d_2(a,b)\le m+1$. The point set $X^{(\ell)}$ has size at most $5\ell
g(k)=O(k^{2+c})$ as there are $4\ell+1$ edges in $S_\ell$.
We have added at most $m=O(d^2)=O(k^2)$ vertices to the graph, so $H$
has $O(k^{2+c})$ vertices. As $m=\Omega(k^2)$, the family of all
graphs $H$ created in this way is quasi-rigid with density
$g'(k)=O(k^{1+c/2})$.

\medskip

By iterating this construction, starting with $\mathcal{T}^2$ and using 
Lemma~\ref{lem:curv2}, we obtain the following result. See
Figure~\ref{fig:resume} for a depiction of the construction.

\begin{observation}\label{obs:cclscha}
  For any $\varepsilon>0$ there is a
family of unit-disk graphs which is quasi-rigid with density
$O(k^{1+\varepsilon})$.
More precisely,
  for any sufficiently large $k$ there is a unit-disk
graph $G$ with $O(k^{1+\varepsilon})$ vertices with two
vertices $a$ and $b$ such that in any unit-disk
embedding of $G$, $k\le d_2(a,b)\le k+1$, and $a$ and $b$ are
joined by a path $P$ of length at most $k+1$ such that at least $k/2$
consecutive vertices of $P$ lie at Euclidean distance at least
$\Omega(k)$ from $G-P$.
\end{observation}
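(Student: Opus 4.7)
The plan is to establish the statement by iterating the construction described in the paragraphs immediately preceding it, starting from the base family $\mathcal{T}^2=(T_n')_{n\ge 1}$, which is quasi-rigid with density $O(k^2)$. Writing the density exponent as $1+c$, we start with $c=1$ and decrease $c$ at each iteration.

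First I would verify the recursive step. Given a quasi-rigid family with density $O(k^{1+c})$, the construction preceding the observation (gluing $O(k)$ isometric copies of the witness graph $G$ to the edges of a $(d,0)$-almost equilateral stripe $S_\ell$ with $\ell=\lceil\alpha d\rceil=O(k)$, then adding $\lceil m\rceil-1=\Theta(k^2)$ evenly spaced points on the segment $[a,b]$) produces a graph $H$ on $O(k^{2+c})$ vertices whose designated endpoints $a,b$ satisfy $m\le d_2(a,b)\le m+1$ with $m=\Theta(k^2)$ in any unit-disk embedding. Setting $k':=m$, the vertex count is $O((k')^{1+c/2})$, yielding a new exponent $c'=c/2$. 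This is precisely the analysis already laid out in the text. Iterating this step $r$ times from $c=1$ gives a quasi-rigid family with density $O(k^{1+1/2^r})$; picking $r=\lceil\log_2(1/\varepsilon)\rceil$ (a constant depending only on $\varepsilon$) yields density $O(k^{1+\varepsilon})$ and settles the first part.

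For the \emph{moreover} clause, the path $P$ consists of $a$, $b$, and the $\Theta(k)$ evenly spaced points on $[a,b]$ introduced at the final iteration; since consecutive vertices are at Euclidean distance at most $1$, they induce a path of length $\lceil d_2(a,b)\rceil\le k+1$. To locate $k/2$ consecutive vertices of $P$ that are far from $G-P$, I would consider the $\lceil k/2\rceil$ central vertices of $P$, each of which lies at Euclidean distance at most $k/4+O(1)$ from the midpoint $c$ of $[a,b]$. Applying Lemma~\ref{lem:curv2} to the top-level stripe (which contains all of $G-P$) shows that no vertex of this stripe lies in the disk of center $c$ and radius $m/2-O(d)=k/2-o(k)$, where $d=\Theta(\sqrt{k})$ is the edge-length parameter of the stripe at the top level. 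Combining the two bounds, each central vertex of $P$ is at distance at least $k/2-k/4-o(k)=\Omega(k)$ from every vertex of $G-P$, as required.

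The main technical obstacle is bookkeeping through the recursion: one must verify that the constants hidden in $O(\cdot)$, $\Omega(\cdot)$ and $\Theta(\cdot)$ at each level depend only on $\varepsilon$ (so that composing $r=O_\varepsilon(1)$ steps does not cause a blowup), and that at each iteration the points of $X^{(\ell)}$ induced by the copies of $G$ glued along $S_\ell$ do indeed form a $(d,1)$-almost equilateral stripe regardless of the chosen unit-disk embedding, so that Lemma~\ref{lem:curv2} applies in the exact form used above. Both issues are handled by the fact that the iteration depth $r$ is a constant depending only on $\varepsilon$, together with the quasi-rigidity guarantee propagated from the preceding level.
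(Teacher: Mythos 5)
Your proposal follows the paper's own route: the paper gives no separate proof of Observation~\ref{obs:cclscha} beyond the iterated construction described immediately before it, and your argument (a constant number $r=\lceil\log_2(1/\varepsilon)\rceil$ of halving steps starting from $\mathcal{T}^2$, with the path $P$ taken to be the evenly spaced points added on $[a,b]$ at the last step, and Lemma~\ref{lem:curv2} supplying the empty disk around the midpoint $c$) is exactly that argument, with the bookkeeping made explicit.

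One step of your justification of the \emph{moreover} clause is misstated, though: the top-level stripe does \emph{not} contain all of $G-P$. The stripe vertices are only the images of $x$ and $y$ in each glued copy; the bulk of $G-P$ consists of the remaining vertices of those copies (including the helper segments added at earlier iterations), about which Lemma~\ref{lem:curv2} says nothing directly, and in an arbitrary unit-disk embedding these vertices are not a priori confined to the stripe. The fix is short: each copy has $O(k^{1/2+\varepsilon})=o(k)$ vertices (assuming $\varepsilon<1/2$, as we may), and in a connected unit-disk graph Euclidean distance is at most graph distance, so every vertex of a copy stays within $o(k)$ of its anchor stripe vertices; combined with the exclusion disk of radius $k/2-O(\sqrt{k})$ around $c$ from Lemma~\ref{lem:curv2}, all of $G-P$ lies at distance $k/2-o(k)$ from $c$. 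Relatedly, your bound ``$k/4+O(1)$'' for the distance from the central vertices of $P$ to $c$ should be $k/4+O(\sqrt{k})$: the path $P$ is only $1$-tight, so its vertices may deviate from the segment $[a,b]$ by $\Theta(\sqrt{k})$ (cf.\ Lemma~\ref{lem:regpath}). Both corrections still yield the required $\Omega(k)$ separation, so the proof goes through after these adjustments.
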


\subsection{Tied-arch bridges}\label{sec:tab}

In the previous subsection we have constructed unit-disk graphs with a pair of
vertices $a,b$ whose possible Euclidean distance in any unit-disk
embedding lies in some interval $\left[m,m+1\right]$. This is still too much for our purposes, because Pythagoras'
theorem then implies that a path $P$ of length $\lceil m\rceil$ between
$a$ and $b$ might deviate from the line-segment $[a,b]$ by
$\Omega(m^{1/2})$, which prevents us from using arguments similar to
unit-square case. We now explain how to obtain an even tighter path. The
idea will be to cut $P$ into $\log m$ subpaths of nearly equal length,
and join the endpoints of these subpaths to the rest of the graph,
using some paths of minimum length. See Figure~\ref{fig:sketch} for an
illustration of this step of the proof. We will then argue that for any
unit-disk embedding, at least one of these subpaths will be maximally tight (i.e., at constant distance from the line-segment
joining its endpoints). 

\medskip

       \begin{figure}[htb]
\centering
\includegraphics[scale=1]{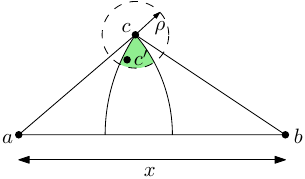}
\caption{The setting of Lemma~\ref{lem:midtri}.}
\label{fig:triangle}
\end{figure}

In this section, $x>0$ and $0\le \delta\le 1$ are real numbers, and
whenever we use the
$O(\cdot)$ notation, we implicitly assume that $x\to \infty$ (see for
instance the terms $O(1/x)$ and $O(\sqrt{\delta/x})$ in the statement
of the next lemma).
Let $abc$ be a triangle such that
$d_2(a,b)= x$ and $d_2(a,c)+d_2(b,c)= x+\delta$, with
$|d_2(a,c)-d_2(b,c)|\le 1$.  Assume by symmetry that $d_2(a,c)\le
d_2(b,c)$. In particular $\tfrac12(x+\delta -1)\le d_2(a,c)\le
\tfrac12(x+\delta )$ and  $\tfrac12(x+\delta )\le d_2(b,c)\le
\tfrac12(x+\delta +1)$. Let $c'$ be
a point such that $d_2(a,c')\le d_2(a,c)$, $d_2(b,c')\le d_2(b,c)$,
and $d_2(c,c')\le \rho$, for some constant $\rho=O(1)$. See Figure~\ref{fig:triangle} for an illustration.

\begin{restatable}[\apx{sec:prtab}]{lemma}{lemmidtri}
\label{lem:midtri}
  $\min\{d_2(a,c)- d_2(a,c'), d_2(b,c)- d_2(b,c')\}\le 
 \rho \sqrt{2\delta/x}+O(1/x)=O(\sqrt{\delta/x})$.
\end{restatable}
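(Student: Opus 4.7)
The plan is to place coordinates with $a$ at the origin and $b$ at $(x,0)$, then linearize the distances to $c'$ around $c$, exploiting that $|c' - c| \le \rho$ is a constant while $d_2(a,c), d_2(b,c) = \Theta(x)$. Writing $p = d_2(a,c)$, $q = d_2(b,c)$, $p' = d_2(a,c')$, $q' = d_2(b,c')$, the hypotheses $p+q = x+\delta$ and $|p-q| \le 1$ give $p,q = x/2 + O(1)$; placing $c$ above the line $ab$, a direct computation yields $c = (c_1,c_2)$ with $c_1 = (x^2 + p^2 - q^2)/(2x) = x/2 + O(1)$ and, via the ellipse-height identity, $c_2 = \sqrt{\delta x/2}\,(1 + O(1/x))$.

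Next I would use the exact identity
\[
p^2 - (p')^2 = \bigl((c-a) + (c'-a)\bigr) \cdot (c - c'),
\]
divide by $p+p'$, and Taylor-expand $p' = \bigl|p\hat{e}_a + v\bigr| = p\sqrt{1 + 2\hat{e}_a\cdot v/p + |v|^2/p^2}$, where $v := c'-c$ and $\hat{e}_a := (c-a)/p$. The quadratic remainder is $O(|v|^2/p) = O(1/x)$, giving
\[
p - p' = -v \cdot \hat{e}_a + O(1/x),
\]
and analogously $q - q' = -v \cdot \hat{e}_b + O(1/x)$ with $\hat{e}_b := (c-b)/q$.

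The heart of the argument is to compute $\hat{e}_a + \hat{e}_b$. Its $x$-component equals $c_1/p + (c_1 - x)/q$, which vanishes when $p=q$ and, using $|p-q| \le 1$ together with $p,q = \Theta(x)$, is $O(1/x)$ in general. Its $y$-component equals $c_2(1/p + 1/q) = c_2(x+\delta)/(pq)$; substituting $c_2 \approx \sqrt{\delta x/2}$ and $pq \approx x^2/4$ produces $2\sqrt{2\delta/x} + O(\sqrt{\delta}/x^{3/2})$. Summing the two linearizations,
\[
(p-p') + (q-q') \;=\; -2\,v_2 \sqrt{2\delta/x} + O(1/x),
\]
where $v_2$ denotes the $y$-component of $v$. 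Since both $p - p' \ge 0$ and $q - q' \ge 0$ by hypothesis, their minimum is at most half their sum, i.e.\ at most $-v_2\sqrt{2\delta/x} + O(1/x) \le |v_2|\sqrt{2\delta/x} + O(1/x) \le \rho\sqrt{2\delta/x} + O(1/x)$, as claimed.

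The main obstacle is the error analysis: one must verify that the $x$-component of $\hat{e}_a + \hat{e}_b$ is genuinely $O(1/x)$ rather than $O(1)$, which uses the hypothesis $|p - q| \le 1$ in an essential way, and that the linearization remainders also stay at $O(1/x)$ scale. Both are delicate because the target bound $\rho\sqrt{2\delta/x}$ can itself be as small as $O(1/\sqrt{x})$ when $\delta$ is bounded, so any sloppy additive error of size $\omega(1/x)$ would dominate and spoil the conclusion.
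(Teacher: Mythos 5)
Your proof is correct, and the error terms you flag as delicate do work out: with $p+q=x+\delta$ and $|p-q|\le 1$ one gets $c_1q+(c_1-x)p=(p-q)\,\delta\,(1+\delta/(2x))$, so the horizontal component of $\hat{e}_a+\hat{e}_b$ is in fact $O(\delta/x^2)$, and the linearization remainders are $O(|v|^2/x)=O(1/x)$, so nothing swamps the $\rho\sqrt{2\delta/x}$ main term. Your route is genuinely different from the paper's. The paper drops the altitude from $c$ onto $[a,b]$, bounds its length by $\sqrt{\delta x/2}+O(x^{-1/2})$, and reduces to the single extremal point $c^*$ obtained by sliding $c$ a distance $\rho$ down this altitude: a case split on which side of the altitude $c'$ lies shows that one of the two distances to $c'$ is at least the corresponding distance to $c^*$, and then one Pythagoras/Taylor computation gives $d_2(a,c)-d_2(a,c^*)\le\rho\sqrt{2\delta/x}+O(1/x)$. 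You instead linearize both decrements in the displacement $v=c'-c$ (so $p-p'\approx-\hat{e}_a\cdot v$ and $q-q'\approx-\hat{e}_b\cdot v$), compute $\hat{e}_a+\hat{e}_b\approx\bigl(0,2\sqrt{2\delta/x}\bigr)$ using the ellipse height, and bound the minimum by the average. The paper's argument is a bit more elementary (coordinate-free, two applications of Pythagoras) but needs the geometric monotonicity behind the replacement of $c'$ by $c^*$; your averaging argument avoids any case analysis and never actually uses the hypotheses $d_2(a,c')\le d_2(a,c)$ and $d_2(b,c')\le d_2(b,c)$, so it proves the slightly stronger statement for an arbitrary $c'$ with $d_2(c,c')\le\rho$, at the cost of a more careful coordinate computation. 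Both proofs ultimately rest on the same quantitative fact: $c$ sits at height roughly $\sqrt{\delta x/2}$ above $[a,b]$ while $d_2(a,c),d_2(b,c)\approx x/2$, so a unit vertical displacement changes each distance by about $\sqrt{2\delta/x}$.
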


Consider a graph $G_0$ given by Observation~\ref{obs:cclscha} with
parameter $2k$. The
unit-disk graph $G_0$ thus contains $O(k^{1+\varepsilon})$ vertices and has two
vertices $u'$ and $v'$ such that in any unit-disk  embedding
$2k\le d_2(u',v')\le 2k+1$ and $u'$ and $v'$ are joined by a path $P'$ of
length less than $2k+1$.

\medskip

For any unit-disk embedding of a unit-disk graph $G$ and for any  $\delta>0$, we say that a path with endpoints $a$ and $b$ in $G$ 
is \emph{$\delta$-tight} if the length of the path and the
Euclidean distance between $a$ and $b$ differ by at most
$\delta$. With this terminology, the path $P'$ defined in the previous
paragraph is 1-tight for any unit-disk embedding of $G_0$. Note that by the triangle inequality, any
subpath of a $\delta$-tight path is also $\delta$-tight.

\medskip

By the second part of Observation~\ref{obs:cclscha}, $P'$ has a
subpath $P$ of length $k$ with
endpoints $u$ and $v$ such that all vertices
of $P$ lie at Euclidean distance at least $\Omega(k)$ from $G_0-P'$ in any
unit-disk embedding. It will be convenient to work with this subpath $P$ instead
of $P'$, as a large region around $P$ is free of any vertices of
$G_0-P'$. For any unit-disk embedding of $G_0$, since $P'$ is 1-tight, $P$ is also
1-tight and thus $k\le d_2(u,v)\le k+1$.

\medskip

It is helpful to have a quick look at Figure~\ref{fig:sketch} before
reading this section. We consider a vertex $w$ of $P$, which divides $P$ into two consecutive
paths $P_0$ and $P_1$, whose lengths differ by at most 1. We consider
a unit-disk embedding of $G_0$ and look at the perpendicular
bisector of the line-segment $[u,v]$. By connectivity, this line
intersects $G_0-P$. Let $z$ be the first vertex of $G_0-P$ whose
unit-disk is intersected
by this line (if several such vertices exist we pick one arbitrarily); by the properties above, $z$ lies at distance
at least $\Omega(k)$ from $w$. As in the construction of quasi-rigid
graphs above Observation~\ref{obs:cclscha}, we consider a point set
corresponding to a unit-disk embedding of $G_0$ where the
Euclidean distance $m^*$ between $z$ and $w$ is
minimized and add, along the segment $[zw]$, $\lceil m^* \rceil
-1$ evenly spaced new points.
Observe that in the resulting unit-disk graph $G_1$ the newly added
vertices correspond to a
path $Q$ of minimum length between $z$ and $w$ (and possibly some extra edges between this
path and $V(G_0)$).
Therefore we have $m^*\leq d_2(z,w) \leq \lceil m^* \rceil$ in any unit-disk embedding of $G_1$.
We iterate this procedure recursively on $P_0$ and $P_1$,
creating 4 consecutive subpaths $P_{00}, P_{01}, P_{10}, P_{11}$ of
$P$, and two new paths $Q_0$ and $Q_1$ joining the new midpoints to $G_0-P$. More precisely, for any $i \ge 0$, consider a unit-disk embedding of $G_{i}$,
  and any subpath $P_\mathbf{x}$ of $P$ with 
  $\mathbf{x}\in \{0,1\}^i$ in $G_i$. Let $a,b$ denote the endpoints
  of  $P_\mathbf{x}$. Then in $G_{i+1}$, $P_\mathbf{x}$ is split between $P_{\mathbf{x}0}$ (with endpoints
$a$ and $c'$) and  $P_{\mathbf{x}1}$ (with endpoints
$c'$ and $b$), where $c'$ is a midpoint of $P_\mathbf{x}$, which is then
joined (in the way described above) via some path $Q_{\mathbf{x}}$ of minimal length to some vertex $z_{\mathbf{x}}$ of
$G_i-P$ lying at
distance at most $\tfrac12$ from the perpendicular bisector of the
line segment $[ab]$. We precise that at each recursive step $\mathbf{x}\in \{0,1\}^i$ , the
paths $Q_{\mathbf{x}0}$ and $Q_{\mathbf{x}1}$  are added one after the
other: we first add a path $Q_{\mathbf{x}0}$ of minimal length (such
that the resulting graph is still a unit-disk graph), and
then we add a path $Q_{\mathbf{x}1}$ of minimum length (such
that the resulting graph is still a unit-disk graph). Note that by construction $G_{i+1}$ is a
unit-disk graph.
See Figure~\ref{fig:sketch} for a picture of
$G_3$.

\medskip

        \begin{figure}[htb]
\centering
\includegraphics[scale=0.7]{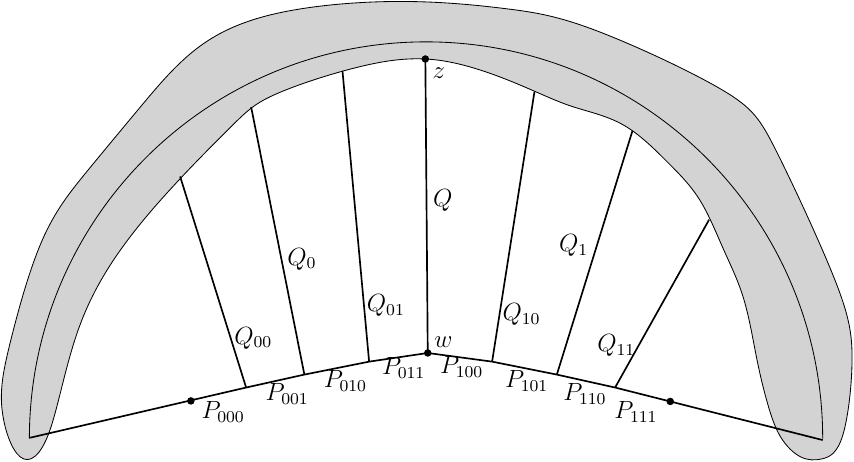}
\caption{A tied-arch bridge.}
\label{fig:sketch}
\end{figure}

Using the notation introduced in the previous paragraph, we obtain the 
following simple consequence of Lemma~\ref{lem:midtri}.

\begin{restatable}[\apx{sec:prtab}]{corollary}{cormidtri}
\label{cor:midtri}
  For any $\delta>0$ there exists
  $\gamma=O(\sqrt{\delta/d_2(a,b)})$ such that in any unit-disk
  embedding of $G_{i+1}$, if
  $P_\mathbf{x}$ is $\delta$-tight, then $P_{\mathbf{x}0}$ is
    $\gamma$-tight or $P_{\mathbf{x}1}$ is
    $\gamma$-tight.
\end{restatable}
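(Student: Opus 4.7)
My plan is to apply Lemma~\ref{lem:midtri} with the embedded position of the midpoint vertex of $P_\mathbf{x}$ as the point $c'$ and a carefully chosen reference point $c$. Set $x = d_2(a,b)$ and let $L$, $L_0$, $L_1$ denote the lengths of $P_\mathbf{x}$, $P_{\mathbf{x}0}$ and $P_{\mathbf{x}1}$, so $L_0 + L_1 = L \le x + \delta$ by the $\delta$-tightness of $P_\mathbf{x}$, and $|L_0 - L_1| \le 1$ by construction of the midpoint. The triangle inequality along each subpath gives $d_2(a,c') \le L_0$ and $d_2(b,c') \le L_1$, so the tightnesses of $P_{\mathbf{x}0}$ and $P_{\mathbf{x}1}$ equal exactly $L_0 - d_2(a,c')$ and $L_1 - d_2(b,c')$. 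We may assume $x$ exceeds some absolute constant, since otherwise the target bound $\gamma = O(\sqrt{\delta/x})$ is vacuous.

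Next, I would define $c$ as the unique point lying on the same side of the line through $a$ and $b$ as $c'$ (chosen arbitrarily if $c'$ lies on this line) with $d_2(a,c) = L_0$ and $d_2(b,c) = L_1$; this is well-defined since $L_0 + L_1 = L \ge x$ and $|L_0 - L_1| \le 1 \le x$. Setting $\delta' := L - x \in [0,\delta]$, the triangle $abc$ satisfies $d_2(a,c) + d_2(b,c) = x + \delta'$ and $|d_2(a,c) - d_2(b,c)| \le 1$, and $c'$ automatically satisfies $d_2(a,c') \le d_2(a,c)$ and $d_2(b,c') \le d_2(b,c)$. Every hypothesis of Lemma~\ref{lem:midtri} except $d_2(c,c') \le \rho$ is thus in place.

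The hard part is supplying an absolute constant $\rho = O(1)$ with $d_2(c,c') \le \rho$ in every unit-disk embedding of $G_{i+1}$, and this is exactly where the tied-arch plays its role: the minimum-length path $Q_\mathbf{x}$ forces $d_2(z, c') \in [m^*, m^*+1]$ in every embedding, while the inductive rigidity of $G_i$ controls the position of $z$ relative to $a$ and $b$ up to an $O(1)$ error. Combined, these constraints confine $c'$ to within $O(1)$ of the extremum (the point farthest from the line $ab$) of the ellipse $\{y : d_2(a,y) + d_2(b,y) = L\}$ on the selected side; since $|L_0 - L_1| \le 1$, the point $c$ itself lies within $O(1)$ of this same extremum, which yields $d_2(c,c') = O(1)$. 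Feeding $\rho = O(1)$ into Lemma~\ref{lem:midtri} then gives
\[
\min\bigl(L_0 - d_2(a,c'),\ L_1 - d_2(b,c')\bigr) \le \rho\sqrt{2\delta'/x} + O(1/x) = O\!\left(\sqrt{\delta/d_2(a,b)}\right),
\]
and this is exactly the claimed $\gamma$-tightness of at least one of $P_{\mathbf{x}0}$ or $P_{\mathbf{x}1}$.
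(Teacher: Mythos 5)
Your setup is the same as the paper's: place an auxiliary point $c$ with $d_2(a,c)=|P_{\mathbf{x}0}|$ and $d_2(b,c)=|P_{\mathbf{x}1}|$, check the hypotheses $d_2(a,c)+d_2(b,c)\le x+\delta$, $|d_2(a,c)-d_2(b,c)|\le 1$, $d_2(a,c')\le d_2(a,c)$, $d_2(b,c')\le d_2(b,c)$, and reduce everything to showing $\rho:=d_2(c,c')=O(1)$, after which Lemma~\ref{lem:midtri} gives the statement. You also correctly identify that the path $Q_{\mathbf{x}}$ to $z$ is the mechanism that must force $\rho=O(1)$. But that is exactly the point at which your argument stops being a proof: the sentence ``combined, these constraints confine $c'$ to within $O(1)$ of the extremum of the ellipse\dots which yields $d_2(c,c')=O(1)$'' is an assertion of the conclusion, not a derivation, and it is the entire non-trivial content of Corollary~\ref{cor:midtri}.

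Concretely, two links are missing. First, $m^*$ is a number fixed at construction time (a minimum over unit-disk embeddings of $G_i$), so the bound $d_2(z,c')\le m^*+1$ says nothing about the geometry of the \emph{current} embedding of $G_{i+1}$ until you relate $m^*$ to it; the paper does this by observing $m_i^*\le d_2(z,c)\le d_2(z,c')$ (the first inequality uses that one can realize an embedding of $G_i$ in which $c'$ is placed at $c$, exploiting the fact that $P$ is far from the rest of the graph), which yields $d_2(z,c')\le\lceil m_i^*\rceil\le\lceil d_2(z,c)\rceil\le d_2(z,c)+1$ in the embedding at hand. Second, even granted this, the confinement of $c'$ near $c$ is not automatic: one must combine $c'\in R_a\cap R_b\cap R_z$ (disks of radii $d_2(a,c)$, $d_2(b,c)$, $\lceil d_2(z,c)\rceil$) with the facts that $z$ lies at distance $\Omega(k)=\Omega(x)$ from $c$ and within $\tfrac12$ of the perpendicular bisector of $[a,b]$, and then run the Pythagoras computation; your appeal to ``inductive rigidity of $G_i$ controls the position of $z$ relative to $a$ and $b$ up to an $O(1)$ error'' is both vaguer and stronger than what the construction provides, and no quantitative estimate is carried out. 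Relatedly, you choose $c$ on the same side of the line $(ab)$ as $c'$, whereas the paper chooses the intersection point nearest $z$; the distance comparison with $z$ is what forces $c'$ onto $z$'s side in the first place, so your choice would itself need the very argument that is missing. As it stands, the proposal reproduces the paper's reduction to Lemma~\ref{lem:midtri} but leaves the $\rho=O(1)$ step unproved.
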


Note that for any unit-disk embedding of $G_{i+1}$, the restriction of
the embedding to $V(G_i)\subset V(G_{i+1})$ is a unit-disk embedding
of $G_i$ (the
difference between $V(G_i)$ and $V(G_{i+1})$ being the union of the newly added
paths $Q_{\mathbf{x}}$). 
The following is proved by induction on $i\ge 0$.

\begin{restatable}[\apx{sec:prtab}]{lemma}{lemfixpath}
\label{lem:fixpath}
In any unit-disk embedding of $G_i$, there is $\mathbf{x}\in
\{0,1\}^i$ such that $P_\mathbf{x}$ is $O(2^{i}\cdot k^{2^{-i}-1})$-tight.
\end{restatable}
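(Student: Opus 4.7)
The plan is to proceed by induction on $i \ge 0$. For the base case $i=0$, the string $\mathbf{x}$ is empty and $P_{\emptyset}=P$ is a subpath of the path $P'$ in $G_0$, which by construction is $1$-tight in every unit-disk embedding of $G_0$; hence $P$ is $1$-tight, matching the required $O(2^0\cdot k^{2^0-1})=O(1)$.

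For the inductive step, fix any unit-disk embedding of $G_{i+1}$. Its restriction to $V(G_i)\subseteq V(G_{i+1})$ is a unit-disk embedding of $G_i$, so by the induction hypothesis there is some $\mathbf{x}\in\{0,1\}^i$ such that $P_\mathbf{x}$ is $\delta_i$-tight with $\delta_i=O(2^i\cdot k^{2^{-i}-1})$. Let $a,b$ denote the endpoints of $P_\mathbf{x}$. Since $P_\mathbf{x}$ is obtained from $P$ by $i$ successive bisections into subpaths whose lengths differ by at most $1$, we have $|P_\mathbf{x}|=\Theta(k/2^i)$, and because $\delta_i=o(k/2^i)$ for $k$ large enough this gives $d_2(a,b)=|P_\mathbf{x}|-O(\delta_i)=\Theta(k/2^i)$. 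Applying Corollary~\ref{cor:midtri} to $P_\mathbf{x}$ with $\delta=\delta_i$ then produces a half $P_{\mathbf{x}0}$ or $P_{\mathbf{x}1}$ which is $\gamma$-tight, with
\[
\gamma \;=\; O\!\left(\sqrt{\delta_i/d_2(a,b)}\right) \;=\; O\!\left(\sqrt{2^i\cdot k^{2^{-i}-1}\cdot 2^i/k}\right) \;=\; O\!\left(2^i\cdot k^{2^{-(i+1)}-1}\right),
\]
which is bounded by $O(2^{i+1}\cdot k^{2^{-(i+1)}-1})$, completing the induction.

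The heart of the argument is the square-root gain per bisection furnished by Corollary~\ref{cor:midtri}, which yields the recurrence $\delta_{i+1}^2 \lesssim \delta_i\cdot 2^i/k$; iterating this is precisely what drives the double-exponential decrease in the exponent $2^{-i}$ of $k$. The only subtlety worth flagging is that the implicit constants in the $O(\cdot)$ estimates may depend on $i$, but at each step they are only multiplied by the absolute constant coming from Corollary~\ref{cor:midtri}, so for every fixed $i$ the resulting constant is finite and can be absorbed into the $O(\cdot)$ of the statement.
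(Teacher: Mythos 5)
Your induction follows the same route as the paper's proof (restrict the embedding of $G_{i+1}$ to $G_i$, invoke the induction hypothesis, apply Corollary~\ref{cor:midtri} to the bisection at $c'$, and compute $\sqrt{\delta_i/d_2(a,b)}=O(2^i k^{2^{-(i+1)}-1})$), and the exponent bookkeeping is correct. The lower bound $d_2(a,b)=\Theta(k/2^i)$ you use is also fine, since $\delta_i=o(k2^{-i})$ in the relevant range of $i$.

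However, the point you flag at the end and then dismiss is exactly where your argument has a genuine gap. The lemma is not applied for fixed $i$: in Corollary~\ref{cor:regpath} it is invoked with $i=t=\lceil\log\log k\rceil$, which grows with $k$, so the implicit constant in ``$O(2^i k^{2^{-i}-1})$-tight'' must be absolute, independent of $i$. Your justification (``at each step the constant is only multiplied by the absolute constant $C$ from Corollary~\ref{cor:midtri}, so for every fixed $i$ it is finite'') would only give a bound of the form $C^i$, and with $i=\lceil\log\log k\rceil$ this injects an extra $(\log k)^{\Theta(1)}$ factor into the tightness; that would propagate through Lemma~\ref{lem:regpath} and destroy the $O(1)$-regularity needed later (Lemma~\ref{lem:deco} requires $(1/10,\alpha k/\log k)$-regular walls). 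The fix is the paper's: strengthen the induction hypothesis to ``$P_\mathbf{x}$ is $\alpha\cdot 2^i k^{2^{-i}-1}$-tight'' for one fixed constant $\alpha$, and observe that since only the \emph{square root} of the previous error enters, the step produces a tightness of at most $C\sqrt{\alpha}\cdot 2^{i+1}k^{2^{-(i+1)}-1}$; choosing $\alpha\ge C^2$ once and for all (i.e.\ $\alpha$ large compared with the implicit constant of Corollary~\ref{cor:midtri}) gives $C\sqrt{\alpha}\le\alpha$, so the same constant $\alpha$ survives every step and the bound is uniform in $i$. Your recursion $\delta_{i+1}^2\lesssim\delta_i\cdot 2^i/k$ in fact has this self-correcting behaviour built in, but you need to say so explicitly (track the constants via $A_{i+1}\le C\sqrt{A_i}$, bounded by $\max(A_0,C^2)$) rather than absorb them per fixed $i$.
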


Consider a unit-disk embedding of a unit-disk graph $G$. For $\ell\ge
2$ and 
$\gamma\ge 0$, we say that a path $P=v_1,\ldots,v_{\ell+1}$ of
length $\ell$ in $G$ is \emph{$\gamma$-regular} if the following holds: If we place $\ell+1$ evenly
spaced points $u_1,\ldots,u_{\ell+1}$ on the line-segment $[v_1,v_{\ell+1}]$ with
$u_1=v_1$ and $u_{\ell+1}=v_{\ell+1}$, 
then for any $1\le i \le \ell+1$, $d_2(u_i,v_i)\le \gamma$.
For $s\le \ell+1$, we say that $P$ is \emph{$(\gamma,s)$-regular} if
the above holds for any $1\le i \le s$ (instead of $1\le i \le
\ell+1$, so that $(\gamma,\ell+1)$-regular is the same as
$\gamma$-regular for a path of length $\ell$) .
In other words, this
means that the first $s$ vertices of $P$ are close to their ideal
location on the segment connecting the two endpoints of $P$. 

\begin{restatable}[\apx{sec:prtab}]{lemma}{lemregpath}
\label{lem:regpath}
Consider a unit-disk embedding of a unit-disk graph $G$, and let
$P=v_1,\ldots,v_{\ell+1}$ be a path of length $\ell$ in $G$. If $P$ is
$\delta$-tight with $\delta\le 1$, then $P$ is
$\gamma$-regular with
$\gamma:=\sqrt{\ell\delta/2}+O(\ell^{-1/2})$. Moreover, for any
$\alpha>0$, $P$ is $(\lambda,\alpha\ell)$-regular with
$\lambda:=\sqrt{(2\alpha-\alpha^2)\gamma^2+\alpha^2}$ (in particular,
when $\gamma=O(1)$, $\lambda$ can be made arbitrarily small by taking
$\alpha$ arbitrarily small but constant).
\end{restatable}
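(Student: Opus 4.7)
My plan is to derive both bounds from a single identity obtained via Stewart's theorem. Place coordinates so that $v_1=(0,0)$ and $v_{\ell+1}=(d,0)$, where $d:=d_2(v_1,v_{\ell+1})\in[\ell-\delta,\ell]$ by $\delta$-tightness. Let $t:=(i-1)/\ell$, so that $u_i=(td,0)$ divides $[v_1,v_{\ell+1}]$ in ratio $t:(1-t)$. Writing $r_1:=d_2(v_1,v_i)$ and $r_2:=d_2(v_i,v_{\ell+1})$, Stewart's theorem applied to the cevian $v_iu_i$ in the triangle $v_1v_iv_{\ell+1}$ yields
\[
d_2(u_i,v_i)^2 \;=\; (1-t)\,r_1^{2} + t\,r_2^{2} - t(1-t)\,d^{2}.
\]
Because edges of $P$ have length at most $1$, the sub-path lengths force the box constraints $r_1\le t\ell$ and $r_2\le(1-t)\ell$, while the triangle inequality gives $r_1+r_2\ge d$.

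The right-hand side is convex in $(r_1,r_2)$, so its maximum over the bounded convex feasible region is attained at a vertex. The decisive vertex is the \emph{lens corner} $(r_1,r_2)=(t\ell,(1-t)\ell)$, where both sub-paths are maximally stretched; plugging in gives $t(1-t)(\ell^{2}-d^{2})\le 2t(1-t)\ell\delta$. The remaining two vertices lie on the triangle-inequality boundary $r_1+r_2=d$ (so that $v_i$ sits on the segment $[v_1,v_{\ell+1}]$) and contribute only the purely axial terms bounded by $t^{2}(\ell-d)^{2}\le t^{2}\delta^{2}$ and $(1-t)^{2}\delta^{2}$ respectively.

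For the first claim I maximize over $t\in[0,1]$: the lens-corner term peaks at $t=1/2$ with value $\ell\delta/2$, and the axial $\delta^{2}\le\delta$ terms are absorbed, after taking square roots, into an additive $O(\ell^{-1/2})$ slack, yielding $\gamma=\sqrt{\ell\delta/2}+O(\ell^{-1/2})$. For the second claim I restrict to $t\in[0,\alpha]$ and rewrite $\ell\delta=2\gamma^{2}$ up to lower order: the lens-corner contribution at the extremal value $t=\alpha$ produces the $\gamma^{2}$-part of $\lambda^{2}$, and the axial contribution at $t=\alpha$ produces the residual $\alpha^{2}$ (using $\delta\le 1$), combining to $\lambda^{2}=(2\alpha-\alpha^{2})\gamma^{2}+\alpha^{2}$. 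The main obstacle will be to verify that these two contributions, coming from geometrically distinct extremal configurations of $v_i$, really do combine into the compact form advertised in the statement rather than into a larger sum; this reduces to checking that at the worst-case position of $v_i$ the axial offset along $[v_1,v_{\ell+1}]$ and the perpendicular deviation away from it cannot be simultaneously maximal, a fact that follows from the geometry of the confining ellipse $\{r_1+r_2\le\ell\}$.
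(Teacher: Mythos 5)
Your Stewart-theorem route is genuinely different from the paper's argument (the paper confines each $v_i$ to the ellipse $\{r_1+r_2\le \ell\}$ and bounds the perpendicular and axial offsets separately via Pythagoras), and for the first claim it is correct and in fact cleaner: the identity $d_2(u_i,v_i)^2=(1-t)r_1^2+tr_2^2-t(1-t)d^2$, convexity in $(r_1,r_2)$, and evaluation at the three vertices give $d_2(u_i,v_i)^2\le\max\{2t(1-t)\ell\delta,\,t^2\delta^2,\,(1-t)^2\delta^2\}\le \ell\delta/2$ for $\ell\ge 2$, which yields the stated $\gamma$ and handles the axial displacement explicitly (something the paper's first-part argument glosses over).

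The gap is in the second claim, exactly at the step you defer. Your lens-corner value at $t=\alpha$ is $\alpha(1-\alpha)(\ell^2-d^2)=4\alpha(1-\alpha)\gamma_0^2$, where $\gamma_0:=\tfrac12\sqrt{\ell^2-d^2}\le\gamma$ is the exact semi-height; for small $\alpha$ this is roughly \emph{twice} the advertised $(2\alpha-\alpha^2)\gamma^2+\alpha^2$, and no "cannot be simultaneously maximal" argument can close this factor, because the lens corner alone is geometrically realizable: take $d=\ell-\delta$ with $\delta$ of order $1$ and a path made of a straight run of $\alpha\ell$ unit steps meeting a straight run of $(1-\alpha)\ell$ unit steps at a single bend. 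This is a valid unit-disk embedding of a $\delta$-tight path, and for $i=\alpha\ell$ one computes $d_2(u_i,v_i)^2=4\alpha(1-\alpha)\gamma_0^2>(2\alpha-\alpha^2)\gamma^2+\alpha^2$ once $\ell$ is large (e.g.\ $\ell=100$, $\delta=1$, $\alpha=0.1$ gives $d_2(u_i,v_i)\approx 4.0$ versus $\lambda\approx 3.1$). So your method cannot recover the constant as stated -- and, for what it is worth, the paper's own proof has the corresponding slip (it invokes the ellipse-height bound valid for projections with $x\ge(1-\alpha)\ell/2$ while only knowing $d_2(v_1,p)\le\alpha\ell$, which guarantees merely $x\ge(1-2\alpha)\ell/2-O(\delta)$). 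Your analysis does prove the lemma with the corrected constant $\lambda=\sqrt{4\alpha(1-\alpha)\gamma^2+\delta^2}$ (note also that the third vertex contributes $(1-t)^2\delta^2\le\delta^2$, which is not $\le\alpha^2$ when $\delta>\alpha$, so the residual should be $\delta^2$ rather than $\alpha^2$, harmless since $\delta^2\le 2\gamma^2/\ell$). That weaker form still makes $\lambda$ arbitrarily small for $\gamma=O(1)$ and small constant $\alpha$, which is all that Corollary~\ref{cor:regpath} and Lemma~\ref{lem:deco} use; you should state and prove that version rather than chase the advertised one.
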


Set $t:=\lceil\log \log
k\rceil $ (we recall that $\log$ denotes the logarithm in base 2). 
We immediately deduce the following. 

\begin{corollary}\label{cor:regpath}
   In any unit-disk embedding of $G_t$, there is $\mathbf{x}\in
\{0,1\}^t$ such that $P_\mathbf{x}$ is $O(k^{-1}\log k)$-tight, and in particular $O(1)$-regular.
\end{corollary}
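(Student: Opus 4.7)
The plan is to instantiate Lemma~\ref{lem:fixpath} at $i = t = \lceil \log \log k \rceil$ and then feed the resulting tightness estimate into Lemma~\ref{lem:regpath}. First I would compute the tightness bound: since $2^t = \Theta(\log k)$, we have $2^{-t} = \Theta(1/\log k)$, so $k^{2^{-t}} = 2^{\Theta(1)} = \Theta(1)$. This gives $k^{2^{-t}-1} = \Theta(1/k)$, hence $2^t \cdot k^{2^{-t}-1} = \Theta(k^{-1}\log k)$. Lemma~\ref{lem:fixpath} then guarantees the existence of $\mathbf{x} \in \{0,1\}^t$ such that $P_\mathbf{x}$ is $O(k^{-1}\log k)$-tight, which is the first half of the claim.

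Next I would convert this tightness bound into a regularity bound via Lemma~\ref{lem:regpath}. The subpath $P_\mathbf{x}$ has length $\ell$, and since $P$ has length $k$ and the recursive procedure halves lengths at each of the $t$ steps (with the two halves differing by at most $1$), $\ell$ is comfortably $\Theta(k/\log k)$ for $k$ sufficiently large. Taking $\delta := O(k^{-1}\log k)$, which is $\le 1$ for $k$ large enough, the product satisfies $\ell\delta = \Theta(k/\log k)\cdot O(\log k / k) = O(1)$. Hence $\sqrt{\ell\delta/2} = O(1)$, and the additive term $O(\ell^{-1/2}) = O(\sqrt{\log k / k})$ is $o(1)$ and thus negligible. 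The regularity parameter $\gamma$ produced by Lemma~\ref{lem:regpath} is therefore $O(1)$, giving $O(1)$-regularity of $P_\mathbf{x}$.

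There is no real obstacle here: the statement is essentially the bookkeeping at the chosen threshold $t = \lceil\log\log k\rceil$, which is precisely the point where the $\delta$-tightness from Lemma~\ref{lem:fixpath} and the $\ell$-dependence of Lemma~\ref{lem:regpath} balance so that $\ell\delta = O(1)$. The mildly delicate step is checking that the length $\ell$ of $P_\mathbf{x}$ remains $\Theta(k/\log k)$ after $t$ binary subdivisions of a path of length $k$; this follows because each subdivision preserves the lengths of the two pieces up to $\pm 1$, so after $t$ steps the length lies between $\lfloor k/2^t\rfloor - O(t)$ and $\lceil k/2^t\rceil + O(t)$, well within the $\Theta(k/\log k)$ regime for all sufficiently large $k$. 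Everything else is a direct substitution into the two already-proved lemmas.
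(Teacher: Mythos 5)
Your proposal is correct and follows essentially the same route as the paper: apply Lemma~\ref{lem:fixpath} at $i=t=\lceil\log\log k\rceil$ to get $\delta=O(2^t k^{2^{-t}-1})=O(k^{-1}\log k)$ using $k^{2^{-t}}\le 2$, then feed this into Lemma~\ref{lem:regpath} with $\ell=\Theta(2^{-t}k)=\Theta(k/\log k)$ so that $\ell\delta=O(1)$ and hence $\gamma=O(1)$. Your extra check that the subdivided path indeed has length $\Theta(k/\log k)$ is a harmless (and correct) elaboration of a fact the paper states without comment.
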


\begin{proof}
 By Lemma~\ref{lem:fixpath}, there is $\mathbf{x}\in
\{0,1\}^t$ such that the length of the path $P_\mathbf{x}$ in $G_t$ with endpoints $a$ and
$b$ differs from $d_2(a,b)$ by $\delta:=O(2^t\cdot
k^{2^{-t}-1})$. As
$P_\mathbf{x}$ has length $\Theta(2^{-t}k)$, it follows from Lemma~\ref {lem:regpath} that $P_\mathbf{x}$ is $\gamma$-regular with
$\gamma=O\left(\sqrt{k^{2^{-t}}}\right)=O\left(k^{2^{-t-1}}\right)$. As
$k^{2^{-t}}\le 2^{\log k\cdot \log^{-1}k}= 2$, $\delta=O(k^{-1}\log
k)$ and $\gamma=O(1)$,
as desired.
\end{proof}

To summarize this subsection, for every $\varepsilon>0$ and any sufficiently large $k $ we
have constructed a unit-disk graph $G$ with $O(k^{1+\varepsilon})$
vertices which contains $\log k$ disjoint paths $P_1',P_2', \ldots$
of length $k/\log k$
with the following properties: In any unit-disk graph embedding of
$G$,
\begin{itemize}
\item at least one of the paths $P_i'$ is $O(k^{-1}\log k)$-tight and
  thus $O(1)$-regular; and
  \item each $P_i'$ contains a subpath $P_i$ of
    length at least $\tfrac{k}{2\log k}$, such that all vertices of $P_i$ lie
    at distance at least $\Omega(\tfrac{k}{\log k})$ from $G-P_i'$. Moreover,
    if $P_i'$ is $O(k^{-1}\log k)$-tight, then  $P_i$ is also
    $O(k^{-1}\log k)$-tight, and in particular $O(1)$-regular.
  \end{itemize}

  We call a graph $G$ with these properties a \emph{tied-arch bridge
    of density $O(k^{1+\varepsilon})$}.

\subsection{Corridors}\label{sec:cor}

In the previous subsection we have seen how to produce unit-disk graphs
with large $O(1)$-regular paths, that is paths  that are ``maximally''
 tight. In this  subsection we introduce the final tool needed to
prove the main result of this section: corridors. This is 
where, intuitively, we will place the gadgets depending on $A$ and $B$
in order for the vertices to locally decide whether the subsets are disjoint or not.

\medskip

The graph
depicted in Figure~\ref{fig:test} (bottom) with black vertices and edges (and grey
or black circles for the unit-disk embedding) is called an
\emph{$r$-corridor} with ends $u$ and $v$: it consists of 2 internally
vertex-disjoint paths of length $r+2$ between $u$ and $v$, say
$P=x_0,\ldots,x_{r+2}$ and $Q=y_0,\ldots,y_{r+2}$, with $u=x_0=y_0$
and $v=x_{r+2}=y_{r+2}$, together with vertices $x_1'$ (adjacent to
$x_0$ and $x_2$), $y_1'$ (adjacent to $y_0$ and $y_2$), $x_{r+1}'$
(adjacent to $x_{r}$ and $x_{r+2}$), $y_{r+1}'$
(adjacent to $y_{r}$ and $y_{r+2}$), $z_1$ (adjacent to $x_1'$ and
$y_1'$) and $z_{r+1}$ (adjacent to $x_{r+1}'$ and $y_{r+1}'$).

The paths $x_1,\ldots,x_{r+1}$ and $y_1,\ldots,y_{r+1}$ are called the
\emph{walls} of the corridor (we emphasize that the walls do not
contain $u$ and $v$).

\begin{figure}[htb]
\centering
\includegraphics[scale=0.8]{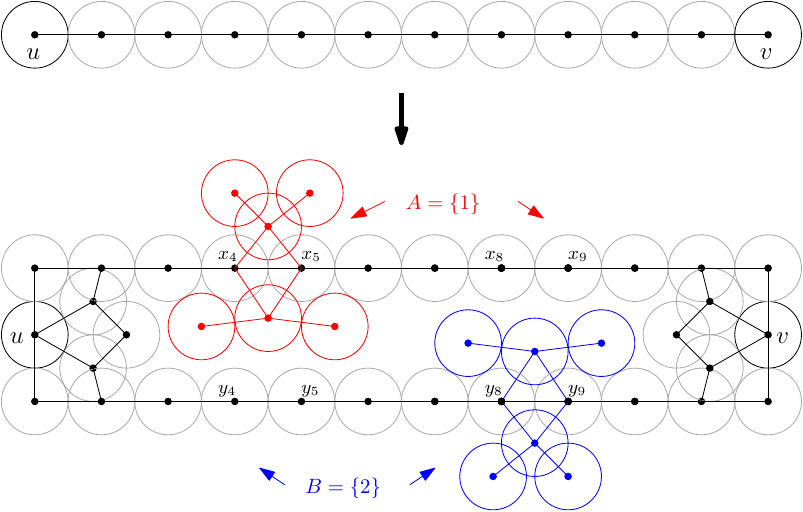}
\caption{A path between $u$ and $v$ (top) is replaced by a corridor of the
  same length between $u$ and $v$ (bottom), with sculptures decorating the walls (in red and blue). The edges of the graph show which disks intersect or not
  (even if they appear to touch on the figure).}
\label{fig:test}
\end{figure}

\begin{observation}\label{obs:corridor1}
Any $r$-corridor is a unit-disk graph, and in any unit-disk embedding
of an $r$-corridor with ends $u$ and $v$, such that $u$ and $v$ lie on
the outerface of the corresponding planar graph drawing, we have $d_2(u,v)\le r$.
\end{observation}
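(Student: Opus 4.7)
For the first assertion, I would exhibit the explicit unit-disk embedding depicted in Figure~\ref{fig:test}: place $u=(0,0)$ and $v=(r,0)$, arrange the top-wall vertices $x_1,\dots,x_{r+1}$ along a horizontal line just above the $uv$-axis with spacing close to $1$, place the bottom-wall vertices $y_1,\dots,y_{r+1}$ symmetrically below, and position each gadget triple $\{x_i',y_i',z_i\}$ (for $i\in\{1,r+1\}$) just inside the corridor near the corresponding endpoint. A routine verification of Euclidean distances checks that every prescribed edge corresponds to a pair of centres at distance at most $1$ and every non-edge to a pair at distance strictly greater than $1$, so the $r$-corridor is a unit-disk graph.

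For the distance bound, fix any unit-disk embedding of the $r$-corridor in which $u$ and $v$ lie on the outerface. The corridor is triangle-free (every potential triangle is blocked by an explicit non-adjacency such as $x_1\not\sim x_1'$ or $x_1\not\sim y_1$), so by Proposition~\ref{pro:trifree} the embedding induces a planar graph embedding with matching cyclic orderings around each vertex. After deleting the degree-two vertices $z_1$ and $z_{r+1}$, the remaining graph is obtainable from a 3-connected simple graph by edge subdivisions, so by Theorem~\ref{thm:whi} its planar embedding is unique up to reflection. Combined with the assumption that $u$ and $v$ lie on the outerface, this pins down the combinatorial drawing: the outerface is bounded by the cycle $P\cup Q$ of length $2r+4$, and all the filling vertices $x_1',y_1',z_1,x_{r+1}',y_{r+1}',z_{r+1}$ lie on the bounded side of this cycle.

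To derive $d_2(u,v)\le r$, I would combine two types of estimates. First, each wall is a path of $r$ edges of Euclidean length at most $1$, so $d_2(x_1,x_{r+1})\le r$ and similarly for the bottom wall. Second, the endpoint gadgets impose local rigidity: in the $4$-cycle $u,x_1,x_2,x_1'$ all four sides have length at most $1$ while both diagonals $d_2(u,x_2)$ and $d_2(x_1,x_1')$ exceed $1$, so an Apollonius-type computation places $u$ within $\tfrac{\sqrt{3}}{2}$ of the midpoint of $[x_1,x_1']$, with the symmetric statement at the other end for $v$. Crucially, the planar embedding forces a consistent orientation on these $4$-cycles, namely that $u$ and $x_2$ lie on opposite sides of the line $x_1x_1'$ (and similarly for $v$ and $x_r$ relative to $x_{r+1}x_{r+1}'$); combined with the outerface condition, this forces the projections of $u$ and $v$ onto the line containing the wall to lie within the segment $[x_1,x_{r+1}]$, yielding $d_2(u,v)\le d_2(x_1,x_{r+1})\le r$. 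The main obstacle is precisely this last step: turning the planar/outerface constraints into the clean comparison $d_2(u,v)\le d_2(x_1,x_{r+1})$ rather than the weaker triangle-inequality bound $d_2(u,v)\le d_2(u,x_1)+d_2(x_1,x_{r+1})+d_2(x_{r+1},v)\le r+2$, which requires exploiting both walls and both endpoint gadgets simultaneously together with the outerface orientation to rule out embeddings where $u$ or $v$ pokes out beyond the ends of the wall.
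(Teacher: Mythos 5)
Your first part (pointing to the explicit embedding of Figure~\ref{fig:test}) is the same as the paper's, which simply exhibits that figure. The second part, however, has a genuine gap in two places. First, the structural claim you use to invoke Theorem~\ref{thm:whi} is false: after deleting $z_1$ and $z_{r+1}$, the corridor is \emph{not} obtainable from a 3-connected simple graph by subdividing edges. Suppressing the degree-two vertices of that graph leaves only $u,v,x_2,x_r,y_2,y_r$ and produces a 6-cycle with doubled edges; equivalently, $\{x_2,y_2\}$ separates the whole $u$-gadget from the rest, and $u$--$x_1$--$x_2$ and $u$--$x_1'$--$x_2$ are parallel paths of length 2, so the $z$-less graph has many inequivalent planar embeddings (each primed vertex can be flipped across its 4-cycle). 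The vertices $z_1,z_{r+1}$ are precisely what ties $x_1'$ to $y_1'$ and rigidifies the two ends, so deleting them before applying Whitney discards the structure you need.

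Second, and more importantly, the quantitative heart of the statement --- why $u$ and $v$ cannot stick out beyond the walls, i.e.\ why $d_2(u,v)\le r$ rather than the trivial $d_2(u,v)\le r+2$ (a wall of $r$ unit-length edges plus one unit edge at each end) --- is exactly the step you flag as ``the main obstacle'' and do not prove, and the intermediate claims offered in its place do not hold as stated: the wall $x_1,\ldots,x_{r+1}$ need not be straight, so ``the line containing the wall'' is undefined, and even if the projections of $u$ and $v$ onto the line through $x_1$ and $x_{r+1}$ landed inside the segment $[x_1,x_{r+1}]$, this would not yield $d_2(u,v)\le d_2(x_1,x_{r+1})$, since $u$ and $v$ may sit on opposite sides of that line with offsets up to 1 each. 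The paper closes this gap with an angle argument that your proposal never uses: the four neighbours $x_1,x_1',y_1',y_1$ of $u$ are pairwise non-adjacent, hence pairwise at Euclidean distance more than 1 while all lie within distance 1 of $u$, so consecutive angles around $u$ exceed $\pi/3$; since $u$ and $v$ lie on the outer face, the angle at $u$ between the two outer-face edges at $u$, measured on the side containing the other two neighbours, is therefore at least $\pi$, and similarly at $v$. In other words, $u$ and $v$ are reflex vertices of the outer cycle, and this is the mechanism that rules out embeddings where $u$ or $v$ pokes out past the ends of the walls. Your local metric observations about the quadrilateral $u,x_1,x_2,x_1'$ (the $\tfrac{\sqrt{3}}{2}$ bound and the opposite-sides claim) are correct but control $u$ only relative to $x_1,x_1'$, not relative to the direction of $v$, and so cannot substitute for the reflex-angle argument.
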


\begin{proof}
A unit-disk embedding of an $r$-corridor is depicted in Figure~\ref{fig:test} (bottom). The second property follows from the fact
that there is a unique planar graph embedding (up to reflection, which
follows from Theorem~\ref{thm:whi}), and the 4 neighbors of $u$ (and $v$) are non-adjacent, and thus the
angle between $ux_1$ and $uy_1$ is at least $\pi$ (and
similarly for the angle between $vx_{r+1}$ and $vy_{r+1}$).
\end{proof}

If an $r$-corridor with ends $u$ and $v$ is embedded as a unit-disk
graph in
the plane, we say that the corridor is \emph{$\delta$-tight} if $r$ and $d_2(u,v)$
differ by at most $\delta$.

\medskip

Consider a unit-disk graph $G$ embedded in the plane with a $\delta$-tight path
$P'$, for $\delta\le 1$. Assume that $P'$ has a subpath $P$ of
length $|P|\ge |P'|/2$, with
endpoints $u,v$,  such that in any
unit-disk embedding of $G$, $P$ lies at Euclidean distance at
least $4$ from $G-P'$.  As a subpath of $P'$, $P$ is also
$\delta$-tight. Let $G'$ be the graph obtained from $G$ by deleting the
internal vertices of $P$ and adding a copy of a $|P|$-corridor $C$  with
ends $u$ and $v$. We say that we have \emph{replaced the path $P$ by
  the corridor $C$ in $G$}.

\begin{observation}\label{corridorsubst}
The graph $G'$ is a unit-disk graph, and for any unit-disk graph
embedding of $G'$, there exist a unit-disk graph embedding of $G$ that
coincides with that of $G'$ on the vertex-set $\{u,v\}\cup (V(G)\setminus
V(P))=\{u,v\}\cup (V(G')\setminus V(C))$.
\end{observation}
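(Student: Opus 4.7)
The plan is to establish the two assertions in turn. For the first, I would begin from an arbitrary unit-disk embedding of $G$ in which the path $P$ is $\delta$-tight with $\delta\le 1$, delete the disks of the internal vertices of $P$, and place in their stead a fresh unit-disk embedding of the $|P|$-corridor $C$ with ends at $u$ and $v$, as depicted in the bottom of Figure~\ref{fig:test}. Since $|P|-1\le d_2(u,v)\le |P|$ (by $\delta$-tightness) and this is precisely the range of end-distances achievable by a unit-disk embedding of a corridor of length $|P|$, there is enough room lengthwise. The new corridor disks can be placed within a tube of width $O(1)$ around the trace of $P$; because every vertex of $V(G)\setminus V(P')$ lies at Euclidean distance at least $4$ from every interior vertex of $P$, none of those disks meets any of the newly placed corridor disks. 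The resulting placement realizes exactly $G'$, showing that $G'$ is a unit-disk graph.

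For the second assertion, I would fix any unit-disk embedding of $G'$ and apply Theorem~\ref{thm:whi} to the corridor $C$ (whose $3$-connected core is enforced by the endcap gadgets at $u$ and $v$) to conclude that its planar embedding is unique up to reflection; in particular the two walls of $C$ must lie on opposite sides of the segment $[u,v]$ and, together with the endcaps at $u$ and $v$, they enclose a bounded planar region which I will call the \emph{interior} of $C$. Observation~\ref{obs:corridor1} yields $d_2(u,v)\le r=|P|$. I would then define the desired unit-disk embedding of $G$ by keeping every disk of $V(G')\setminus V(C)$ together with those of $u$ and $v$, and by placing the $|P|-1$ internal vertices of $P$ at $|P|-1$ evenly spaced points on the segment $[u,v]$; consecutive such points lie at Euclidean distance $d_2(u,v)/|P|\le 1$, so the induced edges on this new path are exactly those of $P$.

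The main obstacle, and the heart of the argument, is to verify that these inserted points do not create unwanted adjacencies with disks of $V(G')\setminus V(C)$. The key fact is that the interior of $C$, which contains the whole segment $[u,v]$, is free of any disk centered at a vertex of $V(G')\setminus V(C)$: such a disk would have to cross the planar barrier formed by the walls and endcaps, forcing an intersection with the disk of some wall vertex; but no vertex of $V(G')\setminus V(C)$ is adjacent in $G'$ to a wall vertex of $C$. A short geometric check (using that the wall disks have radius $\tfrac12$ and sit on strips at distance $O(1)$ from $[u,v]$) then shows that every point of $[u,v]$ is at Euclidean distance strictly greater than $1$ from every disk of $V(G')\setminus V(C)$, so the inserted points along $[u,v]$ add no edges beyond those of $P$. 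The resulting placement is thus a unit-disk embedding of $G$ coinciding with that of $G'$ on $\{u,v\}\cup(V(G')\setminus V(C))$, as required.
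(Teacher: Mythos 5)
Your first half follows the paper's route (carve a tube around the trace of $P$, delete the interior of $P$, and embed the corridor inside the tube), but it overlooks one point: the distance-$4$ hypothesis only concerns $G-P'$, so the neighbours of $u$ and $v$ that lie on $P'\setminus P$ are at Euclidean distance about $1$ from $u$ and $v$ and are \emph{not} covered by your ``distance at least $4$'' argument, while the corridor's end gadgets ($x_1',y_1',z_1$, etc.) also crowd around $u$ and $v$. The paper's proof handles this explicitly by restricting the region used for the corridor to points at distance more than $1$ from the neighbours of $u$ and $v$ not in $P$; your argument needs the same (patchable) precaution.

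The second half has a genuine gap. You place the $|P|-1$ internal vertices of $P$ evenly spaced on the straight segment $[u,v]$ and claim (i) that this segment lies in the interior of the corridor and (ii) that the induced edges are exactly those of $P$. Neither holds for an arbitrary unit-disk embedding of $G'$: nothing forces the corridor to be embedded straight. Its walls have length $|P|$, but $d_2(u,v)$ can be far smaller than $|P|$ if the corridor bends; Observation~\ref{obs:corridor1} gives only an upper bound on $d_2(u,v)$, and Theorem~\ref{thm:whi} fixes the combinatorial rotation system, not the geometry, so it does not place the chord $[u,v]$ between the two walls. For a bent corridor the chord can leave the region bounded by the walls and come within distance $1$ of disks of $V(G')\setminus V(C)$; worse, if $d_2(u,v)\le |P|/2$ then evenly spaced points on $[u,v]$ are within distance $1$ of each other even when non-consecutive, so your placement creates edges that are absent from $G$ and is not an embedding of $G$ at all. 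The paper avoids both problems by routing the replacement path of length $|P|$ \emph{inside the (possibly curved) region delimited by the walls}, i.e.\ following the corridor's actual shape rather than the straight chord; this is precisely the step your construction cannot replace. (Your ``barrier'' argument for the interior also leaks at the ends: vertices of $G'-C$ adjacent to $u$ or $v$ may legitimately have disks poking into the interior near the ends, since such crossings only force adjacencies with $u$ or $v$, which do exist in $G'$.)
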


\begin{proof}
  Let $\hat{P}$ denote the polygonal chain corresponding to the embedding
  of $P$. Let $R$ be the region of all points at Euclidean distance at
  most 2 from $\hat{P}$, and at distance more than 1 from the
  neighbors of $u$ and $v$ not in $P$. Since $P$ lies at Euclidean distance at
least $4$ from $G-P'$, $R$ is at distance more than 1 from all points
of $V(G)\setminus V(P)$. We now embed $C$ inside $R$.

 Conversely, given a unit-disk embedding of $G'$, we can simply remove
 $V(C)\setminus \{u,v\}$ and as by Observation~\ref{obs:corridor1},
 $d_2(u,v)\le |P|$, we can add a path of length $|P|$ between $u$ and
 $v$ in the region delimited by the walls of $C$ (so the newly added
 vertices do not interfere with the rest of the graph).
\end{proof}

We now consider a tied-arch bridge $G$
    of density $O(k^{1+\varepsilon})$, as
  constructed in the previous section. Recall that $G$ contains
  $\log k$ disjoint paths $P_1',P_2', \ldots$
of length $k/\log k$
with the following properties: In any unit-disk graph embedding of
$G$,
\begin{itemize}
\item at least one of the paths $P_i'$ is $O(k^{-1}\log k)$-tight and
  thus $O(1)$-regular, and
  \item Each $P_i'$ contains a subpath $P_i$ of
    length at least $\tfrac{k}{2\log k}$, such that all vertices of $P_i$ lie
    at distance at least $\Omega(\tfrac{k}{\log k})$ from $G-P_i'$. Moreover,
    if $P_i'$ is $O(k^{-1}\log k)$-tight, then  $P_i$ is also
    $O(k^{-1}\log k)$-tight, and in particular $O(1)$-regular.
  \end{itemize}

  We now replace each path $P_i$ in $G$ by a $|P_i|$-corridor $C_i$ as defined
  above. Let $G'$ be the resulting graph (and note that $G'$ still has
  $O(k^{1+\varepsilon})$ vertices).

  \begin{observation}\label{corridorsubst2}
The graph $G'$ is a unit-disk graph, and in any unit-disk graph embedding of
$G'$, some $\Theta\left (\tfrac{k}{\log k} \right )$-corridor $C_i$  is
$O(k^{-1}\log k)$-tight, and in particular the two walls of $C_i$ are
$O(k^{-1}\log k)$-tight and thus $O(1)$-regular.
\end{observation}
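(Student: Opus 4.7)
The plan is to apply Observation~\ref{corridorsubst} iteratively, once per index $i$. For the first claim, I start from a unit-disk embedding of the tied-arch bridge $G$; by the summary properties of $G$, every subpath $P_i$ is $1$-tight and lies at Euclidean distance at least $\Omega(k/\log k)$ from $V(G)\setminus V(P_i')$. Since the paths $P_i'$ are pairwise disjoint, it follows that each $P_i$ is also at Euclidean distance $\gg 4$ from every $P_j$ with $j\neq i$. I can thus invoke Observation~\ref{corridorsubst} to replace $P_1$ by the $|P_1|$-corridor $C_1$, realized inside a tube of width $O(1)$ around the embedding of $P_1$; this tube remains at Euclidean distance $\gg 4$ from every $P_j$ with $j\geq 2$, so the next replacement is still valid. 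Iterating over all $i$ produces a unit-disk embedding of $G'$.

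For the second claim, I consider an arbitrary unit-disk embedding of $G'$. Applying the converse direction of Observation~\ref{corridorsubst} simultaneously for every $i$ (placing a path of length $|P_i|$ inside the region enclosed by the walls of $C_i$), I obtain a unit-disk embedding of $G$ that agrees with the given embedding of $G'$ on $\bigcup_i\{u_i,v_i\}\cup\bigl(V(G')\setminus\bigcup_i V(C_i)\bigr)$. By the defining tied-arch property, some $P_i'$ is $O(k^{-1}\log k)$-tight in this embedding of $G$, and therefore so is its subpath $P_i$. Since $|P_i|$ equals the corridor parameter $r$ of $C_i$ and $u_i,v_i$ occupy the same positions in both embeddings, we obtain $d_2(u_i,v_i)\geq r-O(k^{-1}\log k)$, which is precisely the statement that $C_i$ is $O(k^{-1}\log k)$-tight in the embedding of $G'$.

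The main obstacle is transferring this tightness to the two walls of $C_i$; once that is done, the $O(1)$-regularity follows immediately from Lemma~\ref{lem:regpath} applied with $\ell=\Theta(k/\log k)$ and $\delta=O(k^{-1}\log k)$, which indeed yields $\gamma=O(1)$. To this end, I will exploit the rigidity of the corridor: by Theorem~\ref{thm:whi}, the corridor (being a subdivision of a $3$-connected graph) has a unique planar embedding up to reflection, forcing the two walls to lie on opposite sides of the segment $[u_i,v_i]$ inside a narrow tube as depicted in Figure~\ref{fig:test}. A careful local analysis at each end $u_i$ and $v_i$ of the corridor, combining the pairwise non-adjacencies among the four neighbors of $u_i$ (and similarly of $v_i$) with the near-equality $d_2(u_i,v_i)\geq r-O(k^{-1}\log k)$, is needed to pin down each wall endpoint $x_1, x_{r+1}$ (and $y_1, y_{r+1}$) along the line $u_iv_i$ up to error $O(k^{-1}\log k)$; this yields $d_2(x_1,x_{r+1})\geq r-O(k^{-1}\log k)$, i.e.\ wall tightness $O(k^{-1}\log k)$, completing the proof.
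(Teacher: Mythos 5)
Your treatment of the two claims the paper actually argues is essentially the paper's own proof: for membership you iterate Observation~\ref{corridorsubst} over the corridors (the paper does this in one sentence), and for tightness you back-substitute a path of length $|P_i|$ for each corridor to recover a unit-disk embedding of the tied-arch bridge $G$ agreeing with the given embedding of $G'$ outside the corridors, invoke the tied-arch property to find a tight $P_i'$ (hence a tight $P_i$), and transfer the tightness to $C_i$ because $u_i,v_i$ are unmoved and $r=|P_i|$. Up to that point the proposal is correct and matches the paper.

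The gap is the last step, which you yourself label ``the main obstacle'' and then do not carry out: deducing that the two \emph{walls} of $C_i$ are $O(k^{-1}\log k)$-tight. This is genuinely the content of the final clause of the observation (and it is what Lemma~\ref{lem:deco} needs later), because the obvious estimate is far too weak: the wall $x_1,\ldots,x_{r+1}$ has length $r$ and the triangle inequality only gives $d_2(x_1,x_{r+1})\ge d_2(u_i,v_i)-2\ge r-2-O(k^{-1}\log k)$, i.e.\ $(2+o(1))$-tightness; feeding $\delta\approx 2$ into Lemma~\ref{lem:regpath} yields $\gamma=\Theta(\sqrt{r})$, not $O(1)$-regularity. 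Your sketched remedy --- pinning the wall endpoints $x_1,x_{r+1},y_1,y_{r+1}$ to within $O(k^{-1}\log k)$ along the line $u_iv_i$ using the pairwise non-adjacency of the four neighbours of $u_i$ (resp.\ $v_i$) together with the near-equality $d_2(u_i,v_i)\ge r-O(k^{-1}\log k)$ --- is asserted, not proved, and the stated ingredients do not obviously suffice: those non-adjacencies only force pairwise angular separations greater than $\pi/3$ at $u_i$, which is compatible with $x_1$ lying at distance about $1$ from $u_i$ essentially in the direction of $v_i$ (with $x_1',y_1',y_1$ occupying the remaining sector), and a constant forward displacement of a wall endpoint at each end is exactly what would create constant slack in that wall. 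So the proposal terminates in an unestablished claim; to be fair, the paper's own proof dispatches this step with the single word ``thus'', so you have correctly located where the remaining work lies, but as written your proof of the observation is incomplete at precisely that point.
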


\begin{proof}
By Observation~\ref{corridorsubst} (applied to each corridor $C_i$),
any unit-disk embedding of $G'$ can be transformed in a unit-disk
embedding of $G$ by replacing each corridor $C_i$ by the original path
$P_i $ and leaving the rest of the embedding unchanged. In the
resulting embedding of $G$, one of the paths $P_i$ is $O(k^{-1}\log
k)$-tight. It follows that in the original embedding of $G'$, the
corridor $C_i$ is also $O(k^{-1}\log
k)$-tight, and thus so are its two walls. This implies that the two
walls of $C_i$  are $O(1)$-regular. 
\end{proof}

\subsection{Decorating the corridors}

Now that we have found a corridor in which the walls are
$O(1)$-regular, we are ready to add the gadgets that will express the disjointness. Given $A,B\subseteq \{1,\ldots,\ell\}$, one wall will belong to $L(A)$
and the other to $R(B)$.
Intuitively, our goal will be to decorate each 
wall with sculptures, the locations of which are given by the subset $A$ (resp.\ the subset $B$)
for walls belonging to $L(A)$ (resp.~$R(B)$).
The main
idea will be to make sure that a sculpture on one wall cannot be
placed next to a sculpture on the other wall because the corridor is too
narrow for that (this corresponds to the fact that $A$ and $B$ must not intersect). One important point is
that we do not know in advance which of the $O(\log k)$ corridors will
have $O(1)$-regular walls, so we have to decorate all of them in
advance (in the same way).

\medskip

Consider an $r$-corridor $C$ in some unit-disk graph $G$, and
sets $A,B\subseteq \{1,\ldots,\ell\}$, with $r\ge 4\ell+8$. Let $P_A=x_1,\ldots,x_r$ and $P_B=y_1,\ldots,y_r$
be the two walls of $C$. By \emph{decorating the walls of $C$} with $A$ and $B$, we
mean the following: for each $1\le i\le \ell$,
\begin{itemize}
\item if $i\in A$, we add two 3-vertex paths to $G$, whose central
  vertices are adjacent to $x_{4i}$ and $x_{4i+1}$, (the union of the
  two 3-vertex paths is called \emph{a sculpture on $P_A$ at the
    $i$-th location}) and
  \item if $i\in B$, we add two 3-vertex paths to $G$, whose central
  vertices are adjacent to $y_{4i}$ and $y_{4i+1}$ (the union of the
  two 3-vertex paths is called \emph{a sculpture on $P_B$ at the
    $i$-th location}).
\end{itemize}
This is depicted in Figure~\ref{fig:test} (bottom), with $\ell=2$, $A=\{1\}$ and
$B=\{2\}$, with the sculpture on $P_A$ represented in red and the
sculpture on $P_B$ represented in blue.

\begin{observation}\label{obs:deco}
If $G$ is a unit-disk graph embedded in the plane, with an $r$-corridor
$C$ which is at Euclidean distance at least 3 from the vertices at
distance at least 3 from $C$ in $G$, and two disjoint sets
$A,B\subseteq\{1,\ldots,\ell\}$ with $r\ge 4\ell+8$, then the graph obtained from $G$ by
decorating the walls of $C$ with $A$ and $B$ is a unit-disk graph.
\end{observation}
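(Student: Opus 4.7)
The plan is constructive: starting from the fixed unit-disk embedding of $G$, I will explicitly place the disks corresponding to the two 3-vertex paths attached at each decorated location, and then verify that the resulting family of disks has the correct intersection graph.

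First, I would orient the corridor $C$ inside the given embedding of $G$. By Observation~\ref{obs:corridor1} and Theorem~\ref{thm:whi} (applied, up to subdivisions, to the 3-connected backbone of the corridor), the embedding of $C$ is forced up to reflection, and in particular the two walls $P_A$ and $P_B$ bound a well-defined ``inside'' strip of the corridor, while each wall has a clear ``outside'' side on which no vertex of $C$ lies. The hypothesis that $C$ is at Euclidean distance at least $3$ from every vertex of $G$ at graph-distance at least $3$ from $C$ means that there is a tubular neighborhood of each wall, of width greater than $2$ on the outside, containing no disk of $G$ other than the disks of vertices that are direct neighbors of the wall, which are all in $V(C)$. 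This is the free region in which I will place the new disks.

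Second, for each $i\in A$, I would place the two $3$-vertex paths of the $i$-th sculpture on the outside of $P_A$, with their central vertices embedded as unit-disks just outside $x_{4i}$ and $x_{4i+1}$ (at Euclidean distance $\leq 1$ from $x_{4i}$ and $x_{4i+1}$ respectively, and at distance $> 1$ from all other $x_j$, $y_j$, and from each other); the two remaining vertices of each $3$-vertex path can then be placed further outwards along a short radial segment, at pairwise distance slightly less than $1$, so that they only intersect their designated neighbors. The same construction is performed symmetrically on the outside of $P_B$ for every $i\in B$. Because $A$ and $B$ are disjoint, for every index $i$ at most one of the walls receives a sculpture at position $i$, so the two sides of the corridor never need to fit conflicting sculptures opposite each other; this is where the disjointness hypothesis is used.

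Third, I would verify the non-interference between distinct sculptures. Consecutive decorated locations on a single wall occupy index pairs $\{4i,4i+1\}$ and $\{4(i{+}1),4(i{+}1)+1\}$, so the wall vertices carrying them are at graph-distance at least $3$ along the wall and hence at Euclidean distance bounded below by a constant; choosing the outward radial direction for the sculptures keeps them in disjoint cones, so their disks do not meet. The requirement $r\ge 4\ell+8$ guarantees that all indices $4i+1$ with $i\le \ell$ lie in the valid range of wall vertices, i.e.~inside $\{x_1,\dots,x_{r+1}\}$ and $\{y_1,\dots,y_{r+1}\}$, so every requested attachment point exists. Finally, the distance-$3$ condition on $C$ ensures that the newly placed disks, which lie within Euclidean distance $2$ of a wall vertex of $C$, are at Euclidean distance strictly greater than $1$ from every disk of $G$ corresponding to a vertex at graph-distance $\geq 3$ from $C$; for vertices at graph-distance $1$ or $2$ from $C$ (i.e.\ in $V(C)$ or attached to $u,v$) a direct local inspection shows no unwanted intersection either, since these are confined to the ends of the corridor and the sculptures are placed along the interior of the walls.

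The only mildly delicate point — hence the ``main obstacle'' — is the explicit choice of radial offsets and of the positions of the two outer vertices of each $3$-vertex path, ensuring simultaneously that the intersection pattern matches the added edges exactly (each central vertex adjacent only to its two designated wall vertices and to its two path-mates, and the outer vertices adjacent only to their central vertex); this is a routine but careful planar placement argument, which follows the same pattern as the illustration in Figure~\ref{fig:test}. Once these placements are fixed, the resulting set of unit-disks realizes exactly the decorated graph, which is therefore a unit-disk graph.
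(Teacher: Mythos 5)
There is a genuine gap at the central step of your construction: you place \emph{both} 3-vertex paths of a sculpture on the outside of the wall, and this cannot realize the required adjacencies in the embeddings that matter. Recall that by the definition of decorating, \emph{each} of the two central vertices must be adjacent to \emph{both} $x_{4i}$ and $x_{4i+1}$ (your phrase ``at Euclidean distance $\leq 1$ from $x_{4i}$ and $x_{4i+1}$ respectively'' suggests you may have read this as one path per wall vertex, which would already give the wrong intersection graph). So each central vertex must lie in the lens $D(x_{4i},1)\cap D(x_{4i+1},1)$, while the two central vertices are non-adjacent and hence must be at Euclidean distance strictly greater than $1$ from each other. Now in the embeddings for which this observation is actually used (the canonical embedding of $G'$, where the corridor replaces a tight path, so consecutive wall vertices sit at distance essentially $1$ apart along a nearly straight wall), take $d_2(x_{4i},x_{4i+1})=1$: each closed half of the lens cut by the line through $x_{4i}$ and $x_{4i+1}$ is contained in a Reuleaux triangle of width $1$ and hence has diameter $1$, so two points at mutual distance $>1$ must lie strictly on opposite sides of the wall. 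Even with spacing slightly below $1$, the only pairs at mutual distance $>1$ hug the wall line near the tips of the lens and then come within distance $1$ of $x_{4i-1}$ or $x_{4i+2}$, creating forbidden adjacencies. So ``both outside'' is geometrically impossible exactly where the construction needs to work.

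The paper's proof places one 3-vertex path of each sculpture \emph{inside} the corridor and the other outside; the whole purpose of the vertices $z_1$ and $z_{r+1}$ in the definition of a corridor is to force the two walls at least $\sqrt{3}$ apart so that the inside path fits, as in Figure~\ref{fig:test}. Your proposal never uses $z_1$, $z_{r+1}$ or the interior of the corridor, which is a sign that the key idea is missing. Note also that if a sculpture could always be tucked entirely outside its wall, the only-if direction of Lemma~\ref{lem:deco} would collapse (sculptures at a common location on the two walls would never be forced to collide), so the one-inside-one-outside placement is not a cosmetic choice but the mechanism the whole lower bound relies on. The remaining ingredients of your plan (the distance-$3$ hypothesis providing free space outside the walls, non-interference of consecutive sculptures since their attachment vertices are non-adjacent, and $r\ge 4\ell+8$ guaranteeing that all attachment points exist) are fine and consistent with the paper, but the placement step must be redone with one path inside the corridor.
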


\begin{proof}
This follows from the definition of a corridor: the purpose of the vertices $z_1$ and
$z_{r+1}$ is to make sure that the line segments $[x_1,x_{r+1}]$ and
$[y_1,y_{r+1}]$ are at Euclidean distance at least $\sqrt{3}$ from
each other, which allows one of the two 3-vertex paths to be placed
inside the corridor while the other is placed outside, for each
location $i$, as illustrated in Figure~\ref{fig:test} (bottom).
\end{proof}

It remains to prove that when $A$ and $B$ are not disjoint, if the corridor is
sufficiently tight (and the walls sufficiently regular, as a
consequence), then the graph obtained by decorating the walls with $A$
and $B$ is \emph{not} a unit-disk graph anymore. For this we will
need to assume that $\ell$ is sufficiently small compared to the size
of the corridor (but
still linear in this size).

\begin{lemma}\label{lem:deco}
Let $G'$ be the unit-disk graph with $O(k^{1+\varepsilon})$ vertices from
Observation~\ref{corridorsubst2}. Then there
is $\ell=\Omega(k/\log k)$ such that for any sets $A,B\subseteq\{1,\ldots,\ell\}$, the graph obtained from $G'$ by decorating the walls of each of the $\Theta \left (\tfrac{k}{\log k} \right )$-corridors of $G'$ with $A$ and $B$ is
a unit-disk graph if and only if $A$ and $B$ are disjoint.
\end{lemma}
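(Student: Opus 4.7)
The plan is to choose $\ell$ to be the largest integer with $4\ell+8\le r$, where $r=\Theta(k/\log k)$ is the common length of the $\log k$ corridors in $G'$; this gives $\ell=\Omega(k/\log k)$, as required.

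For the ``$A\cap B=\emptyset$ implies decorated graph is a unit-disk graph'' direction, I would start from the specific unit-disk embedding of $G'$ produced by the construction preceding Observation~\ref{corridorsubst2}: take a unit-disk embedding of the tied-arch bridge $G$ and then apply Observation~\ref{corridorsubst} to replace each path $P_i$ by its corridor $C_i$, embedding each $C_i$ in the narrow region originally occupied by $P_i$. In this embedding, each $C_i$ lies at Euclidean distance $\Omega(k/\log k)$ from every vertex of $V(G')\setminus V(C_i)$ that is at graph distance at least $3$ from $C_i$, hence at distance at least $3$ for $k$ large enough. Since $A$ and $B$ are disjoint, Observation~\ref{obs:deco} applies to each corridor; applying it independently to all corridors (the newly added sculpture vertices remain within Euclidean distance $2$ of their corridor and thus do not interact with other corridors' decorations) yields a valid unit-disk embedding of the decorated graph.

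For the converse, suppose for contradiction that the decorated graph admits a unit-disk embedding $\mathcal{E}$ while some $i$ lies in $A\cap B$. The restriction of $\mathcal{E}$ to $V(G')$ is a unit-disk embedding of $G'$, so by Observation~\ref{corridorsubst2} there is a corridor $C_j$ that is $O(k^{-1}\log k)$-tight and whose two walls $P_A=x_1,\dots,x_{r+1}$ and $P_B=y_1,\dots,y_{r+1}$ are $O(1)$-regular. Combined with the tightness of $C_j$, $O(1)$-regularity forces $x_s$ and $y_s$ to both lie within Euclidean distance $O(1)$ of the $s$-th evenly subdivided point on the segment $[u,v]$ joining the shared ends of the walls. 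In particular, the four wall vertices $x_{4i},x_{4i+1},y_{4i},y_{4i+1}$ cluster in a region of diameter $O(1)$. Crucially, the decorated graph is triangle-free (the corridor contains no triangles and each sculpture $P_3$ attaches to the rest via a single adjacency creating no triangle), so by Proposition~\ref{pro:trifree} the planar graph embedding associated to $\mathcal{E}$ is uniquely determined, and the inside/outside of the corridor $C_j$ is well defined.

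The main obstacle, which I expect to require the most care, is turning this geometric setup into a contradiction. Inspecting the unique planar embedding near position $i$, I would argue that the four $P_3$'s of the two sculptures at position $i$ (two attached at $x_{4i},x_{4i+1}$ and two at $y_{4i},y_{4i+1}$) are forced by the planar structure to distribute one $P_3$ from each sculpture into the narrow interior strip of $C_j$ between the two walls. Then, using the explicit $O(1)$-regularity bound to control the corridor width near position $i$, a direct geometric case analysis (essentially a packing argument in the small $O(1)$-diameter region) shows that two non-adjacent sculpture vertices coming from opposite walls must land within Euclidean distance $1$ of each other, producing a forbidden adjacency and thus the desired contradiction. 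The delicate point will be quantifying the constants: the regularity constant must be small enough (and hence the tightness of $C_j$ strong enough) to guarantee that the interior strip is strictly too narrow to host two $P_3$'s from opposite walls without collision. Choosing $\ell$ slightly smaller than $r/4$, if necessary, gives additional room away from the corridor endpoints where the walls are more constrained, which is where one should argue the contradiction.
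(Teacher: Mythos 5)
Your overall strategy is the same as the paper's (Observation~\ref{obs:deco} for the disjoint direction; restrict the embedding to $G'$, apply Observation~\ref{corridorsubst2} to find a tight corridor with regular walls, and force a collision between sculptures at a common location $i$), but the converse has a genuine quantitative gap. The only regularity you invoke is the full-wall $O(1)$-regularity, whose constant $\gamma$ comes from the first part of Lemma~\ref{lem:regpath} with $\delta=O(k^{-1}\log k)$ and wall length $\Theta(k/\log k)$; this constant is not under our control and could be, say, $5$, in which case near the middle of the corridor the two walls may be a dozen units apart and both sculptures at location $i$ fit comfortably, so ``the four wall vertices cluster in a region of diameter $O(1)$'' yields no contradiction. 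You cannot repair this by making the tightness ``strong enough'' either: it is fixed at $O(k^{-1}\log k)$ by Observation~\ref{corridorsubst2}. The missing ingredient is the \emph{second} part of Lemma~\ref{lem:regpath}: for a sufficiently small constant $\alpha$ the walls are $\left(1/10,\alpha\tfrac{k}{\log k}\right)$-regular, i.e.\ only a constant-fraction \emph{prefix} of each wall is guaranteed to be within $1/10$ of the ideal points on two segments that are at distance at most $2$ from each other, and the sculptures must be confined to that prefix. Consequently $\ell$ cannot be ``the largest integer with $4\ell+8\le r$'' (nor ``slightly smaller than $r/4$''); it must be a sufficiently small constant fraction of $r$, dictated by $\alpha$ (hence by $\gamma$), which is still $\Omega(k/\log k)$ and is exactly what the lemma claims. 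With your choice of $\ell$, the argument fails for any $i\in A\cap B$ whose attachment vertices $x_{4i},x_{4i+1},y_{4i},y_{4i+1}$ lie outside the controlled prefix.

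A secondary flaw: your claim that the decorated graph is triangle-free, used to invoke Proposition~\ref{pro:trifree} and obtain a ``uniquely determined'' planar embedding, is false — $G'$ contains many triangles (the stripes of Section~\ref{sec:qrg} are glued triangles), and Proposition~\ref{pro:trifree} does not assert uniqueness of the planar embedding in any case. The paper's proof needs no such planarity or inside/outside analysis at this stage: once the relevant prefixes of both walls are within $1/10$ of two segments at distance at most $2$ apart, two non-adjacent vertices of the two sculptures at a common location are forced within Euclidean distance $1$, which already contradicts the embedding.
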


\begin{proof}
  By Observation~\ref{obs:deco}, we only need to prove the only if direction.
  Fix any unit-disk embedding of the graph $G''$ obtained by
  decorating the walls  of each of the $\log k$ many $r$-corridors of $G'$
  with $A$ and $B$, where $r=\Theta \left (\tfrac{k}{\log k}\right )$.
By Observation~\ref{corridorsubst2} applied to the restriction of the
unit-disk embedding to $G'$, some $r$-corridor $C_i$  is
$O(k^{-1}\log k)$-tight, and in particular the two walls of $C_i$ are
$O(k^{-1}\log k)$-tight and thus $O(1)$-regular. Let
$P_A=x_1,\ldots,x_{r+1}$ and $P_B=y_1,\ldots,y_{r+1}$ be the walls of $C_i$. By Lemma~\ref{lem:regpath} there exists a constant $\alpha>0$ such that $P_A$
and $P_B$ are $\left (1/10,\alpha \tfrac{k}{\log k} \right )$-regular, so the first
$\Theta(k/\log k)$ points of each of $P_A$ and $P_B$ are at Euclidean
distance at most $1/10$
from their ideal points on the two line segments connecting the endpoints
of $P_A$ and the endpoints of $P_B$. As these two line segments lie at
distance at most 2 apart, this does not leave enough space to place a
sculpture on $P_A$ and a sculpture on $P_B$ at the same location $i$,
as
two vertices from these sculptures would be at distance at most 1 from
each other
(while these vertices are non adjacent in the graph).
\end{proof}

The graph $G'$ from Lemma~\ref{lem:deco} is illustrated in Figure~\ref{fig:resume}.

\subsection{Disjointness-expressivity of unit-disk graphs}

We are now ready to prove the main result of this section.

\begin{theorem}\label{thm:udg}
  For any $\delta>0$, the class of unit-disk graphs is
  $(O(\log n),1-\delta)$-disjointness-expressing.
\end{theorem}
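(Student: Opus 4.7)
The plan is to extract a disjointness-expressing structure directly from the decorated tied-arch bridge $G''$ produced in Lemma~\ref{lem:deco}, together with a small cutset that separates its $A$-dependent and $B$-dependent parts.

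First I would fix $\varepsilon = \delta/2$ (any $\varepsilon > 0$ with $(1+\varepsilon)(1-\delta) < 1$ will do) and, given a positive integer $N$, pick $k = \Theta(N\log N)$ so that the quantity $\ell = \Omega(k/\log k)$ appearing in Lemma~\ref{lem:deco} satisfies $\ell \ge N$. I would then set $g(L(A),R(B))$ to be exactly the graph $G''$ that Lemma~\ref{lem:deco} produces from the tied-arch bridge $G'$ by decorating each of its $\Theta(\log k)$ corridors with $A$ and $B$. Counting vertices gives $|V(G'')| = O(k^{1+\varepsilon}) = O((N\log N)^{1+\varepsilon})$, which is at most $\alpha N^{1/(1-\delta)}$ for a large enough constant $\alpha$, since $(1+\varepsilon)(1-\delta) < 1$. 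This verifies condition~\ref{item:total-size} with exponent $\formerepsilon = 1-\delta$.

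Next I would specify the cutset. For each corridor $C_i$ with walls $x_0,\ldots,x_{r+2}$ and $y_0,\ldots,y_{r+2}$ (sharing endpoints $u_i = x_0 = y_0$ and $v_i = x_{r+2} = y_{r+2}$) and auxiliary vertices $x_1', y_1', x_{r+1}', y_{r+1}', z_1, z_{r+1}$, I would place the eight vertices $u_i,v_i,x_1',y_1',x_{r+1}',y_{r+1}',z_1,z_{r+1}$ into $S$, so that $|S| = O(\log k)$. I would let $R(B)$ consist, for each corridor, of the wall $y_1,\ldots,y_{r+1}$ together with the sculptures that $B$ attaches to it, and let $L(A)$ be everything else (so $L(A)$ contains the entire tied-arch bridge and the $x$-walls with $A$'s sculptures). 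A direct inspection of the corridor definition shows that every edge of $G''$ lies entirely in $L(A)$, entirely in $R(B)$, or has both endpoints in $S$, so identifying the copies of $S$ indeed recovers $g(L(A),R(B)) = G''$. The whole construction has bounded degree (the bridge is built from bounded-degree gadgets and each corridor vertex has constant degree), hence $|N[S]| = O(\log n)$; and since sculptures are attached only at wall positions of the form $4j,4j+1$ with $j \ge 1$, they lie outside $N[S]$, so the subgraph induced by $N[S]$ depends neither on $A$ nor on $B$. This is condition~\ref{item:indep-size} with $s = O(\log n)$.

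Condition~\ref{item:no-intersection} is then exactly the statement of Lemma~\ref{lem:deco}. The genuine obstacle of the whole theorem is not this final bookkeeping but everything that precedes it: forcing, in \emph{every} unit-disk embedding of $G'$, at least one corridor wall to be $O(1)$-regular, so that a common index $i \in A \cap B$ must produce two colliding sculptures. That rigidity was delivered by the quasi-rigid graphs of Section~\ref{sec:qrg}, the stripe curvature estimates of Section~\ref{sec:stri}, and the recursive midpoint mechanism of Lemmas~\ref{lem:fixpath} and~\ref{lem:regpath}. Combining the three verified conditions yields that the class of unit-disk graphs is $(O(\log n), 1-\delta)$-disjointness-expressing, as claimed.
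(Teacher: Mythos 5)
Your proposal is correct and follows essentially the same route as the paper: it uses the decorated graph of Lemma~\ref{lem:deco} as $g(L(A),R(B))$, separates the $A$-decorated walls from the rest by an $O(\log n)$-size cutset, and verifies the three conditions with the same parameter bookkeeping ($k=\Theta(N\log N)$, $O(k^{1+\varepsilon})$ vertices, $(1+\varepsilon)(1-\delta)<1$). The only, immaterial, difference is the choice of $S$: you take the eight corridor-end vertices of each corridor, whereas the paper trims each $A$-wall by one vertex at each end and takes the two endpoints of the reduced wall, both choices giving $|N[S]|=O(\log n)$ and an induced subgraph on $N[S]$ independent of $A$ and $B$.
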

\begin{proof}
Let $N$ be a natural integer and $A,B\subseteq\{1, \ldots, N\}$. 
Let $k$ be such that $N=\Theta \left
  (\tfrac{k}{\log k} \right )$ and such that  we can apply Lemma~\ref{lem:deco} with $\ell=N$, which gives us a graph $G'_{A,B}$ where each of the corridors is decorated with $A$ and $B$. 
  Observe that $G'_{A,B}$ contains $\log k$ corridors, each
of length $\Theta\left (\tfrac{k}{\log k} \right)$. Before decoration, the graph has  $O(k^{1+\varepsilon})$ vertices so, even after decorating the walls
of the corridors, the resulting graph $G'_{A,B}$ still contains
$O\left (k^{1+\varepsilon} \right)$ vertices. It follows that the graphs have at most $O((N\log
N)^{1+\varepsilon})=O(N^{1+\delta})$ vertices for any
$\delta>\varepsilon$.

For a corridor $C$ in $G'_{A,B}$, let $P_A$ be the wall of $C$
decorated with $A$, and let $P_A^-$ be $P_A$ minus its two endpoints
(i.e., if $P_A$ consists of the path $x_1,\ldots,x_{r+1}$ plus 
decorations, then $P_A^-=P_A-\{x_1,x_{r+1}\}$). We say that $P_A^-$ is
a \emph{reduced decorated wall}, with endpoints $x_2$ and $x_r$.
We define $L(A)$ to be the subgraph of  $G'_{A,B}$ induced by the
union of the reduced decorated walls $P_A^-$ in each of the corridors
of $G'_{A,B}$. We then define the set of special vertices $S$ as the
union of all endpoints of the reduced decorated walls $P_A^-$. The
graph $R(B)$ is then defined as the subgraph of $G'_{A,B}$ induced by
$(V(G'_{A,B})\setminus L(A))\cup S$.

By construction,
$G'_{A,B}$ is the graph $g(L(A), R(B))$ obtained from $L(A)$ and $R(B)$
by gluing them along $S$.
As there are $\log k=O(\log N)$ corridors and the closed neighborhood
of each vertex of $S$ contains 4 vertices, the size
of $N[S]$ is  $O(\log N)$. Moreover, the subgraph induced by the
closed neighborhood of each vertex of $S$ is independent of $A$ and
$B$, and as the vertices of $S$ are pairwise at distance at least 4
apart, the subgraph induced by the closed neighborhood of $S$ is the
disjoint union of all subgraphs induced by $N[s]$, for $s\in S$, which is independent of $A$
and $B$.
The fact that $G'_{A,B}$ is a unit-disk graph if and only if $A$ and $B$ are disjoint is a direct consequence of Lemma~\ref{lem:deco}.
\end{proof}

Using Theorem~\ref{thm:dex}, together with Corollary~\ref{cor:nlogn}, we immediately deduce the following.

\begin{theorem}
The local complexity of the class of unit-disk
graphs is $O(n\log n)$ and $\Omega(n^{1-\delta})$ for any $\delta>0$.
\end{theorem}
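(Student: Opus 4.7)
The plan is very short since both halves are direct consequences of machinery already set up. For the upper bound, I would simply invoke Corollary \ref{cor:nlogn}, which gives local complexity $O(n\log n)$ for unit-disk graphs via the counting-based scheme of Section \ref{sec:linear} applied to the bound of $2^{O(n\log n)}$ on the number of unit-disk graphs from \cite{MCDIARMID2014413}. Nothing new is needed here.

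For the lower bound, I would combine Theorem \ref{thm:udg} with Theorem \ref{thm:dex}. Fix a target $\delta > 0$ and choose an auxiliary $\delta' \in (0, \delta)$. By Theorem \ref{thm:udg}, the class of unit-disk graphs is $(O(\log n), 1 - \delta')$-disjointness-expressing. Plugging these parameters into Theorem \ref{thm:dex} yields a proof labeling scheme complexity lower bound of
\[
  \Omega\!\left(\frac{n^{1-\delta'}}{\log n}\right).
\]
For any sufficiently large $n$ one has $n^{\delta - \delta'} \ge \log n$, so $n^{1-\delta'}/\log n = \Omega(n^{1-\delta})$. This absorbs the $\log n$ factor in $s$ into a slightly larger exponent and gives the desired $\Omega(n^{1-\delta})$ bound.

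Since the theorem statement explicitly advertises both bounds and both follow directly from previously proved results, there is no real obstacle at this step; the only thing to be careful about is the bookkeeping between the two $\delta$'s so that the final exponent is the one claimed. All the heavy lifting (the construction of tied-arch bridges, Lemma \ref{lem:curv2}, the decorated corridors, and Lemma \ref{lem:deco}) has already been done in Theorem \ref{thm:udg}, and the Pigeonhole argument of Theorem \ref{thm:dex} is what converts disjointness-expressivity into the quasi-linear lower bound.
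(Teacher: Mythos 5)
Your proposal is correct and matches the paper's own argument, which likewise deduces the theorem directly from Corollary~\ref{cor:nlogn} (upper bound) and Theorems~\ref{thm:udg} and~\ref{thm:dex} (lower bound). Your explicit bookkeeping with $\delta' < \delta$ to absorb the $\log n$ factor coming from $s = O(\log n)$ is exactly the small step the paper leaves implicit.
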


\section{Open problems}\label{sec:ccl}
 In this paper we have obtained a number of optimal (or close to
 optimal) results on the local complexity of geometric graph
 classes. Our proofs are based on a new notion of rigidity. It is
 natural to ask which other graph classes enjoy similar
 properties. A natural candidate is the class of segment graphs
 (intersection graphs of line segments in the plane), which have
 several properties in common with unit-disk graphs, in particular the
 recognition problems for these classes are $\exists
 \mathbb{R}$-complete (i.e., complete for the existential theory of
 the reals) and the minimum bit size for representing an embedding of
 some of these
 graphs in the plane is at least exponential in $n$. We believe
 that the local complexity of segment graphs (and that of the more
 general class of string graphs) is at least polynomial in $n$. More generally, is it true that all classes of graphs for which the
 recognition problem is hard for the
 existential theory of the reals have polynomial local complexity?

 It might also be interesting to investigate the smaller class of 
 \emph{circle graphs} (intersection graphs of chords of a circle).
 The authors of~\cite{JMR23} proved that the closely related class of
 permutation graphs has logarithmic local complexity. It is quite
 possible that the same holds for circle graphs. See \cite{JMR22} for
 results on interactive proof labeling schemes for this class and
 related classes.

 \medskip

 We proved that 1-planar graphs have local complexity $\Theta(n)$.
What can we say about the local complexity of other graph classes
defined with constrained on their drawings in the plane? For instance
is it true that for every $k\ge 2$, the local complexity of the class of
graphs with queue number at most $k$ is polynomial? What about graphs
with stack number at most $k$?

\medskip

We have given the first example of  non-trivial hereditary (and even
monotone) classes of local complexity $\Omega(n)$. Can this be
improved? Are there hereditary (or even monotone) classes of local
complexity $\Omega(n^{c})$ for $c>1$?

\medskip

As a final remark, it was suggested by one of the reviewers that the lower bound techniques used in the
paper can be used to prove lower bounds for the recognition problem in
the studied classes in different models of
computation, namely in the \textsf{CONGEST} model of distributed
computing where messages exchanged by nodes have logarithmic size, and
in the streaming model.

\bigskip

 \paragraph{\bf Acknowledgement} The authors would like to thank the
 reviewers of an earlier version of the paper for their corrections
 and helpful suggestions.

\bibliographystyle{plain}
\bibliography{local.bib}

\appendix
\section{Proofs from Section~\ref{sec:stri}}
\label{sec:prstri}

\lemcurv*

\begin{proof}
Consider a $(k,\delta)$-almost-equilateral embedding of a stripe
$S_\ell$ with  $k \gg \delta$. Up to exchanging the roles of $u_i$ and
$v_i$, the Euclidean distance between $u_0$
and $u_\ell$ is minimized when the distances $d_2(u_i,u_{i+1})$ ($0\le i
\le \ell-1$) are minimized and the angles $\angle u_{i-1}u_{i}u_{i+1}$  ($1\le i \le \ell-1$) are maximized. Since all the constraints on these distances and angles are
local, in an embedding minimizing $d_2(u_0,v_\ell)$ all these lengths
are equal, and all these angles are equal. It follows that the Menger
curvature of all consecutive triples is the same (and only depends on
$k$ and $\delta$, since the local constraints only depend on $k$ and
$\delta$).
\end{proof}

\lemcurvv*

\begin{proof}
  Consider a $(k,\delta)$-almost-equilateral embedding of $S_\ell$ where
$d_2(u_0,v_\ell)\le m+1$.
The second part of the statement simply follows from the fact that in
any $(k,\delta)$-almost-equilateral embedding of $S_\ell$, the
diameter of the corresponding point set is at most $(k+\delta)\ell$,
so we only need to prove the first part of the statement. By Lemma~\ref{lem:curv}, for each $0\le i<j\le \ell$, the Euclidean distance
between $u_i$ and $u_j$ is at least the distance between the endpoints
of some
circular arc of radius $\rho$ that subtends an angle $\tfrac{j-i}\ell
\cdot \pi$, that is $d_2(u_i,u_j)\ge 2 \rho \sin(\tfrac{j-i}\ell
\cdot \tfrac\pi{2})$. It follows that for any $0\le i \le \ell$,  $d_2(u_0,u_i)\ge 2 \rho
\sin(\alpha)$ and $d_2(u_i,u_\ell)\ge 2 \rho
\cos(\alpha)$, where 
$\alpha:=\tfrac{i\pi}{2\ell}$. On the other hand, note that
$d_2(u_0,v_\ell)\le m+1\le 2\rho+O(k)$, where the second inequality
follows from the definition of $\ell$. By Apollonius' theorem,
\begin{eqnarray*}
  d_2(u_i,c)^2 & = &
                   \tfrac12(d_2(u_0,u_i)^2+d_2(u_i,v_\ell)^2)-d_2(u_0,c)^2\\
  & \ge & 2 \rho^2 \sin^2 \alpha+2 \rho^2 \cos^2 \alpha-2k\rho \cos
          \alpha +2k^2-(\rho^2+O(\rho k)+O(k^2))\\
  & \ge & \rho^2-2k\rho -O(k^2)\\
          & \ge & \rho^2(1-O(k/\rho+k^2/\rho^2))
\end{eqnarray*}

\begin{figure}[htb]
\centering
\includegraphics[scale=0.7]{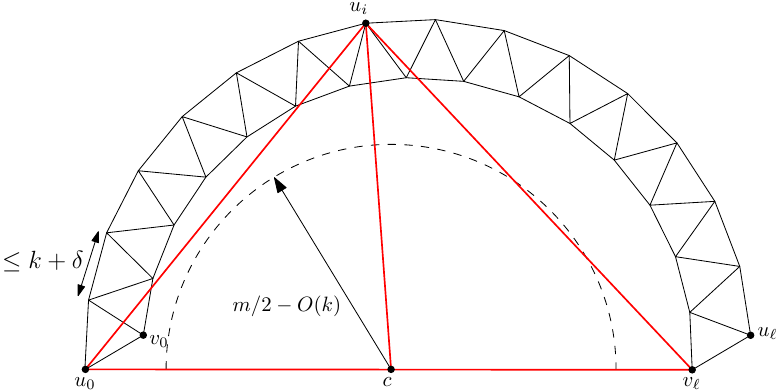}
\caption{Illustration of the proof of Lemma \ref{lem:curv2}.}
\label{fig:stripe3}
\end{figure}
It follows that \[d_2(u_i,c)\ge \rho \cdot (1-O(k/\rho+k^2/\rho^2))\ge
  \rho -O(k+k^2/\rho)\ge \rho-O(k),\]
where the final inequality follows from the fact that $k^2/\rho=O(k^2/m)=O(\delta)=o(k)$.
A
similar computation applies to
the points $v_i$, which concludes the proof.
\end{proof}

\section{Proofs from Section~\ref{sec:qrg}}
\label{sec:prqrg}

\lemtk*

\begin{proof}
Consider a unit-disk graph embedding of $T_n$, and let $P$ be the
$(8n+16)$-gon corresponding to the vertices of $C$. We denote by $P_1$
and $P_2$ the two polygonal subchains of $P$ with endpoints $c$ and
$c'$.  Let $R_1$ be the
region bounded by $P_1$ and the line-segment $[c,c']$, and let
$R_2$ be the
region bounded by $P_2$ and the line-segment $[c,c']$. It might be
the case that one of the two regions contains the other, but in any case
the region $R$ bounded by $P$ is contained in the union of $R_1$ and
$R_2$.

Note that $T_n$ has an independent set $S$ of size at least
$(2n+2)^2$ (the set of red vertices, or the set of blue vertices in
Figure~\ref{fig:tn}). Discard the vertices of $S$ that are at
Euclidean distance at most
$\tfrac12$ from $P$ or from the line-segment $[c,c']$ (there are at
most $O(n)$ such vertices), and let $S'$ be the resulting independent
set (of size $(2n+2)^2-O(n)$). The disks of radius $\tfrac12$ centered
in the points corresponding to
$S'$ are pairwise disjoint, and all contained in $R$, and thus
contained in $R_1$ or $R_2$. It follows that one of $R_1$ or $R_2$
(say $R_1$), contains at least $\tfrac12 (2n+2)^2-O(n)=2(n+1)^2-O(n)$ disjoint disks of radius
$\tfrac12$, and thus has area $A(R_1)\ge
\tfrac{\pi}{4}(2(n+1)^2-O(n))=\tfrac{\pi}{2}(n+1)^2-O(n)$. Let $d:=d_2(c,c')$. Note that $R_1$ and
$R_2$ have perimeter at most $4n+8+d$, and thus by the isoperimetric
inequality in the plane, $A(R_1)\le \tfrac1{4\pi}(4n+8+d)^2$. It follows
that $(4n+d+8)^2\ge 2\pi^2 (n+1)^2-O(n)$, and thus $d\ge (\pi\sqrt{2}-4) n-O(\sqrt{n})=\Omega(n)$, as desired.
\end{proof}

\section{Proofs from Section~\ref{sec:tab}}
\label{sec:prtab}
  
\lemmidtri*

\begin{proof}
  Let $y$ be the length of the altitude of $abc$ passing through
  $c$. Note that $y$ is maximized when $d_2(a,c)=d_2(b,c)$, and thus by Pythagoras' theorem, $y\le\sqrt{\delta x/2}+O(x^{-1/2})$. Note that subject to
  the conditions above, $\min\{d_2(a,c)- d_2(a,c'), d_2(b,c)- d_2(b,c')\}$ is
  maximized when $d_2(c,c')= \rho$.

  \begin{figure}[htb]
\centering
\includegraphics[scale=1]{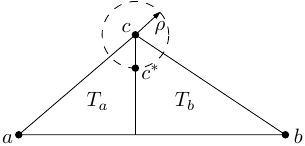}
\caption{Illustration of the proof of Lemma \ref{lem:midtri}.}
\label{fig:triangle2}
\end{figure}

  Let $c^*$ be the point of the
altitude  of $abc$ passing through
$c$ such that $d_2(c,c^*)=\rho$. See Figure \ref{fig:triangle2} for an
illustration. This altitude divides $abc$ into two
triangles, say $T_a$ containing $a$ and $T_b$ containing $b$. If
$c'\in T_a$ then $d_2(b,c')\ge d_2(b,c^*)$, and if $c'\in T_b$ then
$d_2(a,c')\ge d_2(a,c^*)$. Thus it suffices to prove that $d_2(a,c)-
d_2(a,c^*)\le \rho \sqrt{2\delta/x}+O(1/x)$ and $d_2(b,c)-
d_2(b,c^*) \le \rho \sqrt{2\delta/x}+O(1/x)$. The two cases being symmetric, we only
prove the former. Let $z=d_2(a,c)$. By Pythagoras' theorem,
\[d_2(a,c^*)^2=z^2-2\rho y+\rho^2= z^2\left(1-\frac{2\rho y}{z^2}+\frac{\rho^2}{z^2}\right).
\]
As $z=\Omega(x)$, it follows that 
\[
  d_2(a,c^*)=z\sqrt{1-\frac{2\rho y}{z^2}+\frac{\rho^2}{z^2}} = z
  \left(1-\frac{\rho y}{z^2}+O(x^{-2})\right)=z- \rho\sqrt{2\delta/x}+O(x^{-1}).\]
So, $d_2(a,c)-
d_2(a,c^*) \le \rho \sqrt{2\delta/x}+O(1/x) =O(\sqrt{\delta/x})$, as desired.
\end{proof}

\cormidtri*
\begin{proof}
Let $x:=d_2(a,b)$, and let $c\in \mathbb{R}^2$ be such that 
$d_2(a,c)=|P_{\mathbf{x}0}|$ and $d_2(b,c)=|P_{\mathbf{x}1}|$. This
defines (at most) 2 points 
which are symmetric with respect to the line-segment $[a,b]$, and we
define $c$ as the one which is closest from $z_\mathbf{x}$. See Figure \ref{fig:triangle3} for an
illustration. The path $P_\mathbf{x}$ is $\delta$-tight so $d_2(a,c)+d_2(c,b)= x+\delta$.
As $c'$ is a midpoint of
$P_\mathbf{x}$, $|d_2(a,c)-d_2(b,c)|\le 1$. Hence, in order to apply Lemma~\ref{lem:midtri}, which directly gives us the desired result, 
we only need to prove that $\rho:=d_2(c,c')=O(1)$.

 \begin{figure}[htb]
\centering
\includegraphics[scale=1]{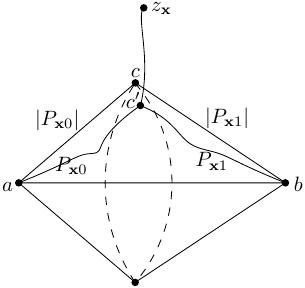}
\caption{Illustration of the proof of Corollary \ref{cor:midtri}.}
\label{fig:triangle3}
\end{figure}

Let $m_i^*$ denote the minimum distance between $z$ and $c'$ in a
unit-disk embedding of $G_i$ and observe that we have $m_i^* \leq d_2(z,c)
\leq d_2(z,c')$. By definition, the path in $G_{i+1}$ linking $z$
and $c'$ has length $\lceil m_i^* \rceil$ so as $G_{i+1}$ is a
unit-disk graph, $d_2(z,c') \leq \lceil m_i^* \rceil$.
Thus $d_2(z,c) \leq d_2(z, c') \leq \lceil d_2(z, c) \rceil$.

Let $R_a$ be the disk of radius $d_2(a,c)$ centered in $a$ and let
$R_b$ be the disk of radius $d_2(b,c)$ centered in $b$.
Let $R_z$ be the disk of radius $\lceil d_2(z,c)\rceil
\le d_2(z,c)+1$ centered in $z$. Note that since $G_0$ is a unit-disk graph, $c'$ lies in
$R_a\cap R_b\cap R_z$. Recall that $z$ lies at distance $\Omega(k)=\Omega(x)$ from $c$
and at distance at most $\tfrac12$ from the perpendicular
bisector of the line-segment $[a,b]$. It then follows from Pythagoras'
theorem that $\rho=d_2(c,c')\le 1+O(x^{-1})=O(1)$, as
desired.
\end{proof}

\lemfixpath*

\begin{proof}
For the base case of the induction, we start by recalling that  in any
unit-disk embedding of $G_0$, $P$ is 1-tight. Assume now that $i\ge 1$ and the
result holds in $G_{i-1}$. Fix a unit-disk embedding of $G_i$ and
consider the restriction of this unit-disk embedding to
$G_{i-1}$. By the induction hypothesis there is $\mathbf{x}\in
\{0,1\}^{i-1}$ such that $|P_\mathbf{x}|\le d_2(a,b)+\delta$ with
$\delta\le \alpha\cdot 2^{i-1}\cdot k^{2^{-i+1}-1}$, for some
constant $\alpha>0$, where $a$ and $b$ denote the endpoints of $P_\mathbf{x}$. Let $c'$ be the midpoint of $P_\mathbf{x}$ used
to split it into $P_{\mathbf{x}0}$ and $P_{\mathbf{x}1}$  in $G_i$. As $d_2(a,b)\ge k2^{-i}$, we have
\[\sqrt{\delta/d_2(a,b)}\le\frac{\alpha^{1/2}\cdot 2^{i/2}\cdot
    k^{2^{-i}-1/2}}{k^{1/2}2^{-i/2}}\le  \alpha^{1/2}\cdot 2^{i}\cdot
    k^{2^{-i}-1},
  \]
  By Corollary~\ref{cor:midtri}, we have that $|P_{\mathbf{x}0}|\le d_2(a,c')+O(\sqrt{\delta/d_2(a,b)})$, or  $|P_{\mathbf{x}1}|\le d_2(b,c')+O(\sqrt{\delta/d_2(a,b)})$.
  By taking $\alpha$ sufficiently large
  compared to the implicit
constant in the $O(\sqrt{\delta/d_2(a,b)})$ term from
Corollary~\ref{cor:midtri}, we obtain that
$|P_{\mathbf{x}0}|\le d_2(a,c')+\alpha\cdot 2^{i}\cdot k^{2^{-i}-1}$, or
$|P_{\mathbf{x}1}|\le d_2(b,c')+\alpha\cdot 2^{i}\cdot k^{2^{-i}-1}$ as desired.
\end{proof}

\lemregpath*

\begin{proof}
 Note that
  $\ell-\delta \le d_2(v_1,v_{\ell+1}) \le \ell$. We place $\ell+1$ evenly
spaced points $u_1,\ldots,u_{\ell+1}$ on the line-segment $[v_1,v_{\ell+1}]$ with
$u_1=v_1$, $u_{\ell+1}=v_{\ell+1}$. For 
$2\le i \le \ell+1$, the distance $d_2(u_i,v_i)$ is maximized when $i=
\left \lceil \frac{\ell+1}{2} \right \rceil$, and is at most the distance between the
middle of the line-segment $[v_1,v_{\ell+1}]$ and a point $c$ at Euclidean
distance $\ell/2$ from $v_1$ and $v_{\ell+1}$. See Figure
\ref{fig:ellipse} for an illustration. By Pythagoras' theorem, this distance
is at most
$\gamma:=\sqrt{\ell^2/4-(\ell-\delta)^2/4}=\sqrt{\ell\delta/2}+O(\ell^{-1/2})$,
as desired.

\begin{figure}[htb]
\centering
\includegraphics[scale=1.2]{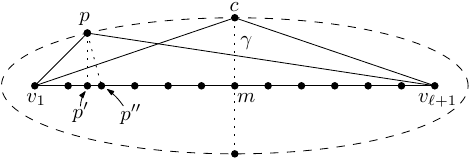}
\caption{Illustration of the proof of Lemma \ref{lem:regpath}.}
\label{fig:ellipse}
\end{figure}

For the second part of the lemma, we observe that each vertex $v_i$
satisfies $d_2(v_i,v_1)+d_2(v_i,v_{\ell+1})\le \ell$, so each such
point is contained in the region bounded by the ellipse with foci
$v_1$ and $v_{\ell+1}$, width $\ell$, and height at most
$2\gamma$ (by the preceding paragraph). Let $m$ denote the midpoint of
the line segment $[v_1,v_{\ell+1}]$. Let $p$ be a point on the
ellipse, and let $p'$ be the projection of $p$ on the line
$(v_1,v_{\ell+1})$. By the standard description of an ellipse in Cartesian
coordinates, if $x:=d_2(m,p')$ then
$y:=d_2(p,p')=\tfrac{2\gamma}{\ell}\sqrt{\ell^2/4-x^2}$. If we
take $(1-\alpha)\ell/2\le x \le \ell/2$ for some $\alpha>0$, then
$y\le \gamma\sqrt{1-(1-\alpha)^2}=\lambda' \gamma$ for
$\lambda':=\sqrt{2\alpha-\alpha^2}$ independent of $\ell$ and
$\delta$. Let $p''$ be the point of $[v_1,v_{\ell+1}]$ at distance
$\tfrac{\ell-\delta}{\ell}d_2(v_1,p)$ from $v_1$ (so that if $v_i$
coincides with $p$, then $u_i$ coincides with $p''$). Then $d_2(p',p'')\le
d_2(v_1,p)-\tfrac{\ell-\delta}{\ell}d_2(v_1,p)=\tfrac{\delta}\ell
d_2(v_1,p)$. It then follows from Pythagoras' theorem that
$d_2(p,p'')\le \sqrt{\lambda'^2\gamma^2+\tfrac{\delta^2}{\ell^2}
d_2(v_1,p)^2}$. If $d_2(v_1,p)\le \alpha \ell$ then $d_2(p,p'')\le
\sqrt{\lambda'^2\gamma^2+\alpha^2\delta^2}\le
\sqrt{\lambda'^2\gamma^2+\alpha^2}$ since $\delta\le 1$. By
considering $p$ above as the location of $v_i$ ($1\le i \le \alpha
\ell$) that maximizes its distance to $u_i$ (whose location
corresponds to $p''$ above), we conclude that $d_2(v_i,u_i)\le
\sqrt{\lambda'^2\gamma^2+\alpha^2}=
\sqrt{(2\alpha-\alpha^2)\gamma^2+\alpha^2}$, as desired.
\end{proof}

\needspace{8\baselineskip} 

\end{document}